\DeclareTextSymbolDefault{\DH}{T1}
\DeclareTextSymbolDefault{\DJ}{T1}
      \tikzset{
         dot diameter/.store in=\dot@diameter,
         dot diameter=1pt,
         dot spacing/.store in=\dot@spacing,
         dot spacing=5pt,
         dots/.style={
               line width=\dot@diameter,
               line cap=round,
               dash pattern=on 0pt off \dot@spacing    
                }
              }
\newtheorem{defi}{Definition}
\newtheorem{thm}{Theorem}
\newtheorem{corollary}[thm]{Corollary}
\newtheorem{lem}[thm]{Lemma}
\newtheorem{conjecture}{Conjecture}
\algrenewcommand{\algorithmiccomment}[1]{\hskip2em//\textcolor{green}{\small\tt #1}}
\newcommand{\real}{\mathbb{R}}
\newcommand{\complex}{\mathbb{C}}
\newcommand{\rank}[1]{\mathrm{rank} (#1)}
\newcommand{\range}[1]{\mathcal{R} (#1)}
\newcommand{\norm}[1]{\left\lVert#1\right\rVert}
\DeclareMathOperator{\trace}{Tr}
\def\symbtype{0} \ifcase\symbtype% 
\DeclarePairedDelimiter\bra{\langle}{\rvert}
\DeclarePairedDelimiter\ket{\lvert}{\rangle}
\DeclarePairedDelimiterX\braket[2]{}{}{(#1, #2)}
\DeclarePairedDelimiterX\mket[2]{}{}{#1\otimes#2}
\DeclarePairedDelimiter\bra{\langle}{\rvert}
\DeclarePairedDelimiter\ket{\lvert}{\rangle}
\DeclarePairedDelimiterX\mket[2]{\lvert}{\rangle}{#1,#2}
\DeclarePairedDelimiterX\braket[2]{\langle}{\rangle}{#1 \delimsize\vert#2}
\renewcommand{\intercal}{\mathsf{T}}
\newcommand{\I}{\mathds 1}
\newcommand{\myH}[1][{}]{\mathcal{H}_{#1}}
\newcommand{\etal}{{\sl et~al.}}
\newcommand{\cS}{\mathcal{S}}
\newcommand{\ddk}[1][k]{D_{d,#1}}
\newcommand{\Psym}{P_{\mathrm{sym}}}
\newcommand{\bH}{\myH^{\otimes d}}
\newcommand{\raisemath}[1]{\mathpalette{\raisem@th{#1}}}
\newcommand{\raisem@th}[3]{\raisebox{#1}{$#2#3$}}
\newcommand{\rankfive}{ rank-$5$ }
\newcommand{\rankfour}{ rank-$4$ }
\begin{document}
\title{Separability of Completely Symmetric States in Multipartite System}
\begin{abstract}
  Symmetry plays an important role in the field of quantum mechanics.  We consider a subclass of
  symmetric quantum states in the multipartite system $N^{\otimes d}$, namely, the completely symmetric states, which
  are invariant under any index permutation. It was conjectured by L. Qian and D. Chu [arXiv:1810.03125 [quant-ph]] that
  the completely symmetric states are separable if and only if it is a convex combination of symmetric pure product
  states. We prove that this conjecture is true for the both bipartite and multipartite cases. Further we
  prove that the completely symmetric state $\rho$ is separable if its rank is at most $5$ or $N+1$. For the states of rank
  $6$ or $N+2$, they are separable if and only if it their range contain a product vector. We apply our results to a few
  widely useful states in quantum information, such as symmetric states, edge states, extreme states and nonnegative
  states. We also study the relation of CS states to Hankel and Toeplitz matrices.
\end{abstract}
\pacs{03.67.-a, 03.65.Bz, 89.70.+c} \author{Lin Chen} \email{Electronic adress: linchen@buaa.edu.cn} \affiliation{School
  of Mathematics and Systems Science, Beihang University, Beijing}
\affiliation{International Research Institute for Multidisciplinary Science, Beihang University, Beijing 100191, China}  
 
\author{Delin Chu} \email{Electronic adress: matchudl@nus.edu.sg} \affiliation{Department of Mathematics, National
  University of Singapore, Singapore}
  
\author{Lilong Qian} \email{Electronic adress: qian.lilong@u.nus.edu} \affiliation{Department of Mathematics, National
  University of Singapore, Singapore}
   
\author{Yi Shen} \email{Electronic adress: yishen@buaa.edu.cn} \affiliation{School of Mathematics and Systems Science,
  Beihang University, Beijing} \date{\today}
\maketitle

\tableofcontents
\section{Introduction}
Quantum entanglement, first recognized by Einstein~\cite{einstein1935can} and
Schr\"odinger~\cite{schrodinger1935gegenwartige}, plays a crucial role in the field of quantum computation and quantum
information. It is key resources of  quantum cryptography, quantum teleportation, and quantum key
distribution \cite{nielsen2002quantum}. Therefore, the question of whether a given quantum state is entangled or
separable is of fundamental importance. For a given quantum state $\rho$ acting on the finite-dimensional Hilbert space
$\mathcal{H}_1\otimes \mathcal{H}_{2}$, it is said to be separable if it can be written as a convex linear combination
of pure product quantum states, i.e.,
\begin{equation}
  \label{goesqp:eq:83}
  \rho = \sum_i \lambda_i\ket{x_i,y_i}\bra{x_i,y_i},
\end{equation}
where $\sum_i\lambda_i=1,\lambda_i>0$ and $\ket{x_i}$ and $\ket{y_i}$ are the pure states in  the subspaces
$\mathcal{H}_{1}$ and $\mathcal{H}_2$, respectively.

Despite its wide-importance, to find an efficient method to solve this question in general is considered
to be NP-hard~\cite{gurvits2003classical,gharibian2008strong}. Nevertheless, some subclasses of quantum states have been
studied. For example, the Positive-Partial-Transpose (PPT) states~\cite{Peres1996a} are proved to be separable if
$\mathrm{dim}(\mathcal{H}_{1})\cdot\mathrm{dim}(\mathcal{H}_{2})\leqslant6$~\cite{Horodecki1996,Woronowikz1976}. The
Strong PPT (SPPT) states are considered~\cite{chruscinski2008quantum,chruscinski2008quantum,Qian_2018}. In particular it has been proved
that all the $2\otimes 4$ SPPT states are separable~\cite{ha2013separability}.
%%%

The low-rank states are also considered frequently. Kraus~\etal~\cite{kraus2000separability} have proved that the
$2\otimes N$ supported PPT states of rank $N$ is separable. Horodecki~\etal\cite{horodecki2000operational} extend the
results to the general $M\otimes N$ states, that is, all the supported $M\otimes N (M\leqslant N)$ PPT states of rank
$N$ are separable. The results in the general multipartite system are discussed in
Refs.~\cite{fei2003separability,xiao2004ppt}. It tells that the supported
$N_1\otimes N_2\cdots \otimes N_d (N_i\leqslant N_d)$ PPT states of rank $N_d$ is separable if they are
RFRP\@~\cite{Chen_2011red}. Next, it has been proven that multipartite PPT states of rank at most 3 are separable \cite{chen2013separability}. They also proved that the \rankfour PPT states are separable if and
only if their ranges contain at least one product vector. Further, the \rankfour entangled PPT states are necessarily
supported in the two-qutrit or three-qubit subspaces. However, the separability problem regarding \rankfive states remains
unknown. In this paper, we consider completely symmetric (CS) states of rank five. It turns out that these states are separable.

Due to the essential role of symmetry played in the field of quantum entanglement, it is of great importance to explore
the properties of the symmetric states.  A quantum state is called symmetric (bosonic) if it is invariant under the swap
of particles in the symmetric system.    
It is known that PPT symmetric
states of three-qubits are all separable~\cite{eckert2002k}. Moreover, the four-qubit entangled PPT symmetric states
have been characterized in Ref.~\cite{tura2012four}. Further, the existence of entangled PPT symmetric states of five
and six qubits has been reported in Refs.~\cite{toth2009entanglement,toth2010separability}. A well-known subclass of
symmetric states, diagonal symmetric (DS) states, are  proved to be separable if and only if it is
PPT \cite{yu2016separability}.

Within the symmetric
states, the CS states are the main topic in this paper. They are invariant under any index permutation, so they possess a larger symmetry. We conduct a deep investigation into the properties of the CS states and characterization of the
entanglement problem with the low-rank states. We prove that the CS states of rank at most $5$ are separable in
the multipartite system. We further show that the supported CS states of rank at most $N+1$ is separable
in the multipartite system, where $N$ is the dimension of the subsystem. As for the states of rank $6$ and $N+2$, they
are separable if and only if their ranges contains a product vector.  We also construct a $4\otimes 4$ rank $6$
CS states which are entangled. Moreover, all the two-qutrit and multi-qubit CS states are separable. Finally, we extend
some of our results to the general symmetric states and find some applications on edge and extreme states and
nonnegative states. Some subclass of symmetric states is suggested to be considered at the end of the paper.

The remainder of the paper is organized as follows. In Sec.~\ref{sec:pre}, we summarize the preliminaries
on the entanglement and symmetry. In Sec.~\ref{sec:prop-symm-separ}, we present some properties on the CS states. Next, in
Sec.~\ref{sec:rank}, we prove the CS states are separable under some rank conditions. Some applications of our results
are shown in Sec.~\ref{sec:app}. The concluding remarks are given in Sec.~\ref{sec:con}.

\section{Preliminaries}%
\label{sec:pre}
%% quantum states ------------------------------------------------------------------------------------------------------
%%
%%
We consider the quantum states in multipartite system $\myH[1]\otimes\myH[2]\cdots\otimes\myH[d]$, where
$\myH[i],i=1,2,\ldots,d$ are Hilbert spaces with dimension $N_d$, respectively.  Mathematically, any quantum state can
be represented by a 
Hermitian semidefinite positive matrix with trace one, which is called the density matrix.
% --------- semidefinite positive matrix
A Hermitian matrix $\rho$ is called semidefinite positive if
\begin{equation}
  \label{eq:positive}
  \bra{x}\rho\ket{x}\geqslant 0 , \forall \ket{x}.
  \end{equation}
% -------- pure states and pure product states
The pure state $\rho$ corresponds to the density
matrix whose rank is one, i.e.,
\begin{equation}
  \label{goesqp:eq:84}
  \rho = \ket{\phi}\bra{\phi},
\end{equation}
where $\ket{\phi}$ is a vector in the tensor space $\myH[1]\otimes \myH[2]\cdots\otimes\myH[d]$. In particular, $\rho$
is said to be a pure product state if
\begin{equation}
  \label{goesqp:eq:85}
  \ket{\phi} = \ket{x_1,x_2,\ldots,x_d},
\end{equation}
where $\ket{x_i},i=1,2,\ldots,d$ are vectors in the spaces $\myH[i]$, respectively.
% separable states
Then a state $\rho$ is called separable if it is a convex combination of the pure product states. Otherwise, it is said
to be entangled.

%% symmetric states ----------------------------------------------------------------------------------------------------
In the multipartite system, the symmetric (bosonic) space is the subspace which remains invariant under the exchange of
the subsystems. Hence, a quantum state is said to be symmetric if its range is contained in the symmetric space.
In this case,
$\myH[i]=\myH[1],i=2,3,\ldots,d$, denoted by $\myH$.
Particularly, for N-qubit systems, the following Dicke states formulate an orthonormal basis of the symmetric space:
\begin{equation}
  \ket{D_{d,k}} =\frac{1}{\sqrt{C_k^d}} \Psym(\ket{0}^{\otimes k}\ket{1}^{\otimes d-k}),
\end{equation}
where $\Psym$ is the projection onto the symmetric subspace, i.e.,
\[\Psym =\sum_{\pi\in S_d} U_\pi,\]
the sum runs over
all the permutation operators $U_\pi$ of the N-qubit system and $C_k^d = \binom{d}{k}$.

%% DS states -----------------------------------------------------------------------------------------------------------
The diagonal symmetric state are defined as
\begin{equation}
  \label{multiSsepv7:eq:1}
  \rho =   \sum_{k=0}^{d}\lambda_k\ket\ddk\bra{\ddk},
\end{equation}
which is a well-known symmetric state being studied frequently. And it is proved to be separable if it is PPT.\@

%% CS states
%% ------------------------------------------------------------------------------------------------------------
In Ref.~\cite{lilong2018decomposition}, the authors considered a class of symmetric states, completely symmetric states
(CS states) in the bipartite system. We shall generalize it to the multipartite system.  Here the definition of completely symmetric states is given
formally as follows.

 Suppose $\rho$ is a quantum state in the $d$-partite system
$\bH$ and $\mathrm{dim}(\myH)=N$, then it can be written as
\begin{equation}
  \label{symmetrictensor:eq:1}
  \rho = \sum_{i_1,j_1,\ldots,i_d,j_d=0}^{N-1}\rho_{i_1,j_1,\ldots,i_d,j_d}\ket{i_i,\ldots,i_d}\bra{j_1,\ldots,j_d},
\end{equation}
where $\ket{0},\ket{1},\ldots,\ket{N-1}$ is a natural basis in the space $\myH$.  For simplicity, let
$I_{k} = (i_1,i_2,\ldots,i_{k})$, $J_{k} = (j_1,j_2,\ldots,j_k)$ be the $k$-dimensional multi-index, and
$I = I_d,J = J_d$. Hence $\rho$ can be represented as
\begin{equation}
  \label{eq:rho-rep}
  \rho = \sum_{I,J}\rho_{I,J}\ket{I}\bra{J},
\end{equation}
where
\begin{equation}
  \label{S-sep-v6:eq:2}
  \begin{split}
    \ket{I_{k}} &= \ket{i_1,i_2,\ldots,i_k},\\
    \bra{J_{k}}& = \bra{j_1,j_2,\ldots,j_k}.
  \end{split}
\end{equation}
\begin{defi}
  Let $\rho$ be a quantum state as in \cref{eq:rho-rep}. Then $\rho$ is said to be completely symmetric (CS)
  if
  \begin{equation}
    \label{symmetrictensor:eq:9}
    \rho_{\pi(I,J)} = \rho_{I,J},
  \end{equation}
  for any index permutation $\pi$.
\end{defi}
On the other hand, $\rho$ can be represented as a matrix with an $d$-level nested block
structure~\cite{gurvits2003separable}.
\begin{equation}
  \label{goesqp:eq:90}
  \rho =
  \begin{bmatrix}
    \rho_{00}&\cdots&\rho_{1,N-1}\\
    \vdots&\ddots&\vdots\\
    \rho_{N-1,0}&\cdots&\rho_{N-1,N-1}
  \end{bmatrix},
\end{equation}
where each block is an operator acting on the space $\myH[2]\otimes\cdots\otimes\myH[d]$, i.e.,
\begin{equation}
  \label{goesqp:eq:91}
  \rho_{ij} =\!\!\!\sum_{i_2,j_2,\ldots,i_d,j_d}\!\!\!\rho_{i,j,i_2,j_2,\ldots,i_d,j_d}\ket{i_2,\ldots,i_d}\bra{j_2,\ldots,j_d}.
\end{equation}
Hence we have $\rho_{I,J} = (\rho_{i_k,j_k})_{i_{k+1},j_{k+1},\ldots,i_d,j_d},\forall I,J$. The coefficients
$\rho_{I,J}$ thus correspond to the density matrix $\rho$ in an intuitive way.  Note that, we shall index from
0 instead of 1 in order to be consistent with the standard notations in quantum computation.

%% supported  states ---------------------------------------------------------------------------------------------------
%
%
Let us, for the moment, restrict ourselves to supported states, i.e., the states are supported on
$\bH$~\cite{kraus2000separability}.

% reduced states ----------------------------
\newcommand{\redk}[1]{\rho_{\kern-2pt\raisemath{-2pt}{A_{#1}}}}
Denote by $A_1,A_2,\ldots,A_d$ the parties of each subsystem. For the CS states, it
is easy to check that their reduced states are identical, that is,
\begin{equation}
  \label{reduced}
  \redk{k} = \redk{1},\forall k>1,
\end{equation}
where
\begin{equation}
  \label{goesqp:eq:87}
  \begin{split}
    \redk{k}&= (\I_1\otimes \cdots\otimes \I_{k-1}\otimes\trace\otimes\I_{k+1}\otimes\cdots\otimes\I_{d})\rho\\
    &= \sum_{i_l,j_l=0,l\neq k}^{N-1}
    \left(\sum_{i_k=j_k=0}^{N-1}\rho_{I,J}\right)\ket{\phi}\bra{\phi},\\
    \ket{\phi}&= \ket{i_1,\ldots,i_{k-1},i_{k+1},\ldots,i_{d}}.
  \end{split}
\end{equation}
Specifically, denote by $\redk{K}$ the reduced state after applying the trace operator on the $i$-th subsystem, where
$i\in K$ and $K$ is a index subset.  Then $\rho$ is said to be supported on $\bH$ if and only if
$\range{\rho_{\kern-1.5pt\raisemath{-2pt}{A_{\{2:n\}}}}} = \myH$. Note that a CS state $\rho$ is necessarily a positive partial transposed (PPT)
state.

%  bi-seprable  and fully seprable   --------------------------------------------------------
Let us recall the definition of bi-separable and fully separable in the multipartite system.
\begin{defi}
  Suppose $\rho$ is a state in the multipartite system $\myH^{\otimes d}$.
  \begin{enumerate}
    \item   $\rho$ is bi-separable if there exists a bi-partition  $A:B$ for $\{1,2,\ldots,d\}$ such that
          $\rho$ is separable in this bipartite system. 
    \item  $\rho$ is said to be fully separable if it can be written as the convex combination of pure
          product states as in 
          \cref{goesqp:eq:83}. For simplicity, a separable  state will indicate the fully separable one in the
          multipartite case.
  \end{enumerate}
	
\end{defi}

%% symmetrically seprable states -----------------------------------------------------------------------------------
%
Now, we extend the S-seprable states in Ref.~\cite{lilong2018decomposition} to the multipartite system.
For the pure state $\ket{\phi}$ in the multipartite system $\myH^{\otimes d}$, it can be represented by the natural
basis
\begin{equation}
  \label{eq:pure}
  \ket{\phi}=\sum_I a_I\ket{I},a_I\in\real.
\end{equation}
\begin{defi}
  Suppose $\ket\phi$ has the form as \cref{eq:pure}, then it is said to be symmetric if
  \begin{equation}
    a_{\pi(I)} = a_I
  \end{equation}
  for any index permutation $\pi$.
\end{defi}

Note that when the Schmdit rank of real pure state $\ket\phi$ is 1, then $\ket\phi$ can be written as
\begin{equation}
  \ket\phi = \lambda \ket{x}^{\otimes d},\ket{x}\in\real^N
\end{equation}
which is called real symmetric pure product state.

For the mixed state, we have:
\begin{defi}
  Suppose $\rho$ is a quantum state in $\bH$ system. Then it is said to be symmetrically separable (S-separable) if it
  is a convex combination of real symmetric pure product states, i.e.,
  \begin{equation}
    \label{goesqp:eq:63}
    \rho = \sum_{i=0}^{L-1} p_i \ket{x_i}^{\otimes d}\bra{x_i}^{\otimes d};x_i\in\real^N.
  \end{equation}
\end{defi}

According to our results, we present the relation of the symmetric states and its subclasses by Figure 1 and 2. Here DSS and CSS indicate the diagonal symmetric states and completely symmetric states, respectively.
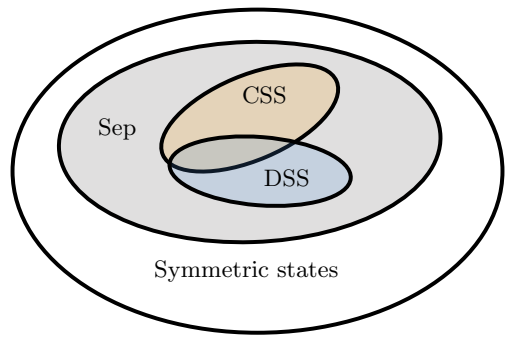
\begin{figure}[H]
  \centering
  \begin{tikzpicture}[scale=0.7,x=0.75pt,y=0.75pt,yscale=-1,xscale=1]
    % uncomment if require: \path (0,300); %set diagram left start at 0, and has height of 300
    % Shape: Ellipse [id:dp9436212564697348]
    \draw [line width=1.5] (72.5,177.18) .. controls (72.5,112.84) and (151.52,60.68) .. (249,60.68) .. controls
    (346.48,60.68) and (425.5,112.84) .. (425.5,177.18) .. controls (425.5,241.52) and (346.48,293.68) .. (249,293.68)
    .. controls (151.52,293.68) and (72.5,241.52) .. (72.5,177.18) -- cycle ;
    % Shape: Ellipse [id:dp4267614933368129]
    \draw [fill={rgb, 255:red, 94; green, 90; blue, 88 } ,fill opacity=0.19 ][line width=1.5] (105.94,160.38)
    .. controls (104.72,120.55) and (165.27,86.39) .. (241.17,84.07) .. controls (317.08,81.75) and (379.6,112.16)
    .. (380.81,151.99) .. controls (382.03,191.82) and (321.48,225.98) .. (245.58,228.3) .. controls (169.68,230.62) and
    (107.16,200.21) .. (105.94,160.38) -- cycle ;
    % Shape: Ellipse [id:dp5765819728068893]
    \draw [fill={rgb, 255:red, 245; green, 166; blue, 35 } ,fill opacity=0.2 ][line width=1.5] (180.78,165.37)
    .. controls (174.21,149.87) and (196.91,125.42) .. (231.48,110.77) .. controls (266.05,96.12) and (299.4,96.8)
    .. (305.97,112.3) .. controls (312.54,127.81) and (289.84,152.25) .. (255.27,166.9) .. controls (220.7,181.56) and
    (187.35,180.87) .. (180.78,165.37) -- cycle ;
    % Shape: Ellipse [id:dp19216406731926317]
    \draw [color={rgb, 255:red, 0; green, 0; blue, 0 } ,draw opacity=1 ][fill={rgb, 255:red, 74; green, 144; blue, 226 }
    ,fill opacity=0.18 ][line width=1.5] (185.63,172.01) .. controls (186.7,158.47) and (216.79,149.8)
    .. (252.84,152.66) .. controls (288.89,155.52) and (317.24,168.81) .. (316.16,182.35) .. controls (315.09,195.89)
    and (285,204.55) .. (248.95,201.7) .. controls (212.91,198.84) and (184.56,185.55) .. (185.63,172.01) -- cycle ;
    % Text Node
    \draw (148,147) node [align=left] {Sep};
    % Text Node
    \draw (270,182) node [align=left] {DSS};
    % Text Node
    \draw (254,122) node [align=left] {CSS};
    % Text Node
    \draw (241,249) node [align=left] {Symmetric states};
  \end{tikzpicture}
  \caption{Relation of $N$-partite symetric states when $N=2$}
\end{figure}
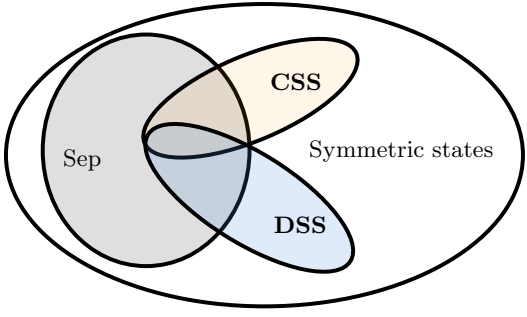
\begin{figure}[H]
  \centering
  \begin{tikzpicture}[scale=0.7,x=0.75pt,y=0.75pt,yscale=-1,xscale=1]
    % uncomment if require: \path (0,300); %set diagram left start at 0, and has height of 300

    % Shape: Ellipse [id:dp9436212564697348]
    \draw [line width=1.5] (65,145.03) .. controls (65,84.83) and (148.39,36.03) .. (251.25,36.03) .. controls
    (354.11,36.03) and (437.5,84.83) .. (437.5,145.03) .. controls (437.5,205.23) and (354.11,254.03) .. (251.25,254.03)
    .. controls (148.39,254.03) and (65,205.23) .. (65,145.03) -- cycle ;
    % Shape: Ellipse sep
    \draw [fill={rgb, 255:red, 94; green, 90; blue, 88 } ,fill opacity=0.19 ][line width=1.5] (91.5,141.53) .. controls
    (91.5,95.42) and (124.85,58.03) .. (166,58.03) .. controls (207.15,58.03) and (240.5,95.42) .. (240.5,141.53)
    .. controls (240.5,187.65) and (207.15,225.03) .. (166,225.03) .. controls (124.85,225.03) and (91.5,187.65)
    .. (91.5,141.53) -- cycle ;
    % Shape: Ellipse CSS
    \draw [fill={rgb, 255:red, 245; green, 166; blue, 35 },fill opacity=0.1 ][line width=1.5] (164.89,136.3) .. controls
    (158.32,120.8) and (187.14,93.76) .. (229.27,75.9) .. controls (271.4,58.05) and (310.88,56.14) .. (317.45,71.64)
    .. controls (324.02,87.14) and (295.19,114.18) .. (253.06,132.04) .. controls (210.93,149.89) and (171.46,151.8)
    .. (164.89,136.3) -- cycle ;
    % Shape: Ellipse [id:dp19216406731926317]
    \draw [color={rgb, 255:red, 0; green, 0; blue, 0 } ,draw opacity=1 ][fill={rgb, 255:red, 74; green, 144; blue, 226 }
    ,fill opacity=0.18 ][line width=1.5] (167.45,130.11) .. controls (176.93,114.99) and (217.23,123.2)
    .. (257.47,148.44) .. controls (297.7,173.68) and (322.63,206.39) .. (313.15,221.51) .. controls (303.67,236.62) and
    (263.37,228.41) .. (223.13,203.17) .. controls (182.9,177.94) and (157.96,145.22) .. (167.45,130.11) -- cycle ;
    % Text Node
    \draw (120,149) node [align=left] {Sep};
    % Text Node
    \draw (277,195) node [align=left] {\bf DSS};
    % Text Node
    \draw (274,92) node [align=left] {\bf CSS};
    % Text Node
    \draw (350,142) node [align=left] { Symmetric states};
  \end{tikzpicture}
\caption{Relation of $N$-partite symmetric states when $N\geqslant 3$}
\end{figure}

\section{Properties of CS states}%
\label{sec:prop-symm-separ}
In this section, we investigate the properties of CS and S-separable states.

Recall that in Ref.~\cite{lilong2018decomposition}, the authors conjectured that all the CS states are separable if and
only if
they are S-separable. Now, we extend  this conjecture into the multipartite system.
\begin{conjecture}%
  \label{conj2}
  All the separable CS states in the multipartite system is S-separable.
\end{conjecture}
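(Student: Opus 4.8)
The plan is to reduce the conjecture to a statement about writing a single nonnegative form as a sum of real powers of linear forms. First I would extract what complete symmetry forces. Since every entry $\rho_{I,J}$ is invariant under all permutations of the $2d$ indices, the permutation that exchanges the block $I$ with the block $J$, combined with Hermiticity $\rho_{I,J}=\overline{\rho_{J,I}}$, yields $\rho_{I,J}=\overline{\rho_{I,J}}$, so $\rho$ has real entries; and invariance under permutations of the row indices alone makes every column of $\rho$ a symmetric tensor, so $\range{\rho}$ lies in the symmetric subspace. Now fix any separable decomposition $\rho=\sum_i\lambda_i\proj{\phi_i}$ with $\lambda_i>0$. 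Each $\ket{\phi_i}\in\range{\rho}$ is a product vector contained in the symmetric subspace; a product vector is permutation invariant only when its factors are pairwise proportional, so $\ket{\phi_i}=\ket{x_i}^{\otimes d}$ for some $\ket{x_i}\in\complex^N$. Hence $\rho=\sum_i\lambda_i(\proj{x_i})^{\otimes d}$, and the entire difficulty is that the vectors $\ket{x_i}$ may be genuinely complex, whereas S-separability demands real ones.

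Next I would encode CS operators by real forms. For $z\in\real^N$ set $F(z)=\bra{z^{\otimes d}}\rho\ket{z^{\otimes d}}$; because $z$ is real this is exactly the degree-$2d$ form obtained by fully contracting the symmetric tensor $(\rho_{I,J})$ with $z^{\otimes 2d}$. The assignment $\rho\mapsto F$ is a linear bijection from CS operators onto real homogeneous forms of degree $2d$ in $N$ variables, both spaces having dimension $\binom{N+2d-1}{2d}$. Under it our $\rho$ maps to $F(z)=\sum_i\lambda_i\lvert\ell_{x_i}(z)\rvert^{2d}$, where $\ell_x(z)=\sum_k x_k z_k$, while a candidate S-separable state $\sum_j\mu_j(\proj{y_j})^{\otimes d}$ with $y_j\in\real^N$ maps to $\sum_j\mu_j\,\ell_{y_j}(z)^{2d}$. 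Thus, by injectivity of the bijection, proving that $\rho$ is S-separable is equivalent to rewriting $F$ as a nonnegative combination of $2d$-th powers of \emph{real} linear forms.

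The crux is this reality conversion, which I would settle by circle averaging. Writing $\ket{x_i}=u_i+\mathrm{i}v_i$ with $u_i,v_i\in\real^N$ gives $\lvert\ell_{x_i}(z)\rvert^{2d}=(\ell_{u_i}(z)^2+\ell_{v_i}(z)^2)^d$, and the elementary identity
\[
  (\ell_u(z)^2+\ell_v(z)^2)^d=\frac{1}{c_d}\int_0^{2\pi}\ell_{u\cos\theta+v\sin\theta}(z)^{2d}\,d\theta,\qquad c_d=\int_0^{2\pi}\cos^{2d}\theta\,d\theta>0,
\]
expresses each complex term as a positive average of $2d$-th powers of the real linear forms $\ell_{u\cos\theta+v\sin\theta}$, where $u\cos\theta+v\sin\theta\in\real^N$. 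Since the integrand is a trigonometric polynomial of degree $2d$ in $\theta$, the equally spaced $(2d+1)$-point quadrature rule is exact, turning the integral into a finite convex combination with positive weights. Summing over $i$ writes $F=\sum_j\mu_j\,\ell_{y_j}(z)^{2d}$ with $y_j\in\real^N$ and $\mu_j>0$; transporting this back through the bijection gives $\rho=\sum_j\mu_j(\proj{y_j})^{\otimes d}$, which is exactly S-separability.

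I expect the genuine obstacle to be precisely this passage from complex to real powers: mere positivity of $\rho$ only guarantees $F\geq 0$ on $\real^N$, and a generic nonnegative form is not a sum of even powers of linear forms, so it is the separability hypothesis—placing $F$ in the cone generated by the $\lvert\ell_x\rvert^{2d}$—that makes the circle-averaging identity applicable. I would also note that this identity is the concrete mechanism behind the Hankel/Toeplitz picture, and that the whole argument is uniform in $d\geq 2$, so it disposes of the bipartite and multipartite cases simultaneously. The remaining ingredients—that a product vector in the symmetric subspace is an $\ket{x}^{\otimes d}$, and that $\rho\mapsto F$ is a bijection onto degree-$2d$ forms—are routine, and I would verify them first.
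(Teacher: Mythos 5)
Your proof is correct, but it follows a genuinely different route from the paper's. The paper's proof of \cref{thm:conj2} first invokes \cref{real} — a result imported from the literature on G-invariant states, asserting that a separable CS state is separable over $\real$ — so that all product vectors in the decomposition may be assumed real from the outset; it then traces down to two-party marginals and uses \cref{lem:range} and \cref{lem:prodinS} to force the factors of each real product vector to be proportional. You instead keep the product vectors complex: you note that the columns of a CS matrix are symmetric tensors, so $\range{\rho}$ lies in the fully symmetric subspace (slightly stronger than \cref{lem:multi-range}, which only gives cyclic invariance), whence each $\ket{\phi_i}=\ket{x_i}^{\otimes d}$ with $x_i\in\complex^N$; and you then carry out the complex-to-real conversion yourself, via the polarization bijection between Hermitian CS operators and real degree-$2d$ forms together with the averaging identity
\begin{equation*}
  \bigl(\ell_u(z)^2+\ell_v(z)^2\bigr)^d=\frac{1}{c_d}\int_0^{2\pi}\ell_{u\cos\theta+v\sin\theta}(z)^{2d}\,d\theta ,
\end{equation*}
made finite by exact equally-spaced trigonometric quadrature. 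All the steps check out: each $\ket{\phi_i}$ lies in $\range{\rho}$, a product vector in the symmetric subspace has pairwise proportional factors, the contraction $\rho\mapsto F$ is injective on symmetric $2d$-tensors, and the quadrature weights are positive. What your route buys is self-containedness and constructiveness — it replaces the cited real-separability theorem for G-invariant states with an explicit recipe converting a complex symmetric decomposition into a real one, treats all $d\geqslant 2$ uniformly without passing to marginals, and makes the connection to sums of $2d$-th powers of real linear forms (the Hankel picture of Sec.~\ref{subsec:hankel}) explicit. What the paper's route buys is brevity, at the price of the external black box.
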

We will  prove that \cref{conj2} is true for any arbitrary separable CS states at the end of this section. 
To begin with, we show some   properties of the CS states. Due to its special structure, it may admit some
interesting features. One of them is that CS structure will preserve under some transforms.

The following lemma shows that the CS and S-separable states are invariant under the real invertible local operator (RILO)
$A^{\otimes d},A\in \real^{N\times N}$.
%% -------------------------------------------------- invariant under RILO %% --------------------------------------------------
%%

\begin{lem}
  Suppose $\rho$ is a quantum state in $\bH$ system and $A$ is a RILO on $\myH$. Then
  \begin{enumerate}
    \item $\rho$ is CS if and only if  $A^{\otimes d}\rho (A^{\otimes d})^{\intercal}$ are CS.
          
    \item $\rho$ is S-separable if and only if  $A^{\otimes d}\rho (A^{\otimes d})^{\intercal}$ are S-separable.    
  \end{enumerate}
\end{lem}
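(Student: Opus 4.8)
The plan is to prove both equivalences by re-reading the coefficient array of $\rho$ as a single tensor on which $A^{\otimes d}$ acts in an essentially trivial way. Part (2) I would dispatch first and directly. If $\rho = \sum_i p_i \ket{x_i}^{\otimes d}\bra{x_i}^{\otimes d}$ with $x_i\in\real^N$, then because $A$ is real we have $A^{\otimes d}\ket{x_i}^{\otimes d} = (A\ket{x_i})^{\otimes d} = \ket{y_i}^{\otimes d}$ with $y_i = A x_i\in\real^N$, and correspondingly $\bra{x_i}^{\otimes d}(A^{\otimes d})^\intercal = \bra{y_i}^{\otimes d}$. Hence
\[
A^{\otimes d}\rho(A^{\otimes d})^\intercal = \sum_i p_i \ket{y_i}^{\otimes d}\bra{y_i}^{\otimes d},
\]
which is again a convex combination of real symmetric pure product states, so it is S-separable. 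The converse is the same computation applied to $A^{-1}$, which is itself a RILO.

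For part (1), the central step is to identify the CS property with symmetry of a tensor. Writing $T_{i_1,\dots,i_d,j_1,\dots,j_d} := \rho_{I,J}$, the defining condition $\rho_{\pi(I,J)} = \rho_{I,J}$ for every index permutation $\pi$ says exactly that $T$ is a fully symmetric tensor in $\myH^{\otimes 2d}$, i.e.\ $T$ lies in the symmetric subspace $\mathrm{Sym}^{2d}(\myH) = \{v\in\myH^{\otimes 2d} : P_\pi v = v \text{ for all } \pi\in S_{2d}\}$, where $P_\pi$ permutes the $2d$ tensor legs. Next I would compute the congruence at the level of this array. Using $(A^{\otimes d})^\intercal = (A^\intercal)^{\otimes d}$ together with $[(A^\intercal)^{\otimes d}]_{J',J} = \prod_l A_{j_l j'_l}$, a direct index expansion gives
\[
\bigl(A^{\otimes d}\rho(A^{\otimes d})^\intercal\bigr)_{I,J} = \sum_{I',J'}\Bigl(\prod_{k=1}^{d} A_{i_k i'_k}\Bigr)\Bigl(\prod_{l=1}^{d} A_{j_l j'_l}\Bigr)\rho_{I',J'},
\]
so that the coefficient tensor of $A^{\otimes d}\rho(A^{\otimes d})^\intercal$ is precisely $A^{\otimes 2d}T$. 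In other words, the congruence $\rho \mapsto A^{\otimes d}\rho(A^{\otimes d})^\intercal$ on density matrices corresponds to the plain action $T \mapsto A^{\otimes 2d}T$ on coefficient tensors.

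The conclusion then follows from a single structural fact: since $A^{\otimes 2d}$ applies the \emph{same} $A$ to every tensor leg, it commutes with every permutation operator, $P_\pi A^{\otimes 2d} = A^{\otimes 2d} P_\pi$, and therefore maps $\mathrm{Sym}^{2d}(\myH)$ into itself; because $A$ is invertible, $A^{\otimes 2d}$ is invertible and restricts to a bijection of $\mathrm{Sym}^{2d}(\myH)$. Hence $T$ is symmetric if and only if $A^{\otimes 2d}T$ is symmetric, which is exactly the assertion that $\rho$ is CS if and only if $A^{\otimes d}\rho(A^{\otimes d})^\intercal$ is CS.

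I expect the only delicate point to be the bookkeeping in the index expansion: one must keep the transpose (not the conjugate transpose) on the second factor and use that $A$ is real, both so that the bra legs are acted on by $A$ itself, which is what produces the clean $A^{\otimes 2d}$, and so that in part (2) the transformed product vectors $y_i = A x_i$ remain in $\real^N$. Once that is set up correctly, everything reduces to the permutation-equivariance and invertibility of $A^{\otimes 2d}$, so no further estimates or case analysis are needed.
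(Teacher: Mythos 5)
Your proof is correct and follows essentially the same route as the paper: part (2) is the same direct computation $A^{\otimes d}\ket{x}^{\otimes d} = (A\ket{x})^{\otimes d}$ with the converse handled by the RILO $A^{-1}$, and your part (1) --- viewing the coefficient array as a tensor in the symmetric subspace of $\myH^{\otimes 2d}$ on which the congruence acts as $A^{\otimes 2d}$, which commutes with all leg permutations --- is exactly the paper's re-indexing step $\sigma_{\pi(I',J')} = \sum_{I,J}\rho_{\pi(I,J)}\prod_{k,l} a_{i_k',i_k}a_{j_l',j_l} = \sigma_{I',J'}$ stated in equivariant form. No gaps.
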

\begin{proof}
  It suffices to prove only one side for the two conclusions. Otherwise, we can consider the state which is applied by
  the RILO ${(A^{-1})}^{\otimes d}$.

  First, we prove CS is invariant under the RILO $A^{\otimes d}$.
 Suppose $\rho$ is a CS states as in \cref{symmetrictensor:eq:1}.
          Let $\sigma = (A^{\otimes d})\rho (A^{\otimes d})^{\intercal}$ and
          \begin{equation}
            \label{goesqp:eq:67}
            A =
            \begin{bmatrix}
              a_{00} & \cdots & a_{0,N-1}\\
              \vdots & \ddots & \vdots\\
              a_{N-1,0} & \cdots & a_{N-1,N-1}
            \end{bmatrix}.
          \end{equation}
          we have,
          \begin{equation}
            \label{goesqp:eq:68}
            A\ket{i} = \sum_{i'=0}^{N-1}a_{i'i}\ket{i'}.
          \end{equation}
          Denote by $I' = (i_1',i_2',\ldots,i_d')$ and $J'= (j_1',j_2',\ldots,j_d')$.
            
          After calculation, we have,
          % \begin{widetext}
          \begin{equation}
            \label{goesqp:eq:69}
            \begin{split}
              \sigma& = \sum_{I,J,I',J'}\rho_{I,J}
              \prod_{k,l=1}^{d}a_{i_k',i_k}a_{j_l',j_l}\ket{I'}\bra{J'}\\
              & = \sum_{I',J'}\left( \sum_{I,J}\rho_{I,J} \prod_{k,l=1}^{d}a_{i_k',i_k}a_{j_l',j_l}
              \right)\ket{I'}\bra{J'}\\
              & = \sum_{I',J'}\sigma_{I',J'}\ket{I'}\bra{J'},
            \end{split}
          \end{equation}
          where
          \begin{equation}
            \label{S-sep-v6:eq:3}
            \sigma_{I',J'} = \sum_{I,J}\rho_{I,J} \prod_{k,l=1}^{d}a_{i_k',i_k}a_{j_l',j_l}.
          \end{equation}
          % \end{widetext}
          In order to prove $\sigma$ is  CS, we only need to prove $\sigma_{I',J'} = \sigma_{\pi(I',J')}$
          where $\pi$ is an arbitrary index  permutation.
          Note that
          \begin{equation}
            \label{multissep:eq:1}
            \begin{split}
              \sigma_{\pi(I',J')} & = \sum_{I,J}\rho_{I,J}
              \prod_{k,l=1}^{d}a_{\pi(i_k'),i_k}a_{\pi(j_l'),j_l},\\
              & = \sum_{I,J}\rho_{I,J}
              \prod_{k,l=1}^{d}a_{\pi(i_k'),\pi(i_k)}a_{\pi(j_l'),\pi(j_l)},\\
              & = \sum_{I,J}\rho_{\pi({I,J})}
              \prod_{k,l=1}^{d}a_{i_k',i_k}a_{j_l',j_l},\\
              & = \sum_{I,J}\rho_{I,J} \prod_{k,l=1}^{d}a_{i_k',i_k}a_{j_l',j_l},\\
              &= \sigma_{I',J'},
            \end{split}
          \end{equation}
          which completes our proof.

          It follows that symmetric separability is also invariant under the operation of $A^{\otimes d}$ by applying $A$ to each subsystem.
\end{proof}

Actually, the CS structure will also hold under some conditions. For example, 
 we have the following useful lemmas.
\begin{lem}
  \label{partition}
  Suppose $\rho$ is a CS state in the multipartite system $\myH^{\otimes d}$.  If $d$ is even, and $A: B$ is a
  bi-partition of the system $\bH$ with $|A|=|B|$. Then $\rho$ is also CS under the bi-partition $A:B$.
\end{lem}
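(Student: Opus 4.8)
The plan is to translate both the hypothesis (that $\rho$ is CS on $\bH$) and the desired conclusion (that $\rho$ is CS under the coarser two-party splitting $A:B$) into a single combinatorial statement about the coefficients $\rho_{I,J}$, and then to observe that the hypothesis is merely a finer version of the conclusion. The one fact I intend to use is the reformulation of complete symmetry: the condition $\rho_{\pi(I,J)}=\rho_{I,J}$ for every permutation $\pi$ of the combined index $(I,J)$ says precisely that $\rho_{I,J}$ depends only on the multiset $\{i_1,\dots,i_d,j_1,\dots,j_d\}$ formed by all $2d$ single-site values (in particular $\pi$ is allowed to mix the $i$-indices with the $j$-indices).

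First I would fix notation for the bi-partition. Write $A=\{a_1<\dots<a_{d/2}\}$ and $B=\{b_1<\dots<b_{d/2}\}$, and identify $\myH[A]:=\bigotimes_{k}\myH[a_k]$ and $\myH[B]:=\bigotimes_k\myH[b_k]$, each with $\mathcal{H}^{\otimes d/2}$ via the standard product basis. This is exactly where the hypothesis $|A|=|B|$ (equivalently $d$ even) enters: it makes the two super-parties carry index labels from the same set $\{0,\dots,N-1\}^{d/2}$, so that they can be compared and swapped at all. For a full index $I=(i_1,\dots,i_d)$ set $I_A:=(i_{a_1},\dots,i_{a_{d/2}})$ and $I_B:=(i_{b_1},\dots,i_{b_{d/2}})$, and likewise for $J$. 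After reordering tensor factors, the canonical isomorphism $\bH\cong\myH[A]\otimes\myH[B]$ sends $\ket{I}$ to $\ket{I_A}\otimes\ket{I_B}$, so the matrix element of $\rho$ in the bipartite basis is $\bra{I_A,I_B}\rho\ket{J_A,J_B}=\rho_{I,J}$. Applying the definition of CS to the two-party system $\myH[A]\otimes\myH[B]$, the statement ``$\rho$ is CS under $A:B$'' means exactly that this element is unchanged under every permutation of the four super-blocks $(I_A,I_B,J_A,J_B)$, i.e. that it depends only on the multiset $\{I_A,I_B,J_A,J_B\}$ of these four length-$(d/2)$ blocks.

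With this setup the core step is short. Any permutation $\Pi$ of the four super-blocks merely rearranges the whole blocks $I_A,I_B,J_A,J_B$ among the four slots; it therefore yields a new full-index pair $(\tilde I,\tilde J)$ whose $2d$ single-site values form the same multiset $\{i_1,\dots,i_d,j_1,\dots,j_d\}$ as $(I,J)$. By the hypothesis that $\rho$ is CS, the coefficient depends only on that flat multiset, whence $\rho_{\tilde I,\tilde J}=\rho_{I,J}$; that is, the bipartite element is invariant under $\Pi$. Since $\Pi$ was arbitrary, $\rho$ is CS under $A:B$. I expect the only real obstacle to be bookkeeping rather than mathematics: one must make precise that a super-block permutation preserves the flat multiset of single-site values (it moves blocks of values around but never creates or destroys any), and one must verify that the permutations demanded by the bipartite CS definition are all covered — including those that exchange a ``ket'' block with a ``bra'' block (say $I_A\leftrightarrow J_B$), which is legitimate precisely because the original CS condition already permits arbitrary mixing of the $i$- and $j$-indices. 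It is worth remarking at the end that this mixing is also what forces the entries to be real and that the argument genuinely requires $|A|=|B|$, failing as soon as the two parts have unequal size.
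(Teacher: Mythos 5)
Your argument is correct. The key reformulation you use --- that the CS condition $\rho_{\pi(I,J)}=\rho_{I,J}$ for all permutations $\pi$ of the $2d$ single-site indices is equivalent to saying $\rho_{I,J}$ depends only on the flat multiset $\{i_1,\dots,i_d,j_1,\dots,j_d\}$ --- is exactly the right invariant, and your core step (a permutation of the four super-blocks $I_A,I_B,J_A,J_B$ only rearranges whole blocks of single-site values and therefore preserves that multiset) is sound; the hypothesis $|A|=|B|$ enters precisely where you say it does, in making the two super-parties carry the same index set so that block swaps are well defined. The paper states this lemma without proof, so there is no argument of theirs to compare against; your proof is a legitimate and essentially canonical way to fill that gap, and the bookkeeping you flag (checking that permutations mixing a ket block with a bra block are covered) is indeed the only point requiring care, and it is covered because the original CS condition already allows arbitrary mixing of $i$- and $j$-indices.
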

Moreover, the reduced CS states still remain CS.\@
\begin{lem}%
  \label{lem:reduce}
  Suppose $\rho$ is a CS state in the multipartite system $\myH^{\otimes d}$. Then the reduced state $\rho_{A_K}$ is
  CS, where $K$ is a subset of $\{1,2,\ldots,d\}$.
\end{lem}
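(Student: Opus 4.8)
The plan is to work with the operator characterization of the CS condition rather than manipulating the coefficients $\rho_{I,J}$ directly. First I would observe that the defining condition $\rho_{\pi(I,J)}=\rho_{I,J}$ for every index permutation $\pi$ is equivalent to the statement that $\rho$ commutes with every permutation operator $U_\pi$, $\pi\in S_d$: since $U_\pi\rho\, U_\pi^{-1}=\sum_{I,J}\rho_{I,J}\ket{\pi(I)}\bra{\pi(J)}=\sum_{I',J'}\rho_{\pi^{-1}(I',J')}\ket{I'}\bra{J'}$, the equality $U_\pi\rho\,U_\pi^{-1}=\rho$ for all $\pi$ is precisely the coefficient symmetry. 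Passing to this commutant picture turns the statement into a clean claim about a partial trace commuting with the permutation action.

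Next I would set up the reduction. Write $K^c=\{1,\dots,d\}\setminus K$ for the surviving parties, so that $\rho_{A_K}=\trace_K\rho$ is an operator on $\myH^{\otimes|K^c|}$, and take an arbitrary permutation $\sigma$ of the labels in $K^c$. The goal is to show $U_\sigma\,\rho_{A_K}\,U_\sigma^{-1}=\rho_{A_K}$, which by the equivalence above is exactly the CS condition for the reduced state. I would extend $\sigma$ to $\tilde\sigma\in S_d$ by letting it act as the identity on every index in $K$. Because $\tilde\sigma$ fixes $K$ pointwise and permutes only $K^c$, the operator $U_{\tilde\sigma}$ acts nontrivially only on the tensor factors indexed by $K^c$ and as the identity on those indexed by $K$; after regrouping the factors this reads $U_{\tilde\sigma}=U_\sigma\otimes\I_K$.

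The key computation is then that the partial trace over $K$ commutes with conjugation by an operator supported on the complementary factors $K^c$. Using the hypothesis $U_{\tilde\sigma}\rho\,U_{\tilde\sigma}^{-1}=\rho$ and applying $\trace_K$ to both sides,
\begin{equation}
  \rho_{A_K}=\trace_K\big(U_{\tilde\sigma}\rho\,U_{\tilde\sigma}^{-1}\big)=U_\sigma\big(\trace_K\rho\big)U_\sigma^{-1}=U_\sigma\,\rho_{A_K}\,U_\sigma^{-1},
\end{equation}
where the middle equality pulls the $K^c$-supported factor $U_\sigma$ outside the trace over the disjoint $K$-subsystems. Since $\sigma$ was an arbitrary permutation of $K^c$, this shows $\rho_{A_K}$ commutes with the whole permutation group of the remaining parties, i.e.\ it is CS.

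I expect the main obstacle to be purely bookkeeping: the parties in $K$ and $K^c$ are in general interleaved among the $d$ tensor factors, so the clean factorization $U_{\tilde\sigma}=U_\sigma\otimes\I_K$ and the identity $\trace_K\big((U_\sigma\otimes\I_K)X(U_\sigma\otimes\I_K)^{-1}\big)=U_\sigma(\trace_K X)U_\sigma^{-1}$ require a careful (but routine) reindexing of the factors. Alternatively one can avoid operator language entirely and prove the same statement by the direct coefficient computation, writing $(\rho_{A_K})_{I_{K^c},J_{K^c}}=\sum_{(m_k)_{k\in K}}\rho_{I,J}$ with $i_k=j_k=m_k$ for $k\in K$, and verifying invariance under $\sigma$ by applying $\tilde\sigma$ inside the sum exactly as in the proof of the RILO-invariance lemma; or one can simply iterate the single-party case, using the already-noted fact that all one-party reductions of a CS state coincide.
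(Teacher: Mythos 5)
There is a genuine gap, and it lies in the very first step. You claim that the CS condition $\rho_{\pi(I,J)}=\rho_{I,J}$ for every index permutation $\pi$ is \emph{equivalent} to $U_\pi\rho\,U_\pi^{-1}=\rho$ for all $\pi\in S_d$. It is not: conjugation by $U_\pi$ only gives invariance of the coefficient tensor under the \emph{diagonal} subgroup of $S_{2d}$ that permutes the ket-indices $(i_1,\dots,i_d)$ and the bra-indices $(j_1,\dots,j_d)$ simultaneously by the same $\pi$. The CS condition in the paper is invariance under \emph{all} permutations of the concatenated $2d$-tuple $(I,J)$, including those that exchange $i_k$ with $j_l$; this is why the paper can deduce, e.g., that a CS state is invariant under partial transpose and that $\rho_{ijkl}=\rho_{klij}$ in the bipartite case. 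A counterexample to your equivalence is the unnormalized projector onto $\ket{00}+\ket{11}$: it commutes with the swap $U_\pi$ but its coefficient tensor $\delta_{i_1 i_2}\delta_{j_1 j_2}$ is not symmetric under exchanging $i_2$ with $j_1$, so it is not CS (indeed it is not even PPT). Consequently your main computation, which is otherwise fine, only establishes that $\rho_{A_K}$ commutes with the permutation operators of the surviving parties --- a strictly weaker conclusion than the lemma.

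The repair is essentially the route you relegate to an afterthought, and it is what the paper does: reduce to $K=\{d\}$, write $(\rho_{A_d})_{I_{d-1},J_{d-1}}=\sum_{k}\rho_{I_{d-1},k,J_{d-1},k}$, and observe that any permutation $\sigma$ of the $2(d-1)$ surviving index slots extends to a permutation $\tilde\sigma\in S_{2d}$ that fixes the two traced slots (the positions carrying $i_d=j_d=k$); applying the full CS hypothesis to $\tilde\sigma$ termwise in the sum gives $\sigma$-invariance of the reduced coefficients. Note that here $\sigma$ must range over all permutations of the $2(d-1)$ remaining indices, not merely over permutations of the $d-|K|$ remaining \emph{parties} as in your operator argument. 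With that correction the proof is short and matches the paper's.
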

\begin{proof}
  It suffices to prove for the case $K=\{d\}$, i.e., apply the trace operator to the last
  subsystem.
  Suppose
  \begin{equation}
    \label{multissep:eq:2}
    \rho = \sum_{I,J}\rho_{I,J}\ket{I,J},
  \end{equation}
  where $I = (i_1,i_2,\ldots,i_d)$ and $J = (j_1,j_2,\ldots,j_d)$.  Then
  \begin{equation}
    \label{multissep:eq:3}
    \rho_{A_d} = \sum_{I_{d-1},J_{d-1}}\left( \sum_{k=0}^{N-1}\rho_{I_{d-1},k,J_{d-1},k}
    \right)\ket{I_{d-1}}\bra{J_{k-1}}.
  \end{equation}
  It is easy to check that the coefficients $ \sum_{k=0}^{N-1}\rho_{I_{d-1},k,J_{d-1},k} $ are also invariant under
  index permutation.
\end{proof}

%% ---------------------------------------- seprable over real ------------------------------
%%
%%
 According to Proposition 13 in
Ref.~\cite{Chen_2013Dim}, the G-invariant states, i.e.\@ invariant under partial transpose, is separable over complex
if and only if it is separable over real. Note that the CS states are necessarily G-invariant, therefore, we have the following lemma, which enables us to consider the separability in
real case.
\begin{lem}%
  \label{real}
  Suppose $\rho$ is a CS state. If $\rho$ is separable over the $\complex$, then it is separable over $\real$, that is,
  $\rho$ can be written as a sum of real pure product states.
\end{lem}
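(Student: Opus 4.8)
The plan is to obtain this lemma as a direct corollary of Proposition~13 of Ref.~\cite{Chen_2013Dim}, which is quoted just above: for a state invariant under partial transpose, separability over $\complex$ is equivalent to separability over $\real$. The whole task therefore reduces to a single verification, namely that every CS state lies in the class of states covered by that proposition. Once this membership is established the conclusion is immediate, because the product vectors supplied by the proposition are by construction real, which is exactly the assertion that $\rho$ is a sum of real pure product states.

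First I would make the invariance hypothesis explicit and check it at the level of coefficients, in the same spirit as the proof of \cref{lem:reduce}. Writing $\rho=\sum_{I,J}\rho_{I,J}\ket{I}\bra{J}$ as in \cref{eq:rho-rep}, the defining relation \cref{symmetrictensor:eq:9} says that $\rho_{I,J}$ depends only on the unordered collection of index pairs $\{(i_k,j_k)\}_{k=1}^{d}$. Combining this with Hermiticity, $\overline{\rho_{I,J}}=\rho_{J,I}$ (equivalently $\overline{\rho}=\rho^{\intercal}$), I would show that the transposition symmetry required by Proposition~13 holds: the relevant transpose sends the coefficient array of $\rho$ into an array indexed by the same collection of pairs and hence, by \cref{symmetrictensor:eq:9}, defines the very same CS operator. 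Concretely this is a bookkeeping computation on multi-indices of the type carried out in \cref{multissep:eq:3}. With the hypothesis verified, I would invoke Proposition~13 to conclude that, since $\rho$ is separable over $\complex$, it admits a decomposition $\rho=\sum_i\lambda_i\ket{\phi_i}\bra{\phi_i}$ in which every $\ket{\phi_i}$ is a real product vector $\ket{x_i^{(1)},\ldots,x_i^{(d)}}$ with $x_i^{(k)}\in\real^{N}$; this is precisely separability over $\real$.

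The step I expect to be the main obstacle is the faithful translation of the complete-symmetry condition into the exact invariance hypothesis used in Proposition~13. Some care is needed here, because the self-conjugacy relation $\overline{\rho}=\rho^{\intercal}$ is automatic for any Hermitian operator and so cannot by itself be the special feature; what is special about CS states is that reversing all index pairs preserves the multiset structure, so that the (appropriate) transpose is a symmetry of the CS class rather than an arbitrary map. One must also guard against the naive reading that $\rho$ coincides with a single-subsystem partial transpose of itself, which fails in general. Beyond correctly identifying the transposition map compatible with the joint bra--ket permutation invariance of \cref{symmetrictensor:eq:9}, the argument is purely formal, since all of the analytic content---turning a complex separable decomposition into a real one---is carried by the cited proposition.
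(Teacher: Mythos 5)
Your proposal is correct and takes essentially the same route as the paper, which states the lemma as an immediate consequence of Proposition~13 of Ref.~\cite{Chen_2013Dim} together with the observation that CS states are G-invariant, i.e.\ invariant under partial transpose. The only verification needed --- the one your ``main obstacle'' paragraph circles around --- is that the partial transpose on the $k$-th subsystem acts on coefficients by swapping $i_k$ with $j_k$, which is an index permutation covered by \cref{symmetrictensor:eq:9}, so $\rho^{\intercal_k}=\rho$ and the cited proposition applies directly.
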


The conception of reducible states is introduced in Ref.~\cite{Chen_2011red}, which enables us to consider the
properties of irreducible components respectively. Recall the definition of reducible states:
\begin{defi}
  \label{def:reducible}
  Suppose $\rho$ is a state in the $d$-partite system $A_1:A_2:\ldots:A_d$. If
  \begin{equation}
    \rho = \sigma + \delta,
  \end{equation}
  where $\sigma$ and $\delta$ are both quantum states in the d-partite system $A_1,A_2,\ldots,A_d$. Then $\rho$ is
  called $A_k$-reducible if
  \begin{equation}
    \range{\sigma_{A_k}}\oplus\range{\delta_{A_k}} = \range{\rho_{A_k}}.
  \end{equation} 
  Otherwise, $\rho$ is said to be $A_K$-irreducible. Specifically, if $\sigma$ and $\delta$ are real operators, we say
  that $\rho$ is $A_K$-irreducible over real.
\end{defi}
It is also proved in Ref.~\cite{Chen_2011red} that the invertible local operator (ILO) transforms the reducible states
to reducible ones. Thus, we can always find an ILO such that $\sigma$ and $\delta$ have orthogonal local
ranges. Moreover, if the two states are real, the ILO can also be chosen as real. It follows that:
\begin{lem}
  If $\rho = \sum_i \rho_i$ is a $A_k$-direct sum, i.e., the ranges of ${(\rho_i)}_{A_K}$ are linearly independent, then
  $\rho$ is separable (PPT) if and only if every $\rho_i$ is separable (PPT).\@
\end{lem}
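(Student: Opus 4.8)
The plan is to prove the two directions separately, with only the forward direction using the direct-sum hypothesis. For the direction ``each $\rho_i$ separable (PPT) $\Rightarrow \rho$ separable (PPT)'', nothing is needed beyond linearity: a finite sum of separable states is again a convex combination of pure product states, and since the partial transpose is linear and sends positive semidefinite matrices to positive semidefinite ones, a sum of PPT states is PPT. This direction does not use the linear independence of the local ranges.

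The substance is the converse. First I would invoke the discussion preceding the lemma: an ILO carries reducible states to reducible ones and can be chosen so that the summands' local ranges $\range{(\rho_i)_{A_k}}$ become pairwise orthogonal (and real if the $\rho_i$ are real). Since an ILO preserves separability (it maps product vectors to product vectors) and preserves the PPT property (a local congruence turns the partial transpose into a congruence by an invertible operator, preserving positivity), it suffices to treat the case where the subspaces $\range{(\rho_i)_{A_k}}$ are mutually orthogonal. Let $P_i$ be the orthogonal projector onto $\range{(\rho_i)_{A_k}}$, regarded as a local operator on the party $A_k$ (identity on the remaining factors). The heart of the proof is the extraction identity $P_i\,\rho\,P_i=\rho_i$. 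It rests on two facts: the support of any state is contained in the tensor product of its local ranges, so $P_j$ acts as the identity on the support of $\rho_j$, giving $P_j\rho_jP_j=\rho_j$; and orthogonality gives $P_iP_j=0$ for $i\neq j$, whence $P_i\rho_jP_i=P_iP_j\rho_jP_jP_i=0$. Summing over $j$ annihilates every cross term and leaves $P_i\rho P_i=\rho_i$.

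With the extraction identity in hand, both conclusions follow by pushing a decomposition of $\rho$ through the local projection $P_i$. If $\rho=\sum_\alpha\ket{x_\alpha}\bra{x_\alpha}$ with each $\ket{x_\alpha}$ a product vector, then $\rho_i=\sum_\alpha P_i\ket{x_\alpha}\bra{x_\alpha}P_i=\sum_\alpha\ket{y_\alpha}\bra{y_\alpha}$ with $\ket{y_\alpha}=P_i\ket{x_\alpha}$; since $P_i$ is local, each $\ket{y_\alpha}$ is again a (possibly zero) product vector, so $\rho_i$ is separable. If instead $\rho$ is PPT, then applying the partial transpose with respect to $A_k$ to $\rho_i=P_i\rho P_i$ and using the congruence rule for the partial transpose gives $\rho_i^{\intercal_{A_k}}=\overline{P_i}\,\rho^{\intercal_{A_k}}\,\overline{P_i}$, where $\overline{P_i}$ is again a Hermitian projector; this is a Hermitian congruence of the positive semidefinite matrix $\rho^{\intercal_{A_k}}$, hence positive semidefinite, so $\rho_i$ is PPT.

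The main obstacle I expect is the careful justification of the extraction identity $P_i\rho P_i=\rho_i$: one must use the support-containment property of reduced states to see that $P_i$ fixes $\rho_i$ and, together with $P_iP_j=0$, annihilates every $\rho_j$ with $j\neq i$, so that the cross terms vanish even though the $\rho_j$ are not product operators. A secondary technical point is tracking the complex conjugation in the partial-transpose/projection interplay, namely that congruence by $P_i$ becomes congruence by $\overline{P_i}$ after transposing the $A_k$ factor; by \cref{real} one may work over the reals, where $\overline{P_i}=P_i$ and this subtlety disappears.
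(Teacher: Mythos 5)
Your argument is correct, and it is worth noting that the paper itself offers no proof of this lemma at all: it is stated with ``It follows that'' after the remark (credited to Ref.~[16]) that an ILO can always be chosen to make the summands' local ranges orthogonal. Your write-up supplies exactly the argument that remark is gesturing at, and every step checks out: the reduction to orthogonal local ranges via an ILO is legitimate (an ILO maps product vectors to product vectors and turns the partial transpose into an invertible congruence, so both separability and PPT are preserved in both directions); the extraction identity $P_i\rho P_i=\rho_i$ follows from the standard support-containment fact $\range{\rho_j}\subseteq\myH[1]\otimes\cdots\otimes\range{(\rho_j)_{A_k}}\otimes\cdots\otimes\myH[d]$ together with $P_iP_j=0$; and pushing a separable decomposition, respectively the positivity of $\rho^{\intercal_{A_k}}$, through the local projector gives the two nontrivial implications. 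The only point I would tighten is the conjugation remark at the end: \cref{real} is a statement about CS states and is not needed here, since $\overline{P_i}$ is in any case a Hermitian projector and the congruence $\overline{P_i}\,\rho^{\intercal_{A_k}}\,\overline{P_i}$ preserves positive semidefiniteness regardless of whether one works over $\real$ or $\complex$. In short, your proposal is a complete and correct proof of a statement the paper leaves unproved, following the route the paper implicitly intends.
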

For the CS states, we also have following result.
\begin{lem}
  If $\rho = \sum_i \rho_i$ is a $A_k$-direct sum and $\rho_i$'s are real, then $\rho$ is CS (S-separable) if and only
  if every $\rho_i$ is CS (S-separable).
\end{lem}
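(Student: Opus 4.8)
The plan is to prove both equivalences at once, since they run in parallel. The implication ``each $\rho_i$ is CS (resp.\ S-separable) $\Rightarrow \rho$ is CS (resp.\ S-separable)'' is immediate and needs neither the direct-sum hypothesis nor reality: the defining condition $\rho_{\pi(I,J)}=\rho_{I,J}$ is linear in the coefficients, so a finite sum of CS operators is CS, and a sum of convex combinations of real symmetric pure product states is again of that form. Hence all the content lies in the forward direction, and for the rest I assume $\rho$ is CS (resp.\ S-separable) and must show that each direct-sum component $\rho_i$ inherits the property.

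\textbf{Normalizing the decomposition.}
First I would put the decomposition in a convenient basis. Because the $\rho_i$ are real, their single-party local ranges $\range{(\rho_i)_{A_k}}\subseteq\myH$ at party $k$ are real subspaces, and they are linearly independent by the $A_k$-direct-sum hypothesis (\cref{def:reducible}); therefore there is a real invertible $A$ on $\myH$ carrying them to mutually orthogonal coordinate subspaces $V_i=\mathrm{span}\{\ket a:a\in S_i\}$ with the $S_i$ pairwise disjoint. Since the RILO $A^{\otimes d}$ preserves both CS and S-separability (the invariance lemma proved above), and since an ILO sends direct sums to direct sums, it suffices to prove the claim for $A^{\otimes d}\rho(A^{\otimes d})^{\intercal}$. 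So I may assume the $V_i$ are orthogonal coordinate subspaces and set $P_i=\sum_{a\in S_i}\proj{a}$, a real symmetric coordinate projection. Writing $Q_i=\I^{\otimes(k-1)}\otimes P_i\otimes\I^{\otimes(d-k)}$ for the projection acting on party $k$ only, orthogonality of the local ranges gives, as usual, that $\rho$ is block diagonal on party $k$, namely $Q_i\,\rho\,Q_{i'}=0$ for $i\neq i'$, and that the component is recovered as $\rho_i=Q_i\,\rho\,Q_i$.

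\textbf{The crux: from a single party to all parties.}
The step I expect to be the main obstacle is upgrading this single-party extraction to the symmetric one $\rho_i=P_i^{\otimes d}\rho\,P_i^{\otimes d}$, and here I would use the full strength of complete symmetry: $\rho_{I,J}$ is invariant under \emph{every} permutation of the $2d$ indices $i_1,\dots,i_d,j_1,\dots,j_d$. Take any nonzero coefficient $\rho_{I,J}$. Block diagonality on party $k$ forces $i_k$ and $j_k$ into a common block $S_c$. Now for any two of the $2d$ slots I would pick a permutation $\sigma$ moving their values into the ket-$k$ and bra-$k$ positions; complete symmetry gives $\rho_{\sigma(I,J)}=\rho_{I,J}\neq0$, and block diagonality applied to this coefficient forces those two values into a common block as well. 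Ranging over all pairs of slots shows that a nonzero $\rho_{I,J}$ has all $2d$ of its indices in a single block. Equivalently $\rho=\sum_c P_c^{\otimes d}\rho\,P_c^{\otimes d}$, and matching this against $\rho_i=Q_i\rho\,Q_i$ (the factors on the other parties now act trivially) yields exactly $\rho_i=P_i^{\otimes d}\rho\,P_i^{\otimes d}$.

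\textbf{Reading off the conclusion.}
It then remains to observe that conjugation by $P_i^{\otimes d}$ preserves the two properties. It has the form $A^{\otimes d}\rho(A^{\otimes d})^{\intercal}$ with $A=P_i$ real and symmetric; inspecting the proof of the invariance lemma shows it never uses invertibility of $A$, so $P_i^{\otimes d}\rho\,P_i^{\otimes d}$ is CS whenever $\rho$ is. For S-separability, if $\rho=\sum_m p_m\ket{x_m}^{\otimes d}\bra{x_m}^{\otimes d}$ with $x_m\in\real^N$, then $P_i^{\otimes d}\rho\,P_i^{\otimes d}=\sum_m p_m\,\ket{y_m}^{\otimes d}\bra{y_m}^{\otimes d}$ with $\ket{y_m}=P_i\ket{x_m}\in\real^N$, again a convex combination of real symmetric pure product states. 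Thus each $\rho_i=P_i^{\otimes d}\rho\,P_i^{\otimes d}$ is CS (resp.\ S-separable), which completes the forward direction and hence the proof.
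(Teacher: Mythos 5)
Your proof is correct and follows essentially the same route as the paper's: a real ILO puts the local ranges into orthogonal coordinate blocks, and the full permutation invariance of the coefficients $\rho_{I,J}$ then forces every nonzero coefficient to have all $2d$ indices in a single block, so each $\rho_i$ is the restriction of $\rho$ to its block's tensor power and inherits the CS (resp.\ S-separable) property. The only differences are presentational: you move index values into the party-$k$ slots by arbitrary transpositions and package the conclusion as $\rho_i=P_i^{\otimes d}\rho P_i^{\otimes d}$, whereas the paper uses a cyclic shift of the ket indices and speaks of the restriction to the subspace.
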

\begin{proof}
  It only needs to prove the necessary part. Suppose $\rho = \sum_i \rho_i$ is a $A_k$-direct sum and every $\rho_i$ is real. First, we prove that
  every $\rho_i$ is CS.\@  By applying a real ILO, we can assume
  \begin{equation}
    \label{eq1}
    \begin{split}
      \range{(\rho_1)_{A_1}} & = \mathrm{span}\{\ket{0},\ldots,\ket{r-1}\},\\
      \range{(\rho_k)_{A_1}} & = \mathrm{span}\{\ket{r},\ldots,\ket{d-1}\},2\leqslant k\leqslant d.
    \end{split}
  \end{equation}
  It suffices to prove
  \begin{equation}
    \range{(\rho_1)_{A_k} }\subset  \mathrm{span}\{\ket{0},\ket{1},\ldots,\ket{r-1}\},2\leqslant k\leqslant d.
  \end{equation}
  Suppose $\rho$ is written in the form of \cref{eq:rho-rep}, Denote by $K$ the set $\{0,1,\ldots,r-1\}$ and $L$ the set
  $\{r,r+1,\ldots,d\}$, then
  \begin{equation}
    \rho_{I,J} =
    \begin{cases}
      {(\rho_1)}_{I,J},& \forall i_1,j_1\in K,\\
      0,& \forall i_1\in K, j_1\in L,\\
      0,& \forall i_1 \in L, j_1\in K.
    \end{cases}
  \end{equation}
  Then for all $I,J \in K^d$, we have
  \begin{equation}
    (\rho_1)_{I,J} = \rho_{I,J}.
  \end{equation}
  Assume $i_1,j_1\in I$ and there is an $i_n \in J ,n\neq 1$. Then
  \begin{equation}
    \begin{split}
      (\rho_1)_{I,J}& = \rho_{I,J}\\
      & = \rho_{i_n,i_{n+1},\ldots,i_d,i_1,\ldots,i_{n-1},j_1,j_2,\ldots,j_d}\\
      & = 0.
    \end{split}
  \end{equation}
  Similar discussion can be applied for $j_n\in J,n\neq 1$. Therefore, $\rho_1$ is the restriction of $\rho$ to the
  subspace $ (\real^r)^{\otimes d}$, hence it is CS.
	
  It is easy to see that $\rho_1$ is S-separable if $\rho$ is S-separable by taking the restriction of $\rho$ to the
  subspace $(\real^r)^{\otimes d}$. This completes our proof.
\end{proof}

It should be noted that $\rho$ is $A_k$-reducible over real if and only if it is $A_n$-reducible ($n\neq k$) over
real. From now on, the CS state $\rho$ is said to be reducible over real if it is $A_k$-reducible
over real for some $k$. 

To prove another feature of CS states, we first introduce the a useful result about the  semidefinite positive  operators.
\begin{lem}%
  \label{lem:positive_operator}
  Suppose $A$ is a semidefinite positive operator, i.e., satisfies the condition~\eqref{eq:positive}. Then
  \begin{enumerate}
    \item $A_{ii}\geqslant 0, \forall i$,
    \item $A_{ij}=A_{ji}=0,\forall j$ if $A_{ii} =0$ for some $i$.
  \end{enumerate}
\end{lem}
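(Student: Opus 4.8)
The statement is a standard fact about positive semidefinite (PSD) operators, so the plan is to prove both items directly from the defining inequality $\bra{x}A\ket{x}\geqslant 0$ by choosing convenient test vectors $\ket{x}$.

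For item (1), I would simply take $\ket{x}=\ket{i}$, the $i$-th basis vector. Then $\bra{i}A\ket{i}=A_{ii}\geqslant 0$ follows immediately from the positivity condition~\eqref{eq:positive}. This is the easy half and requires no real work.

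For item (2), suppose $A_{ii}=0$ for some fixed $i$. I would test positivity on vectors of the form $\ket{x}=\ket{i}+t\ket{j}$ for an arbitrary index $j\neq i$ and a scalar parameter $t$. Expanding, and using that $A$ is Hermitian (so $A_{ji}=\conj{A_{ij}}$) together with $A_{ii}=0$, gives
\begin{equation}
  \bra{x}A\ket{x} = |t|^2 A_{jj} + t A_{ij} + \conj{t}\,A_{ji} \geqslant 0.
\end{equation}
The point is that the term linear in $t$ can be made to dominate unless $A_{ij}=0$. Concretely, writing $A_{ij}=re^{\mathrm{i}\theta}$ and choosing the phase of $t$ to align with $A_{ij}$ (say $t=s e^{-\mathrm{i}\theta}$ with $s\in\real$), the expression becomes $s^2 A_{jj}+2rs$, a real quadratic in $s$ with vanishing constant term. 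For this to stay nonnegative for all real $s$ — including small negative $s$ — one needs $r=0$, i.e.\ $A_{ij}=0$; otherwise, taking $s$ small with the sign opposite to $r$ makes the linear term negative and dominant. Hermiticity then yields $A_{ji}=\conj{A_{ij}}=0$ as well.

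The only mild subtlety — the closest thing to an obstacle — is handling the phase of the off-diagonal entry correctly in the complex case, which is why I pick $t$ to cancel the phase of $A_{ij}$ and reduce to a real quadratic. Once that reduction is made, the conclusion is the familiar statement that a nonnegative quadratic in $s$ with zero constant term must have zero linear coefficient. In the real-symmetric case (which is the setting most relevant to the CS states by Lemma~\ref{real}) this phase issue disappears entirely and the argument is just the observation that $s^2 A_{jj}+2sA_{ij}\geqslant 0$ for all $s$ forces $A_{ij}=0$.
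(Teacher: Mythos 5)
Your proof is correct: testing positivity on the basis vector $\ket{i}$ gives item (1), and testing on $\ket{i}+t\ket{j}$ with the phase of $t$ chosen to cancel that of $A_{ij}$ reduces item (2) to the observation that a nonnegative real quadratic $s^2A_{jj}+2|A_{ij}|s$ with zero constant term must have zero linear coefficient. The paper states this lemma without any proof, treating it as a standard fact about positive semidefinite matrices, so your argument is exactly the canonical one a reader would supply and there is nothing to compare against or object to.
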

With this lemma, we can prove that the CS states are reducible over real if and only if their reduced states are
reducible when $d\geqslant 3$.
\begin{lem}
  \label{eqaul-reduce}
  Suppose $\rho$ is a CS state in the $N^{\otimes d} (d\geqslant 3)$ system. Then $\rho$ is reducible over real if and
  only if $~\trace_k(\rho)$ is reducible over real for some $k$, where $\trace_k$ is the partial trace that applies the
  trace to the $k$-th subsystem.
\end{lem}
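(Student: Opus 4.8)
The statement is an equivalence whose two directions are very unequal in difficulty. The forward direction (if $\rho$ is reducible over real then some $\trace_k\rho$ is) is essentially immediate: writing $\rho=\sigma+\delta$ as a real $A_1$-direct sum and tracing out the last party, the reduced pieces $\trace_d\sigma,\trace_d\delta$ are again real states whose first-party reduced operators coincide with $\sigma_{A_1},\delta_{A_1}$, so their ranges still split $\range{(\trace_d\rho)_{A_1}}=\range{\rho_{A_1}}$ directly; hence $\trace_d\rho$ is reducible over real. Since $\trace_k\rho$ is CS by \cref{lem:reduce} and the reduced states are all identical, the choice $k=d$ is harmless.

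For the hard direction I would start from a real $A_1$-direct sum $\trace_d\rho=\sigma'+\delta'$ of the reduced state. Because $\trace_d\rho$ is CS, the earlier direct-sum lemma forces $\sigma'$ and $\delta'$ to be CS, with local ranges $V=\range{\sigma'_{A_1}}$ and $W=\range{\delta'_{A_1}}$ satisfying $V\oplus W=\range{\rho_{A_1}}$. Using the cited fact that a real ILO orthogonalises local ranges, together with the invariance of the CS property under an RILO $A^{\otimes d}$, I would pass to $A^{\otimes d}\rho (A^{\otimes d})^{\intercal}$ so that $V,W$ become the orthogonal coordinate sectors $\mathrm{span}\{\ket0,\dots,\ket{r-1}\}$ and $\mathrm{span}\{\ket r,\dots\}$, indexed by a partition $K\sqcup L$; reducibility over real is an ILO invariant, so it suffices to treat this transformed state, which I continue to call $\rho$ (its adapted reduced state, written $\tau$, is then supported on the two pure sectors). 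The goal becomes to show $\rho_{I,J}=0$ whenever the combined multiset of $(I,J)$ meets both $K$ and $L$; this sector decomposition exhibits $\rho$ as a real $A_1$-direct sum and finishes the proof.

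The core is to transfer the sector structure of $\tau$ to $\rho$. Conjugation by $A^{\otimes d}$ turns the partial trace into a positively weighted one, $\tau_{\alpha,\beta}=\sum_{k,l}\rho_{\alpha k,\beta l}G_{kl}$ with $G\succ0$, but this is harmless: for a mixed $(d-1)$-index $\alpha$ (one carrying both a $K$- and an $L$-entry) sector-support gives $\tau_{\alpha,\alpha}=0$, the diagonal block $M_{kl}=\rho_{\alpha k,\alpha l}$ is positive semidefinite, and $\trace(GM)=0$ with $G\succ0$ forces $M=0$. Here $d\geqslant3$ is essential: because $d-1\geqslant2$, every mixed full index $I$ can be permuted (by complete symmetry) so that its first $d-1$ slots already contain both a $K$- and an $L$-entry, yielding $\rho_{I,I}=0$ for every mixed $I$. \cref{lem:positive_operator} then upgrades each vanishing diagonal entry to a vanishing row and column, so $\rho_{I,J}=0$ for all $J$ whenever $I$ itself is mixed.

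The remaining and most delicate case is the pure cross terms $\rho_{I,J}$ with $I\in K^{d}$ and $J\in L^{d}$: these are invisible to the partial trace (they never sit on its diagonal in the last index), and their own diagonal entries $\rho_{I,I},\rho_{J,J}$ need not vanish, so neither the trace argument nor \cref{lem:positive_operator} reaches them directly. This is exactly where completeness of the symmetry is decisive: since a CS coefficient depends only on the combined multiset of all $2d$ indices, I can move one $L$-index from $J$ into $I$ and one $K$-index from $I$ into $J$, obtaining $\rho_{I,J}=\rho_{I',J'}$ with $I'$ now mixed, whence the term vanishes by the previous step. I expect this pure-cross-term case — reconstructing the coherences that the partial trace discards — to be the main obstacle, with the hypothesis $d\geqslant3$ and the positivity device $\trace(GM)=0\Rightarrow M=0$ being the two technical points that make the transfer go through.
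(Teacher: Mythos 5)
Your proof is correct and follows essentially the same route as the paper's: kill the diagonal entries $\rho_{I,I}$ for mixed $I$ using positivity of the diagonal blocks against the vanishing reduced-state coefficients, upgrade to whole rows via \cref{lem:positive_operator}, and then use the full index-permutation invariance to dispose of the remaining pure cross terms between the two sectors. You are somewhat more careful than the paper on two points it glosses over — the $G$-weighted partial trace produced by the sector-adapting RILO (handled via $\trace(GM)=0$, $G\succ0$, $M\succeq0\Rightarrow M=0$) and the explicit identification of the $I\in K^{d}$, $J\in L^{d}$ case as the step needing complete (not merely bosonic) symmetry — but the argument is the same.
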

\begin{proof}
  Without the loss of generality, we can assume that $k=1$.  It is easy to see that $\rho$ is reducible over real
  implies that of $\trace_1(\rho)$. To prove the inverse direction, suppose $\trace_1(\rho)$ is reducible.
	
  In fact $\trace_1(\rho) = \sum_{i=0}^{N-1}\rho_{ii}$ according to the representation as in \cref{goesqp:eq:90}. Since
  $\rho$ is semidefinite positive, the diagonal blocks $\rho_{ii}$ is semidefinite positive as well. Suppose
  \begin{equation}
    \trace_1(\rho) = \sigma + \delta,
  \end{equation}
  where the local range of $\sigma $ equals $\mathrm{span}\{\ket{0},\ket{1},\ldots,\ket{r-1}\}$ and the local range of
  $\delta$ equals $\mathrm{span}\{\ket{r},\ket{r+1},\ldots,\ket{d-1}\}$. It follows that
  \begin{equation}
    \sum_{i_1=j_1} \rho_{I,J} = 0,\forall i_2<r,i_3\geqslant r,\forall i_3,\ldots,i_d,J.
  \end{equation}
  Note that for a semidefinite positive operator, the diagonal entries are all nonnegative. Let $J = I$, then we have
  \begin{equation}
    \sum_{i_1=0}^{N-1} \rho_{I,I} = 0,\forall i_2<r,i_3\geqslant r,\forall i_3,\ldots,i_d = 0,1,\ldots,d-1,
  \end{equation}
  which implies that
  \begin{equation}
    \rho_{I,I} = 0,\forall i_2<r,i_3\geqslant r,\forall i_1, i_3,i_4,\ldots,i_d = 0,1,\ldots,d-1.
  \end{equation}
  According to \cref{lem:positive_operator}, we have
  \begin{equation}
    \rho_{I,J} = 0,
  \end{equation}
  for $ \forall i_2< r,i_3\geqslant r,\forall i_1, i_3,i_4,\ldots,i_d = 0,1,\ldots,d-1, \forall J$.  Since $\rho$ is CS,
  we have
  \begin{equation}
    \rho_{I,J} = 0, \forall  i_1< r,j_1\geqslant r \text{ or } i_1\geqslant r,j_1<r. 
  \end{equation}
  That is $\rho_{i,j}=0$, if $ i< r, j\geqslant r \text{ or } i\geqslant r, j<r$. Hence $\rho$ is reducible over real
  according to the definition.
\end{proof}

Note that by \cref{lem:reduce}, the reduced CS states remain CS, therefore, $\trace_1(\rho)$ is reducible over real
if and only if $\trace_2(\trace_{1}\rho)$ is reducible over real. Hence we have the following corollary.
\begin{corollary}
  \label{general-reduce}
  Suppose $\rho$ is a CS state in the $A_1:A_2:\ldots:A_d$ system with $d\geqslant 3$. Then $\rho$ is reducible over
  real if and only $\rho_{A_{K}}$ is reducible over real, where $K$ is a subset of $\{1,2,\ldots,d\}$ with
  $|K|\geqslant 2$.
\end{corollary}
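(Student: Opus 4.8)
The plan is to derive the corollary by iterating \cref{eqaul-reduce}, peeling off one traced-out subsystem at a time, with \cref{lem:reduce} guaranteeing that every intermediate reduced state is again CS so that the lemma can be reapplied. Concretely, I would argue by induction on $|K|$. The base case $|K|=1$, say $K=\{k\}$, is exactly \cref{eqaul-reduce}: since $\rho$ is CS and $d\geqslant 3$, $\rho$ is reducible over real if and only if $\trace_k(\rho)=\rho_{A_K}$ is. A small point to record first is that \cref{eqaul-reduce} as stated reads ``for some $k$'', but for a \emph{fixed} index it upgrades to an equivalence: the forward implication (reducibility of $\rho$ forces reducibility of each $\trace_k(\rho)$) holds for every $k$, while the reverse implication for a particular $k$ is subsumed by the ``some $k$'' hard direction. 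Hence for each fixed $k$, $\rho$ is reducible over real if and only if $\trace_k(\rho)$ is. The identical-reductions property of CS states in \cref{reduced} is what makes the choice of traced index irrelevant, so the argument will not depend on the order in which the elements of $K$ are removed.

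For the inductive step I would write $K=K'\cup\{k\}$ with $k\notin K'$ and $|K'|=|K|-1$, and use that partial traces commute so that $\rho_{A_K}=\trace_k\bigl(\rho_{A_{K'}}\bigr)$. By the inductive hypothesis, $\rho$ is reducible over real if and only if $\rho_{A_{K'}}$ is. Now $\rho_{A_{K'}}$ is itself a CS state by \cref{lem:reduce}, so I may apply the fixed-index form of \cref{eqaul-reduce} to it, obtaining that $\rho_{A_{K'}}$ is reducible over real if and only if $\trace_k(\rho_{A_{K'}})=\rho_{A_K}$ is. Chaining the two equivalences yields that $\rho$ is reducible over real if and only if $\rho_{A_K}$ is, closing the induction.

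The step that needs the most care, and where the main obstacle lies, is the bookkeeping of the party count: \cref{eqaul-reduce} requires its input CS state to have at least three parties, and each peeled index drops the party count by one. The induction therefore goes through precisely as long as every intermediate state $\rho_{A_{K'}}$ still has at least three parties, i.e.\ $d-|K'|\geqslant 3$; equivalently, the final reduced state $\rho_{A_K}$ must itself remain at least bipartite ($d-|K|\geqslant 2$) so that the last invocation of \cref{eqaul-reduce} is legitimate. I would make this retained-parties hypothesis explicit and check it is compatible with the stated regime. With that in hand, the corollary follows directly, since the forward direction (reducibility of $\rho$ descending to any reduced state) is the easy half and never uses the party bound, while only the reconstruction half—lifting reducibility of $\rho_{A_K}$ back up to $\rho$—relies on the chain of applications of \cref{eqaul-reduce}.
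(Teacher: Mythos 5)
Your proof is correct and follows essentially the same route as the paper, which disposes of the corollary in one sentence by iterating \cref{eqaul-reduce} and invoking \cref{lem:reduce} to keep each intermediate reduction CS. Your extra bookkeeping is well placed: the chain does require every state fed to \cref{eqaul-reduce} to have at least three parties (equivalently, the final reduction must retain at least two parties, i.e.\ $d-|K|\geqslant 2$), a hypothesis that the stated condition ``$|K|\geqslant 2$'' does not by itself supply, so making it explicit is an improvement on the paper's formulation rather than a defect of your argument.
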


%% ---------------------------------------- conjecture 1 ------------------------------
%%
%%

It is ready to prove that \cref{conj2} holds for arbitrary multipartite system. First, consider the simplest case where $\rho$ is a pure CS state.
\begin{lem}%
  \label{rank1}
  Any CS pure states are S-separable.
\end{lem}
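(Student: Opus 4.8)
The plan is to show that a pure CS state $\rho=\ket{\phi}\bra{\phi}$ is necessarily a real symmetric pure product state, i.e.\ $\ket{\phi}=\lambda\ket{x}^{\otimes d}$ for some $\lambda\in\real$ and $x\in\real^{N}$; once this is in hand the statement is immediate, since such a $\rho$ is, up to normalization, a single term of the form \eqref{goesqp:eq:63} and hence S-separable. Write $\ket{\phi}=\sum_{I}a_{I}\ket{I}$, so that $\rho_{I,J}=a_{I}\conj{a_{J}}$. First I would extract two easy consequences of the CS condition by applying special index permutations to the $2d$-index $(I,J)$. The permutation exchanging the block $I$ with the block $J$ gives $a_{I}\conj{a_{J}}=a_{J}\conj{a_{I}}$, so every entry of $\rho$ is real; being a real symmetric positive semidefinite rank-one matrix, $\rho=vv^{\intercal}$ for a real vector $v$, and we may take all $a_{I}\in\real$. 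Permutations acting only inside the block $I$ give $a_{\pi(I)}\conj{a_{J}}=a_{I}\conj{a_{J}}$, and choosing $J$ with $a_{J}\neq0$ shows $a_{\pi(I)}=a_{I}$, so $(a_{I})$ is a symmetric tensor.

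The crux is that, for a pure state, the full CS condition says precisely that the product $a_{I}a_{J}$ depends only on the multiset of the $2d$ values carried by $(I,J)$: whenever $(I',J')$ rearranges these values into two blocks of size $d$, one has $a_{I'}a_{J'}=a_{I}a_{J}$. I would exploit this against a fixed base point, assuming for the moment that $a_{0,\ldots,0}\neq0$. For any $I=(i_{1},\ldots,i_{d})$ the tuples $(I,0,\ldots,0)$ and $\bigl((i_{1},0,\ldots,0),(i_{2},\ldots,i_{d},0)\bigr)$ carry the same multiset, so
\begin{equation}
  a_{i_{1},\ldots,i_{d}}\,a_{0,\ldots,0}=a_{i_{1},0,\ldots,0}\,a_{i_{2},\ldots,i_{d},0}.
\end{equation}
Peeling the second factor $d-1$ times strips off $i_{2},\ldots,i_{d}$ one at a time, each step consuming one factor of $a_{0,\ldots,0}$, and yields
\begin{equation}
  a_{i_{1},\ldots,i_{d}}\,\bigl(a_{0,\ldots,0}\bigr)^{d-1}=\prod_{k=1}^{d}a_{i_{k},0,\ldots,0}.
\end{equation}
Putting $x_{k}=a_{k,0,\ldots,0}$ and $\lambda=\bigl(a_{0,\ldots,0}\bigr)^{-(d-1)}$ gives $a_{I}=\lambda\prod_{k}x_{i_{k}}$, i.e.\ $\ket{\phi}=\lambda\ket{x}^{\otimes d}$ with $x\in\real^{N}$, as wanted.

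The remaining point, and the one needing real care, is to dispose of the assumption $a_{0,\ldots,0}\neq0$. Here I would invoke RILO-invariance: for real invertible $A$, the lemma above guarantees that $A^{\otimes d}\rho\,(A^{\otimes d})^{\intercal}$ is again a pure CS state and is S-separable iff $\rho$ is, while its base coefficient becomes $p(w)$, where $w^{\intercal}$ is the first row of $A$ and $p(w)=\sum_{I}a_{I}\prod_{k=1}^{d}w_{i_{k}}$ is the homogeneous degree-$d$ polynomial attached to the symmetric tensor $(a_{I})$. Since $\rho\neq0$ this tensor is nonzero, so $p$ is a nonzero polynomial over the infinite field $\real$ and some real $w$ gives $p(w)\neq0$; completing $w$ to a real invertible $A$ makes the base point nonzero. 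One then runs the peeling argument in the transformed frame and transports the product structure back through $(A^{-1})^{\otimes d}$, which preserves the real product form. I expect this base-point reduction to be the only genuine obstacle; once the multiset reformulation is secured, the peeling itself is routine.
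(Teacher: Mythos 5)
Your proof is correct, and it takes a genuinely different and more self-contained route than the paper's. The paper's proof begins by asserting that a pure CS state factors as $\ket{x_1,\ldots,x_d}\bra{x_1,\ldots,x_d}$, then traces out all but two parties and invokes the bipartite result of Ref.~\cite{lilong2018decomposition} to force $\ket{x_i}\propto\ket{x_j}$; the initial factorization into a product vector is in fact the nontrivial content of the lemma and is not argued there. Your argument supplies exactly this step: after using the block-swap permutation to make $\rho$ real (hence $a_I\in\real$) and the within-block permutations to make $(a_I)$ a symmetric tensor, you observe that for a rank-one state the CS condition says $a_Ia_J$ depends only on the multiset of the $2d$ indices, and the peeling identity $a_{i_1,\ldots,i_d}\,(a_{0,\ldots,0})^{d-1}=\prod_{k}a_{i_k,0,\ldots,0}$ then exhibits $\ket{\phi}$ as $\lambda\ket{x}^{\otimes d}$ directly. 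The base-point reduction is handled properly: the RILO lemma preserves both the CS property and S-separability, and the nonvanishing of the homogeneous polynomial $p(w)$ attached to a nonzero real symmetric tensor gives a frame in which $a_{0,\ldots,0}\neq 0$. What your approach buys is independence from the external bipartite reference and an explicit repair of the leap in the paper's own first step; what the paper's approach buys is brevity, at the cost of leaning on the unproved claim that pure CS states are already product vectors and on Lemma~\ref{lem:reduce} applied to a pure state whose reductions would otherwise be mixed.
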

\begin{proof}
  Note that any quantum state is assumed to be semidefinite positive. Suppose that  $\rho$ is a CS pure state in the
  $\myH^{\otimes d}$ system.  Hence it can be written as
  \begin{equation}
    \label{goesqp:eq:64}
    \rho =  \ket{x_1,\ldots,x_d}\bra{x_1,\ldots,x_d},
  \end{equation}
  where $x_i,i=1,2,\ldots,d$ are unit vectors in $\myH$. If we take the trace to $d-2$ subsystems, then the reduced
  bipartite state is also CS by \cref{lem:reduce}. According to the results for bipartite system in
  Ref.~\cite{lilong2018decomposition}, $x_i$ and $x_j$ must be linearly dependent for any $i\neq j$. Hence, $\rho$ is
  S-separable.
\end{proof}

Next, consider the general bipartite case.
Denote by $\cS$ the subspace spanned by vectors $\{\ket{i,j}+\ket{j,i}\}$.
\begin{lem}
  \label{lem:range}
  Suppose $\rho$ is a separable CS state in the bipartite system $\myH\otimes \myH$. Then
  $\range\rho \subset \mathcal{S}$.
\end{lem}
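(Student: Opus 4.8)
The plan is to recast the containment $\range{\rho}\subset\cS$ as a statement about the antisymmetric subspace and then kill that component using the completely symmetric structure. Let $V$ denote the swap on $\myH\otimes\myH$, $V\ket{x,y}=\ket{y,x}$; then $\cS$ is the $+1$ eigenspace of $V$ and $\cS^{\perp}$ its $-1$ eigenspace, so the orthogonal projector onto $\cS^{\perp}$ is $\tfrac12(\I-V)$. Because $\rho\geqslant 0$ and $\tfrac12(\I-V)\geqslant 0$, we have the equivalence $\range{\rho}\subset\cS\iff\trace\!\big(\tfrac12(\I-V)\rho\big)=0\iff\trace(V\rho)=\trace\rho=1$, i.e. $V\rho=\rho$. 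So the whole task reduces to computing the single number $\trace(V\rho)$.

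First I would extract this number straight from the defining relation \cref{symmetrictensor:eq:9}. In the bipartite case $\trace(V\rho)=\sum_{i_1,i_2}\rho_{i_1 i_2,\,i_2 i_1}$, and since a completely symmetric tensor is invariant under any permutation of its indices, it is in particular invariant under the transposition exchanging the two bra indices. Hence
\begin{equation}
  \trace(V\rho)=\sum_{i_1,i_2}\rho_{i_1 i_2,\,i_2 i_1}=\sum_{i_1,i_2}\rho_{i_1 i_2,\,i_1 i_2}=\trace\rho=1,
\end{equation}
so $\trace\!\big(\tfrac12(\I-V)\rho\big)=0$ and positivity of $\rho$ gives $\range{\rho}\subset\cS$.

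To display the role of separability and to dovetail with the S\nobreakdash-separability results that follow, I would rerun the computation on an explicit decomposition. By \cref{real} the separable CS state is separable over the reals, $\rho=\sum_i p_i\proj{a_i,b_i}$ with unit vectors $a_i,b_i\in\real^{N}$ and $p_i>0$. Since $\trace\big(V\proj{a_i,b_i}\big)=(a_i\cdot b_i)^2$, we get
\begin{equation}
  \trace\!\Big(\tfrac12(\I-V)\rho\Big)=\tfrac12\sum_i p_i\big(1-(a_i\cdot b_i)^2\big)\geqslant 0,
\end{equation}
with equality exactly when every $a_i\parallel b_i$. Comparing with the vanishing forced above shows each product vector is already symmetric, $\ket{a_i,b_i}\propto\ket{x_i}^{\otimes 2}\in\cS$; this reproves $\range{\rho}\subset\cS$ and, as a bonus, yields S\nobreakdash-separability in the spirit of \cref{rank1}.

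The step I expect to be the crux is the middle equality in the first display: it uses invariance of $\rho_{I,J}$ under permuting the \emph{bra} indices alone, independently of the ket indices, which is precisely the force of \emph{complete} symmetry and is exactly what is unavailable to a state that merely commutes with $V$. I would therefore state explicitly that \cref{symmetrictensor:eq:9} is being applied to a transposition acting on a single slot, and then note that the two routes above must agree, since both compute $\trace(V\rho)$.
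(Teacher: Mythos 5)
Your argument is correct, and it reaches the conclusion by a genuinely different mechanism than the paper. The paper's proof never leaves the operator level: writing $\rho=\sum_{ijkl}\rho_{ijkl}\ket{i,k}\bra{j,l}$, it invokes the coefficient symmetry $\rho_{ijkl}=\rho_{klij}$ to rewrite $\rho$ as $\tfrac12\sum_{ijkl}\rho_{ijkl}(\ket{ik}+\ket{ki})\bra{jl}$, so that every column of $\rho$ is manifestly a combination of the symmetrized vectors $\ket{ik}+\ket{ki}$. That route uses neither positivity nor separability and so applies verbatim to any CS \emph{operator}. Your route instead collapses the claim to the single scalar $\trace(V\rho)$: you use a different instance of \cref{symmetrictensor:eq:9} (the transposition of the two bra indices, sending $\rho_{i_1i_2,\,i_2i_1}$ to the diagonal entry $\rho_{i_1i_2,\,i_1i_2}$) to get $\trace(V\rho)=\trace\rho$, and then you need $\rho\geqslant 0$ to promote $\trace\big(\tfrac12(\I-V)\rho\big)=0$ to $\tfrac12(\I-V)\rho=0$. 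That positivity step is the only ingredient the paper's proof does not require, and it is harmless here since $\rho$ is a state. Both proofs show that the separability hypothesis in the lemma is not actually used for the containment $\range{\rho}\subset\cS$. Your second computation, $\trace\big(\tfrac12(\I-V)\rho\big)=\tfrac12\sum_i p_i\big(1-(a_i\cdot b_i)^2\big)$, is a bonus: it packages \cref{lem:prodinS} and the bipartite case of \cref{thm:conj2} into one line, i.e.\ it shows directly that every product term in a real separable decomposition must already be symmetric. That is more than the lemma asks for, but it is a clean way to see where separability genuinely enters the surrounding development.
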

\begin{proof}
  For any state $\rho$, it can be written as
  \begin{equation}
    \label{add:eq:2}
    \rho = \sum_{ijkl=0}^{N}\rho_{ijkl}\ket{i,k}\bra{jl}.
  \end{equation}
  Since $\rho$ is CS,
  \begin{equation}
    \label{add:eq:3}
    \rho_{ijkl} = \rho_{klij},
  \end{equation}
  which implies that
  \begin{equation}
    \label{add:eq:4}
    \rho =  \sum_{ijkl=0}^{N}\rho_{ijkl}\ket{ki}\bra{jl}.
  \end{equation}
  Add \cref{add:eq:2} and \cref{add:eq:4}, we have
  \begin{equation}
    \label{add:eq:5}
    \begin{split}
      \rho & = \frac{1}{2}\left( \sum_{ijkl=0}^{N-1}\rho_{ijkl} \ket{ik}\bra{jl}+
        \sum_{ijkl=0}^{N-1}\rho_{ijkl} \ket{ki}\bra{jl}\right),\\
      & = \frac{1}{2}\sum_{ijkl=0}^{N-1}\rho_{ijkl}(\ket{ik}+\ket{ki})\bra{jl},
    \end{split}
  \end{equation}
  which completes our proof.
\end{proof}
Note that  the pure states in the space $\cS$ are invariant under the exchange of particles,  we have the following
lemma.
\begin{lem}
  \label{lem:prodinS}
  Suppose $\ket{x,y}$ is a product vector contained in the space $\cS$, then \(\ket{y} = \lambda \ket{x}\) for some
  $\lambda\in \complex$ if $\ket{x}\in\complex^N$.
\end{lem}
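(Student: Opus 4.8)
The plan is to identify $\cS$ with the symmetric (swap-invariant) subspace of $\myH\otimes\myH$ and then exploit exactly the invariance that the remark preceding the lemma points out. Let $F$ denote the flip operator defined on the natural basis by $F\ket{a,b}=\ket{b,a}$. Each spanning vector $\ket{i,j}+\ket{j,i}$ of $\cS$ is fixed by $F$, so $\cS$ lies inside (and in fact equals) the $+1$-eigenspace of $F$. Consequently every vector of $\cS$, and in particular the given product vector $\ket{x,y}$, satisfies the fixed-point condition $F\ket{x,y}=\ket{x,y}$.

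First I would translate this condition into a statement about the tensor factors. Since $F(\ket{x}\otimes\ket{y})=\ket{y}\otimes\ket{x}$, membership $\ket{x,y}\in\cS$ is equivalent to
\[
  \ket{y}\otimes\ket{x}=\ket{x}\otimes\ket{y}.
\]
Writing $\ket{x}=\sum_i x_i\ket{i}$ and $\ket{y}=\sum_j y_j\ket{j}$ and comparing coefficients of $\ket{i,j}$ on both sides, this is exactly the family of scalar equations $y_i x_j = x_i y_j$ for all $i,j$.

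Next I would extract the desired proportionality from these equations. Because $\ket{x}$ is a nonzero vector in $\complex^N$, I can pick an index $i_0$ with $x_{i_0}\neq 0$ and set $\lambda=y_{i_0}/x_{i_0}$. Substituting $i=i_0$ into $y_i x_j = x_i y_j$ yields $y_j=\lambda x_j$ for every $j$, that is, $\ket{y}=\lambda\ket{x}$, which is precisely the assertion of the lemma.

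There is essentially no deep obstacle here; the content is entirely in the correct identification of $\cS$ as the swap-symmetric subspace, so that its spanning set genuinely characterizes membership through $F$. The only points demanding a little care are the harmless non-degeneracy assumption $\ket{x}\neq 0$ implicit in calling $\ket{x,y}$ a product vector (needed to choose $i_0$ and form $\lambda$), and the observation that $\lambda$ is automatically well defined and independent of the chosen $i_0$, which follows immediately from the same equations.
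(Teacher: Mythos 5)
Your argument is correct and is exactly the route the paper intends: the text simply remarks that vectors in $\cS$ are invariant under the exchange of particles and states the lemma without further proof, and your write-up fills in that one-line observation (identifying $\cS$ with the $+1$-eigenspace of the swap, reading off $y_i x_j = x_i y_j$, and dividing by a nonzero coordinate of $\ket{x}$). No gaps.
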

If $\ket{x,y}$ is real, then $\lambda$ is real in the above lemma.

The following observation shows that any pure state in $\cS$ can be written as the linear combination of symmetric pure
product states.
\begin{lem}
  \label{le:symbidecomp}
  Suppose $\ket{\psi}$ is a symmetric bipartite pure state. Then

  (i) $\ket{\psi}=\sum_i\ket{a_i,a_i}$ where the $\ket{a_i}$'s are orthogonal states.

  (ii) if $\ket{\psi}$ is real then the $\ket{a_i}$'s in (i) can be chosen as real states up to the imaginary unit
  $i$. That is, $\ket{\psi}=\sum_i(-1)^{c_i}\ket{a_i,a_i}$ where $c_i=0$ or $1$ and $\ket{a_i}$ is real.
\end{lem}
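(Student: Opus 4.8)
The plan is to identify the symmetric bipartite pure state with a complex symmetric matrix and to diagonalize it by the appropriate factorization. Writing $\ket\psi=\sum_{i,j=0}^{N-1}M_{ij}\ket{i,j}$, the hypothesis that $\ket\psi$ is symmetric (i.e.\ $\ket\psi\in\cS$) is exactly the statement that the coefficient matrix $M=(M_{ij})$ is symmetric, $M=\trans M$. Indeed $\cS$ is spanned by the vectors $\ket{i,j}+\ket{j,i}$, so membership in $\cS$ forces $M_{ij}=M_{ji}$; conversely any symmetric $M$ produces an element of $\cS$. Thus part (i) and part (ii) reduce to a diagonalization statement for a complex, respectively real, symmetric matrix.

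For part (i), I would invoke the Autonne--Takagi factorization of a complex symmetric matrix: there exist a unitary $U$ and a nonnegative real diagonal matrix $\Sigma=\mathrm{diag}(\sigma_0,\dots,\sigma_{N-1})$, $\sigma_k\ge 0$, with $M=U\Sigma\,\trans U$. Writing $\ket{u_k}=\sum_i U_{ik}\ket{i}$ for the $k$-th column of $U$, the entries read $M_{ij}=\sum_k \sigma_k U_{ik}U_{jk}$, so that $\ket\psi=\sum_k\sigma_k\ket{u_k}\ket{u_k}=\sum_k\sigma_k\ket{u_k,u_k}$. Setting $\ket{a_k}=\sqrt{\sigma_k}\,\ket{u_k}$ gives $\ket\psi=\sum_k\ket{a_k,a_k}$, and since $U$ is unitary the $\ket{u_k}$ are orthonormal, hence the $\ket{a_k}$ are mutually orthogonal (the terms with $\sigma_k=0$ simply drop out). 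This establishes (i).

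For part (ii), when $\ket\psi$ is real the matrix $M$ is real symmetric, and I would replace Takagi's theorem by the ordinary spectral theorem: there exist a real orthogonal $O$ and a real diagonal $\Lambda=\mathrm{diag}(\lambda_0,\dots,\lambda_{N-1})$ with $M=O\Lambda\,\trans O$. The eigenvalues $\lambda_k$ may be negative, so I write $\lambda_k=(-1)^{c_k}|\lambda_k|$ with $c_k\in\{0,1\}$ and set $\ket{a_k}=\sqrt{|\lambda_k|}\,\ket{o_k}$, where $\ket{o_k}$ is the real $k$-th column of $O$. The same computation then yields $\ket\psi=\sum_k(-1)^{c_k}\ket{a_k,a_k}$ with each $\ket{a_k}$ real and the family orthogonal. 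Absorbing the sign into the vector would require a factor $i$ in front of the negative-eigenvalue terms, which is precisely the meaning of ``up to the imaginary unit.''

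The only nontrivial ingredient, and the main obstacle, is the Takagi factorization used in (i): a plain singular value decomposition or Schur/eigendecomposition produces a factorization of the form $U\Sigma V^\dagger$ or $U\Sigma U^\dagger$ involving a \emph{conjugate} transpose, whereas what is needed here is the \emph{transpose} $\trans U$ with the \emph{same} unitary on both sides, so that the two tensor factors coincide and orthonormality of the $\ket{a_k}$ follows. I would either cite Takagi's theorem directly or, for a self-contained argument, derive it from the SVD by reducing to the Hermitian positive-semidefinite matrix $M^\dagger M$. The real case (ii) needs no such care, since the spectral theorem already supplies the symmetric factorization $O\Lambda\,\trans O$ with $O$ real orthogonal.
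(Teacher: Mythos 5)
Your proof is correct and follows essentially the same route as the paper: part (i) via the Takagi (Autonne--Takagi) factorization $M=U\Sigma\,\trans U$ of the symmetric coefficient matrix, and part (ii) via the real spectral theorem $M=O\Lambda\,\trans O$, with the signs of the eigenvalues accounting for the $(-1)^{c_i}$. Your added remark on why an ordinary SVD (with $V^\dagger\neq\trans U$) would not suffice is a useful clarification but does not change the argument.
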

\begin{proof}
  (i) Since $\ket{\psi}$ is a symmetric bipartite pure state, we have
  \begin{eqnarray}
    \ket{\psi}=
    \sum_{i,j}
    m_{i,j}\ket{i,j}	
  \end{eqnarray}  
  with $m_{i,j}=m_{j,i}$ for any $i,j$.  So $M:=(m_{i,j})$ is a symmetric matrix. Using Takagi's factorization we have
  \begin{eqnarray}
    \label{eq:m=udut}
    M=UDU^\intercal	
  \end{eqnarray}
  for a unitary operator $U$ and a diagonal semidefinite positive  operator $D=\mathrm{diag} (d_1,d_2,\cdots)$.  We have
  \begin{align}
    \ket{\psi}
    &= I\otimes M       \sum_i \ket{ii}
    \\\notag\\
    &= I \otimes U D U^\intercal\sum_i \ket{ii}
    \\\notag\\
    &= U \otimes U D \sum_i \ket{ii}
    \\\notag\\
    &= \sum_i d_i 
      U\ket{i}
      \otimes
      U\ket{i}
    \\\notag\\
    &= \sum_i
      \ket{a_i}
      \otimes
      \ket{a_i},
  \end{align}
  where $\ket{a_i}=\sqrt{d_i}U\ket{i}$. So assertion (i) holds.

  (ii) Since $M$ is a real symmetric matrix, using the eigen decomposition, we can choose $U$ as an orthogonal matrix,
  and $D$ as a real diagonal matrix in~Eq.\eqref{eq:m=udut}. Then we repeat the proof of assertion (i). So assertion
  (ii) holds.
\end{proof}

With the above lemma, we can then prove \cref{conj2} is true for arbitrary multipartite system.
\begin{thm}
  \label{thm:conj2}
  \cref{conj2} is true in arbitrary multipartite system.
\end{thm}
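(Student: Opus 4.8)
The plan is to turn the separable decomposition into product vectors and then force each such vector to be a \emph{symmetric} product by applying the bipartite results \cref{lem:range} and \cref{lem:prodinS} pairwise. First I would invoke \cref{real}: since $\rho$ is a CS state that is separable over $\complex$, it is separable over $\real$, so I may write
\begin{equation*}
  \rho = \sum_i \ket{\phi_i}\bra{\phi_i}, \qquad \ket{\phi_i} = \ket{x_i^{(1)},\ldots,x_i^{(d)}},\quad x_i^{(k)}\in\real^N,
\end{equation*}
a finite sum of real pure product states, where (discarding vanishing terms) every factor $x_i^{(k)}\neq 0$.

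The core step is to show that, for each $i$, the local factors $x_i^{(1)},\ldots,x_i^{(d)}$ are pairwise proportional. Fix a pair of parties $a\neq b$ and trace out the remaining $d-2$ subsystems (the empty set when $d=2$). Because each $\ket{\phi_i}$ is a product vector, the reduced state is
\begin{equation*}
  \rho_{ab} = \sum_i \mu_i\, \ket{x_i^{(a)},x_i^{(b)}}\bra{x_i^{(a)},x_i^{(b)}}, \qquad \mu_i = \prod_{k\neq a,b}\norm{x_i^{(k)}}^2 > 0,
\end{equation*}
which is manifestly separable on $\myH\otimes\myH$, and is CS by \cref{lem:reduce}. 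The strict positivity $\mu_i>0$ guarantees that each $\ket{x_i^{(a)},x_i^{(b)}}$ genuinely lies in $\range{\rho_{ab}}$. By \cref{lem:range} we have $\range{\rho_{ab}}\subset\cS$, so \cref{lem:prodinS} applies to these product vectors and yields $x_i^{(b)} = \lambda\, x_i^{(a)}$ for some scalar $\lambda$ (depending on $i,a,b$); the remark following \cref{lem:prodinS} ensures $\lambda$ is real since the vectors are real.

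Running this over all pairs, in particular taking $a=1$, gives $x_i^{(k)} = \lambda_{i,k}\, x_i^{(1)}$ for every $k$, so every $\ket{\phi_i}$ collapses to $\ket{\phi_i} = c_i\,\ket{x_i}^{\otimes d}$ with $x_i := x_i^{(1)}\in\real^N$ and $c_i\in\real$. Substituting back,
\begin{equation*}
  \rho = \sum_i c_i^2\, \ket{x_i}^{\otimes d}\bra{x_i}^{\otimes d},
\end{equation*}
and after normalizing each $x_i$ to unit length and rescaling the weights (routine, using $\trace\rho=1$) this is precisely a convex combination of real symmetric pure product states. Hence $\rho$ is S-separable, proving \cref{conj2}.

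I expect no single step to be genuinely hard here, since the real content has already been isolated in the bipartite lemmas; the point requiring care is the uniform bipartite reduction. One must verify that tracing out the complementary parties carries the product decomposition of $\rho$ to an honest separable CS state on $\myH\otimes\myH$ to which \cref{lem:range} applies, and that the individual product vectors $\ket{x_i^{(a)},x_i^{(b)}}$ survive into $\range{\rho_{ab}}$ — this is exactly where the observation $\mu_i>0$ is needed. Everything else is a direct chaining of the established bipartite facts and the proportionality transfer across all pairs.
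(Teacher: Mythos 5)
Your proposal is correct and follows essentially the same route as the paper: reduce to each pair of parties via the partial trace, note the reduced state is a separable CS bipartite state whose range lies in $\cS$ (\cref{lem:range}), and apply \cref{lem:prodinS} to force pairwise proportionality of the local factors. Your explicit remark that the positive weights $\mu_i>0$ place each $\ket{x_i^{(a)},x_i^{(b)}}$ in $\range{\rho_{ab}}$ makes precise a step the paper leaves implicit, but the argument is the same.
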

\begin{proof}
  Suppose $\rho$ is a separable CS states in the multipartite system $\myH^{\otimes d}$. Then it admits a 
  decomposition
  \begin{equation}
    \rho = \sum_k\lambda_k\ket{\phi_k}\bra{\phi_k},	
  \end{equation}
  where $\ket{\phi_k} $ is a product vector
  \begin{equation}
    \ket{\phi_k} = \ket{x_{k1},x_{k2},\ldots,x_{kd}}.
  \end{equation}
	
  According to \cref{real}, we can assume that $\ket{x_{ki}}\in\real^N$.
	 
  In order to prove that $\rho$ is S-separable, it suffices to prove $\ket{x_{ki}}$ and $\ket{x_{kj}}$ are identical up
  to a real scalar multiplication.

  Take trace to $d-2$ parties except the $i,j$-th subsystem, the reduced state
  \begin{equation}
    \label{multiSsepv5:eq:4}
    \sigma = \sum_i\lambda_i\ket{x_{ki},x_{kj}}\bra{x_{ki},x_{kj}}
  \end{equation}
  is still CS and $\ket{x_{ki},x_{kj}}\in\cS$. By \cref{lem:prodinS},
  \(\ket{x_{kj}} = \lambda_{ki} \ket{x_{ki}}\) for some real constant $\lambda_{ki}$. Therefore, $\rho$ is S-separable,
  which leads to the validness of \cref{conj2}.
\end{proof}

\section{Separability of CS states}
\label{sec:rank}

In this section, we consider the entanglement problem of the CS states. In general, this problem is regarded as
NP-hard. Nevertheless, some results have been obtained with respect to the low-rank states both in the bipartite and
multipartite system. As described in the introduction, any PPT states of rank at most $4$ are separable except in the
$3\otimes 3$ and $2\otimes 2 \otimes 2$ case. For the certain two cases, they are separable if and only if they have a
product vector in their range. It is also proved that any supported $M\otimes N$ states of rank $N$ are
separable. However, in the multipartite system, the supported
$N_1\otimes N_2\cdots \otimes N_d (N_1\leqslant N_2\ldots\leqslant N_d)$ state $\rho$ of rank $N_d$ is separable only if
there exists a product vector $\ket{\phi}=\ket{x_1,x_2,\ldots,x_{d-1}}$ such that
\begin{equation}
  \label{multiSsepv5:eq:17}
  \rank{\bra{\phi}\rho\ket{\phi}}=N_d.
\end{equation}

In this part, we will develop stronger results about the entanglement problem of CS states. In addition, all the separable
CS states are necessarily S-separable.
First, for the CS states,  we prove that the full separability is equivalent to bi-separability.

Let $\pi_0$ be the periodic index permutation
\begin{equation}
  \pi_0 (1,2,\ldots,d)\Rightarrow (2,3,\ldots,d,1).
\end{equation} 
Let $\mathcal{F}$ be the subspace spanned by the vectors which is invariant under the periodic permutation:
\begin{equation}
  \label{eq:F}
  \mathcal{F} = \{  \phi\in\real^N:\pi_0^k(\ket{\phi}) =\ket{\phi},\forall k>0 \},
\end{equation}
where
\begin{equation}
  \pi_0^k = \underset{k}{\underbrace{\pi_0\circ\pi_0\circ\cdots\circ\pi_0}}.
\end{equation}

\begin{lem}
  \label{lem:multi-range}
  The range of CS states is contained in  $\mathcal{F}$.
\end{lem}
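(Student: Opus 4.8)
The plan is to mimic the column-symmetrization argument of \cref{lem:range}, replacing the single transposition used there by the full cyclic group generated by $\pi_0$. The target subspace $\mathcal{F}$ is exactly the image of the averaging operator $P := \tfrac1d\sum_{k=0}^{d-1}U_{\pi_0}^k$, where $U_{\pi_0}$ is the permutation operator acting by $U_{\pi_0}\ket{I} = \ket{\pi_0(I)}$. Since $\pi_0$ is a $d$-cycle we have $U_{\pi_0}^d = \I$, so $\{U_{\pi_0}^k\}_{k=0}^{d-1}$ is a unitary representation of $\mathbb{Z}_d$ and $P$ is the orthogonal projection onto the $\pi_0$-invariant subspace, i.e.\ onto $\mathcal{F}$. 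Thus it suffices to show that every column $\rho\ket{J} = \sum_I \rho_{I,J}\ket{I}$ of $\rho$ lies in $\range{P}$.

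First I would record the key structural fact: for a CS state the coefficient $\rho_{I,J}$ is unchanged when the cyclic shift $\pi_0$ is applied to the ket (row) multi-index alone, that is $\rho_{\pi_0(I),J} = \rho_{I,J}$ for all $I,J$. This is the multipartite analogue of the step $\rho_{ijkl}=\rho_{kjil}$ that allowed \cref{lem:range} to swap the two ket indices, and in operator form it reads $U_{\pi_0}\rho = \rho$.

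Granting this, the computation is a direct generalization of \cref{add:eq:5}. For each $k$ I would reindex the sum $\rho = \sum_{I,J}\rho_{I,J}\ket{I}\bra{J}$ by $I\mapsto \pi_0^k(I)$ and use ket-invariance to obtain $\rho = \sum_{I,J}\rho_{I,J}\ket{\pi_0^k(I)}\bra{J}$; averaging over $k=0,\dots,d-1$ would give
\begin{equation}
  \rho = \sum_{I,J}\rho_{I,J}\Bigl(\tfrac1d\sum_{k=0}^{d-1}U_{\pi_0}^k\ket{I}\Bigr)\bra{J} = \sum_{I,J}\rho_{I,J}\,(P\ket{I})\,\bra{J}.
\end{equation}
Hence each column $\rho\ket{J} = \sum_I\rho_{I,J}\,P\ket{I}$ lies in $\range{P}=\mathcal{F}$, so $\range{\rho}\subset\mathcal{F}$. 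That the spanning vectors may be taken real follows because a CS $\rho$ is a real matrix, so its column space is spanned by real vectors.

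The main obstacle is justifying the ket-only invariance $U_{\pi_0}\rho=\rho$ directly from the defining relation $\rho_{\pi(I,J)}=\rho_{I,J}$: one must check that the CS symmetry acts on the row indices independently of the column indices, exactly as was implicitly used in \cref{lem:range}, rather than only through the simultaneous action $\rho_{\pi(I),\pi(J)}=\rho_{I,J}$, which by itself would not force the columns into $\mathcal{F}$. Once this point is settled, the averaging identity and the identification $\range{P}=\mathcal{F}$ are routine.
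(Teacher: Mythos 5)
Your proof is correct and follows essentially the same route as the paper: both rest on the ket-only invariance $\rho_{\pi_0^k(I),J}=\rho_{I,J}$ and then average over the cyclic group generated by $\pi_0$ to place each column of $\rho$ in $\mathcal{F}$. The ``obstacle'' you flag at the end is not one: the CS definition demands $\rho_{\pi(I,J)}=\rho_{I,J}$ for \emph{any} permutation of the full $2d$-tuple $(I,J)$, which in particular includes permutations acting on the row multi-index alone, so the ket-only invariance is immediate.
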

\begin{proof}
  Suppose $\rho$ is a CS states as in \cref{eq:rho-rep}. Note that $\rho$ is CS, hence
  \begin{equation}
    \rho_{\pi_o^k(I),J}=\rho_{I,J},\forall k>0.
  \end{equation}
  Therefore,
  \begin{equation}
    \begin{split}
      \rho & = \frac{1}{d} \sum_{k=0}^{d-1}\left(\sum_{I,J}\rho_{\pi_0^k(I),J}\ket{I}\bra{J}\right),\\
      & = \frac{1}{d} \sum_{I,J}\rho_{I,J} \left(\sum_{k=0}^{d-1}\pi_0^k(\ket{I})\right) (J).
    \end{split}
  \end{equation}
  Since  $\sum_{k=0}^{d-1}\pi_0^k(\ket{I})$ is contained in $\mathcal{F}$, it follows that 
  $\range\rho\subset \mathcal{F}$.
\end{proof}

\begin{thm}
  \label{thm:bi=ful}
  The  CS states are fully separable if and only if they are  bi-separable.
\end{thm}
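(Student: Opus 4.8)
The plan is to prove the two implications separately; the forward direction is immediate and all the content sits in the converse. If $\rho=\sum_k\mu_k\ket{x_{k1},\ldots,x_{kd}}\bra{x_{k1},\ldots,x_{kd}}$ is fully separable, then for \emph{any} bipartition $A:B$ each term factorizes as a product across $A:B$, so $\rho$ is separable in that bipartite system and hence bi-separable. I therefore focus on showing that bi-separability forces full separability.

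Suppose $\rho$ is bi-separable, so there is a nontrivial bipartition $A:B$ and a decomposition $\rho=\sum_k\lambda_k\ket{\psi_k}\bra{\psi_k}$ with $\lambda_k>0$ and each $\ket{\psi_k}=\ket{\alpha_k}_A\otimes\ket{\beta_k}_B$ a product across $A:B$. Since the weights are positive, every $\ket{\psi_k}$ lies in $\range{\rho}$, and by \cref{lem:multi-range} we have $\range{\rho}\subseteq\mathcal F$, so each $\ket{\psi_k}$ is fixed by the cyclic permutation $\pi_0$. The crux is then a structural claim: a vector that is a product across some nontrivial cut $A:B$ and is fixed by $\pi_0$ must be a symmetric fully product vector $c\,\ket{x}^{\otimes d}$. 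Granting this, each $\ket{\psi_k}\propto\ket{x_k}^{\otimes d}$, whence $\rho=\sum_k\lambda_k'\ket{x_k}^{\otimes d}\bra{x_k}^{\otimes d}$ is a convex combination of pure product states, i.e.\ fully separable (and in fact S-separable, consistent with \cref{thm:conj2}).

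To establish the structural claim I would argue as follows. Because $\ket{\psi}$ is $\pi_0$-invariant and product across $A:B$, it is also product across every cyclically shifted cut $\pi_0^k(A):\pi_0^k(B)$. A proper nonempty subset cannot be fixed by $\pi_0$, so $A$ and $\pi_0(A)$ are distinct subsets of equal size and hence cross. The standard refinement fact for pure states—a vector whose marginals on two crossing subsets are both pure is a product across the partition they generate—then lets me split off individual tensor factors; iterating over all cyclic shifts peels off every party, so $\ket{\psi}=\ket{y_1}\otimes\cdots\otimes\ket{y_d}$ is fully product. Finally, $\pi_0$-invariance gives $\ket{y_{k+1}}\propto\ket{y_k}$ for all $k$, so all factors coincide up to scalars and $\ket{\psi}\propto\ket{x}^{\otimes d}$.

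I expect the main obstacle to be the structural claim for an unbalanced cut: when $1<|A|<d-1$ one cannot read full factorization off a single cut and must combine the factorizations coming from all cyclic shifts through the crossing/refinement lemma, handling the degenerate cases in which shifted subsets are nested or disjoint. By comparison the remaining steps—the passage from fully product to symmetric (all local factors proportional), and the bookkeeping that everything may be carried out over $\complex$, so that it is genuine full separability rather than real S-separability that is delivered—are routine.
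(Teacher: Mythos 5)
Your overall route is the same as the paper's: the forward direction is immediate, and for the converse both you and the paper place the pure states of a bi-separable decomposition inside the cyclic-invariant subspace $\mathcal F$ via \cref{lem:multi-range}, then use cyclically shifted cuts together with the purity-of-marginals refinement to peel each product vector down to $\ket{x}^{\otimes d}$. (The paper additionally invokes \cref{real} to pass to a real decomposition first; as you note, this is not needed to obtain full separability over $\complex$.)

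There is, however, one genuine gap: your structural claim --- that any $\pi_0$-invariant vector which is a product across \emph{some} nontrivial cut $A:B$ must equal $c\,\ket{x}^{\otimes d}$ --- is false as stated, and the failure is precisely the ``disjoint'' case you dismiss as routine bookkeeping. Take $d=4$ and $A=\{1,3\}$: every cyclic shift of $A$ is either $A$ itself or its complement, so no crossing pair of cuts is ever produced, and indeed
\begin{equation*}
\ket{\psi}\;=\;\sum_{a,b}\ket{a,b,a,b}\;=\;\Bigl(\sum_a\ket{a}_1\ket{a}_3\Bigr)\otimes\Bigl(\sum_b\ket{b}_2\ket{b}_4\Bigr)
\end{equation*}
is $\pi_0$-invariant and a product across $\{1,3\}:\{2,4\}$, yet is not a full product (its single-party marginals are maximally mixed). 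The same obstruction appears whenever $A$ is a union of cosets of a nontrivial subgroup of $\mathbb{Z}_d$. The repair lies outside the structural claim: since a CS state is invariant under conjugation by every party permutation, separability across an arbitrary cut implies separability across a contiguous cut $\{1,\dots,k\}:\{k+1,\dots,d\}$, and for contiguous cuts $A$ and $\pi_0(A)$ always cross, so your peeling argument then runs to completion. The paper sidesteps this by simply writing the bipartition in contiguous form (also without justification); you should state the reduction explicitly. Alternatively, one could use the stronger fact that the range of a CS state lies in the fully symmetric subspace rather than merely in $\mathcal F$, which rules out the counterexample outright.
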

\begin{proof}
  Suppose that $\rho$ is a CS state in the mulripartite system $\myH^{\otimes d}$. It only needs to prove that the
  bi-separability implies the full separability. Assume $\rho$ is bi-separable under the bipartition
  $1,\ldots,k:k+1,\ldots,d$.
	
  According to \cref{real}, $\rho$ can be written as
  \begin{equation}
    \begin{split}
      \rho& =\sum_i\lambda_i \ket{\varphi_i}\bra{\varphi_i},\\
      &= \sum_i \lambda_i\ket{\phi_i,\psi_i}\bra{\phi_i,\psi_i},
    \end{split}
  \end{equation}
  where $\lambda_i>0, \ket{\varphi_i} = \ket{\phi_i,\psi_i},\ket{\phi_i}\in\real^{N^k},\ket{\psi_i}\in\real^{N^{d-k}} $.
	
  By \cref{lem:multi-range}, each  $\ket{\varphi_i}$ is a symmetric pure state, which is invariant under the
  periodic permutation $\pi_0$.  Hence $\ket{\varphi_i}$ should also be separable under the bipartition
  $\{2,3,\ldots,k+1\}:\{k+2,\ldots,d,1\}$.
	
  Take the trace to the $k+1,k+2,\ldots,d$-th parties, then $\ket{\phi_i}$ is a separable pure product vector under the
  partition $\{2,3,\ldots,k\}:1$. That is, $\ket{\phi_i}$ can be written as
  \begin{equation}
    \ket{\phi_i} = \ket{x_i,\phi_i^{(1)}},
  \end{equation}
  where  $\ket{x_i}\in\real^N$ and $\ket{\phi_i^{(1}}\in\real^{N^{k-1}}$.
	
  Apply  the similar discussion  to $\trace_1(\rho)$, we can conclude that $\varphi$
  is fully separable, which completes our proof.
\end{proof}
With \cref{thm:bi=ful} and \cref{thm:conj2},  any bi-separable CS states are necessarily S-separable.  
The following lemma describes the product vectors in the range or kernel of CS states..
\begin{lem}
  \label{le:prod}
  Suppose $\rho$ is a bipartite CS state, the product vectors $\ket{a,a}\in\range{\rho}$ and
  $\ket{b,c}\in\ker\rho$. Then (i) $\ket{a^*,a^*}\in\range{\rho}$; (ii)
  $\ket{b^*,c},\ket{b,c^*},\ket{b^*,c^*}\in\ker\rho$.
\end{lem}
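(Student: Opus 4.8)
The plan is to reduce everything to two facts about a bipartite CS state $\rho$. The first is that complete symmetry forces invariance under partial transposition on either subsystem, $\rho=\rho^{T_A}=\rho^{T_B}$ (this is exactly the G-invariance recorded above). It follows immediately from the definition: transposing the $k$-th subsystem swaps the pair of indices $(i_k,j_k)$ in $\rho_{I,J}$, which is a single index permutation $\pi$, so $(\rho^{T_{B_k}})_{I,J}=\rho_{\pi(I,J)}=\rho_{I,J}$; the same reasoning applied to the simultaneous swap $I\leftrightarrow J$ shows the coefficients are real. The second fact is the standard positivity identity
\begin{equation}
  \langle x,y^*|\rho^{T_B}|x,y^*\rangle = \langle x,y|\rho|x,y\rangle,\qquad
  \langle x^*,y|\rho^{T_A}|x^*,y\rangle = \langle x,y|\rho|x,y\rangle,
\end{equation}
valid for any product vector and obtained by writing both sides out in the computational basis. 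Since a CS state is PPT, both $\rho^{T_A}$ and $\rho^{T_B}$ are positive semidefinite, so a product vector sits in the kernel of one of them exactly when the corresponding quadratic form vanishes.

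For part (ii) I would start from $\ket{b,c}\in\ker\rho$, which by $\rho\geqslant0$ is equivalent to $\langle b,c|\rho|b,c\rangle=0$. The first identity then gives $\langle b,c^*|\rho^{T_B}|b,c^*\rangle=0$; using $\rho^{T_B}=\rho$ this reads $\langle b,c^*|\rho|b,c^*\rangle=0$, and positivity of $\rho$ yields $\ket{b,c^*}\in\ker\rho$. The second identity together with $\rho^{T_A}=\rho$ gives $\ket{b^*,c}\in\ker\rho$ in the same way. Finally $\ket{b^*,c^*}\in\ker\rho$ follows either by composing the two steps or directly from Hermiticity, since $\langle b^*,c^*|\rho|b^*,c^*\rangle=\overline{\langle b,c|\rho|b,c\rangle}=0$.

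Part (i) is simpler and does not use the partial transpose: because $\rho$ is real, $\range{\rho}$ is closed under entrywise complex conjugation, for if $\ket{v}=\rho\ket{w}\in\range{\rho}$ then $\ket{v^*}=\overline{\rho}\,\ket{w^*}=\rho\ket{w^*}\in\range{\rho}$. Applying this to $\ket{v}=\ket{a,a}$ gives $\ket{a^*,a^*}\in\range{\rho}$. The reason only the symmetric conjugate $\ket{a^*,a^*}$ appears (rather than a half-conjugate) is that $\range{\rho}\subseteq\cS$ and, by \cref{lem:prodinS}, the only product vectors in $\cS$ have the form $\ket{x,x}$.

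The one genuinely load-bearing step is the identity $\rho=\rho^{T_B}$ (and its $T_A$ analogue): it is what converts the output of the positivity identity, which a priori lands in $\ker{\rho^{T_B}}$, back into $\ker\rho$. Without it one could only conclude $\ket{b,c^*}\in\ker{\rho^{T_B}}$, not $\ket{b,c^*}\in\ker\rho$. This is exactly where \emph{complete} symmetry is indispensable: mere exchange (SWAP) symmetry of $\rho$ does not force invariance under partial transpose, so I would be careful to invoke the full index-permutation symmetry — equivalently the G-invariance — rather than only the bipartite swap $\rho_{ijkl}=\rho_{klij}$ used in \cref{lem:range}.
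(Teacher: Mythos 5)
Your proof is correct and follows essentially the same route as the paper's: part (i) from the reality of $\rho$, and part (ii) from the chain $0=\bra{b,c}\rho\ket{b,c}=\bra{b^*,c}\rho\ket{b^*,c}=\bra{b,c^*}\rho\ket{b,c^*}=\bra{b^*,c^*}\rho\ket{b^*,c^*}$, which is exactly the combination of invariance under partial transpose with the positivity identity you spell out. The paper states this more tersely, but the mechanism — and your observation that the full index-permutation symmetry (not mere swap symmetry) is what supplies $\rho=\rho^{\intercal_1}$ — is the same.
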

\begin{proof}
  (i) The assertion follows from the fact that $\rho$ is real.

  (ii) Note that $\rho$ is unchanged under the partial transpose operators, hence, we have
  \begin{eqnarray}
    0&&=\bra{b,c}\rho\ket{b,c}	
        =\bra{b^*,c}\rho\ket{b^*,c}\notag
    \\
     &&=\bra{b,c^*}\rho\ket{b,c^*}
        =\bra{b^*,c^*}\rho\ket{b^*,c^*}.
  \end{eqnarray}	
  So assertion (ii) holds.
\end{proof}

\subsection{Multi-qubit state}
\begin{thm}
  \label{thm:dqubit}
  All the multi-qubit CS states are separable.
\end{thm}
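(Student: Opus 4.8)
The plan is to exploit the fact that, for qubits, the complete-symmetry condition collapses all of the coefficients $\rho_{I,J}$ onto a single one-parameter family, turning $\rho$ into a (scaled) Hankel matrix whose positivity is exactly a truncated moment condition. First I would unwind the definition: complete symmetry means $\rho_{I,J}$ is unchanged under every permutation of the full $2d$-tuple $(i_1,\ldots,i_d,j_1,\ldots,j_d)$ (this is precisely what makes a CS state G-invariant, hence PPT), so for $N=2$ the entry depends only on the total number $s=|I|+|J|$ of $1$'s among all $2d$ indices. Writing $\rho$ in the Dicke basis $\{\ket{\ddk}\}_{k=0}^{d}$ and collecting all $I$ with a fixed number of $1$'s, this gives
\begin{equation}
  \rho = D^{1/2} H D^{1/2},\quad D=\mathrm{diag}(\tbinom{d}{0},\ldots,\tbinom{d}{d}),
\end{equation}
where $H=(h_{m+n})_{m,n=0}^{d}$ is a real \emph{Hankel} matrix. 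Since $D^{1/2}$ is positive definite, $\rho\succeq 0$ is equivalent to $H\succeq 0$.

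The heart of the argument is then a statement about Hankel matrices: every real PSD $(d+1)\times(d+1)$ Hankel matrix admits a finite atomic representation
\begin{equation}
  H=\sum_t w_t\, v(x_t) v(x_t)^{\intercal},\quad w_t>0,\quad v(x)=(1,x,\ldots,x^d)^{\intercal},
\end{equation}
where the nodes $x_t$ range over $\real\cup\{\infty\}$ and $v(\infty):=(0,\ldots,0,1)^{\intercal}$. I would prove this projectively: the cone of PSD Hankel matrices of this size is the dual of the cone of sum-of-squares binary forms, and because every nonnegative binary form of degree $2d$ is a sum of squares, this dual cone is exactly the moment cone generated by the rank-one evaluations $v(x)v(x)^{\intercal}$; a Carathéodory argument then produces finitely many nodes. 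Working on the projective line rather than the affine line is what lets me absorb the possibly-singular and boundary cases and the direction $\ket{1}^{\otimes d}$ uniformly, without invoking a separate rank or flat-extension hypothesis.

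Finally I would translate back. For $\ket{\widetilde{x}}=\ket{0}+x\ket{1}$ (with $\ket{\infty}=\ket{1}$) one checks that $\ket{\widetilde{x}}^{\otimes d}$ has Dicke-basis coordinates $D^{1/2}v(x)$, so that $\ket{\widetilde{x}}^{\otimes d}\bra{\widetilde{x}}^{\otimes d}=D^{1/2}v(x)v(x)^{\intercal}D^{1/2}$ in this basis. Substituting the Hankel decomposition yields
\begin{equation}
  \rho=\sum_t w_t\,\ket{\widetilde{x}_t}^{\otimes d}\bra{\widetilde{x}_t}^{\otimes d},
\end{equation}
with all $\ket{\widetilde{x}_t}\in\real^2$ and $w_t>0$. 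This is a convex combination of real symmetric pure product states, so $\rho$ is S-separable, hence separable, in agreement with \cref{thm:conj2}.

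The step I expect to be the main obstacle is the Hankel representation itself, i.e.\ showing that positivity of $H$ already forces a positive atomic measure. The subtlety is that for the even-order truncated Hamburger problem PSD-ness alone need not yield a representing measure on the affine line; the clean fix is to pass to binary forms on $\mathbb{RP}^1$, where nonnegativity coincides with being a sum of squares and where compactness of the projective line removes the obstruction, at the cost of allowing the boundary node $x=\infty$ (which corresponds to the perfectly legitimate real product state $\ket{1}^{\otimes d}$). An alternative route would bypass the moment theory by combining \cref{thm:bi=ful} with the cited $2\otimes N$ rank results in the bipartition $1:\{2,\ldots,d\}$, but there the full-rank-on-the-symmetric-subspace case (Hankel rank $d+1$) must be handled separately, which is exactly the difficulty the Hankel argument dissolves.
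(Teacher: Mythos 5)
Your proposal is correct, but it takes a genuinely different route from the paper. The paper bipartitions the system as $1:\{2,\ldots,d\}$, observes that a CS state is G-invariant across this cut, proves that any supported $2\otimes N$ G-invariant state is separable (by a kernel-dimension count showing the range always contains a real product vector when the rank exceeds $N$, followed by rank reduction), and then upgrades bi-separability to full separability via \cref{thm:bi=ful}. You instead observe that for qubits the CS condition forces $\rho_{I,J}$ to depend only on the total number of $1$'s in $(I,J)$ (which also puts $\range{\rho}$ inside the Dicke subspace — worth stating explicitly, since it is what justifies working with a $(d+1)\times(d+1)$ matrix), reduce to a PSD Hankel matrix, and invoke the atomic representation of PSD Hankel matrices on $\mathbb{RP}^1$, which you correctly ground in the classical fact that nonnegative binary forms are sums of squares plus a Carath\'eodory argument; the node at infinity handles $\ket{1}^{\otimes d}$. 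This is sound, and it buys something the paper's proof does not: an explicit S-separable decomposition in one step, with no appeal to \cref{thm:bi=ful} or the $2\otimes N$ rank results. In fact your key lemma is precisely the Hankel decomposition the paper later states in Sec.~\ref{subsec:hankel} — but note the paper \emph{derives} that lemma from \cref{thm:dqubit}, so you must (as you do) supply an independent proof via projective duality to avoid circularity. The paper's route, by contrast, yields the stronger intermediate statement that all $2\otimes N$ G-invariant states are separable, which is reused for the symmetric-state corollary in Sec.~\ref{subsec:sym}.
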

\begin{proof}
  In order to prove this result, we claim that any $2\otimes N$ G-invariant quantum states are separable.
	
  Suppose $\rho$ is a G-invariant quantum state supported in the system $2\otimes N$. Since $\rho$ is semidefinite positive,
  $\rank{\rho}\geqslant N$.
	
  If $\rank{\rho}=N$, it is separable according to the result in Ref.\cite{kraus2000separability} since G-invariant
  states are necessarily PPT.\@
	
  Assume  $\rank{\rho}>N$. Then any G-invariant states are separable if
  for any G-invariant states of rank at least $N+1$ contains a real product vector in its range. 
	
  Suppose the kernel of $\rho$ is spanned by the vectors $\ket{f_i} = \ket{0,f_{i0}}+\ket{1,f_{i1}}$. Let
  \begin{equation}
    \ket{x} = \begin{bmatrix}
      x_0\\x_1
    \end{bmatrix},x_0,x_1\in\real.
  \end{equation}
	
  The range of $\rho$ contains a real product vector if and only if the following equations has a real solution:
  \begin{equation}
    \label{eq:realproduct2N}
    (x_0 F_0 +x_1F_1)y = 0,
  \end{equation}
  where
  \begin{equation}
    \begin{split}
      F_1 &= \begin{bmatrix}
	\bra{f_{11}}\\
	\bra{f_{21}}\\
	\vdots\\
	\bra{f_{r1}}
      \end{bmatrix},\\
      F_2 &= \begin{bmatrix}
	\bra{f_{11}}\\
	\bra{f_{21}}\\
	\vdots\\
	\bra{f_{r1}}
      \end{bmatrix},\\
      r& = 2N-\rank{\rho}\leqslant N-1.
    \end{split}
  \end{equation}
  Since $r<N$, the above linear system~\eqref{eq:realproduct2N} has a real solution for any real number $x_1,x_2$.
	
  Suppose $\ket{x_1,y_1}$ is a real product vector in the range of $\rho$, there exists a positive number $\lambda_1$
  such that
  \begin{equation}
    \rho^{(1)}=\rho - \lambda_1\ket{x_1,y_1}\bra{x_1,y_1}>0,
  \end{equation}
  and $\rank{\rho^{(1)}} = \rank{\rho} -1$. By the deduction on the ranks, we can then conclude $\rho$ is separable.
	
  Note that an arbitrary $d$-qubit CS state under the bi-partition $1:\{2,3,\ldots,d\}$ is G-invariant, hence
  bi-separable. According to \cref{thm:bi=ful}, it is separable.
\end{proof}

\subsection{Two-qutrit state}
      
Using the Segre variety, we have the following lemma for bipartite system.
\begin{lem}
  \label{lem:segre}
  Suppose  $ n\geqslant \frac{N(N-1)}{2}+1$, then any $n$-dimensional subspace of $\mathcal{S}$ contains a complex product
  vector.
\end{lem}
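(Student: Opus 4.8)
The plan is to translate the statement into intersection theory on projective space and invoke the projective dimension theorem. First I would identify which product vectors can lie in $\cS$. By \cref{lem:prodinS}, any product vector $\ket{x,y}\in\cS$ satisfies $\ket{y}=\lambda\ket{x}$, so up to a scalar it has the form $\ket{x,x}=\ket{x}^{\otimes 2}$. Hence, after projectivizing, the product vectors of $\cS$ are exactly the points of the Veronese variety $V=v_2(\mathbb{P}^{N-1})$, the image of the quadratic Veronese embedding $\ket{x}\mapsto\ket{x,x}$, sitting inside $\mathbb{P}(\cS)$. This is where the Segre variety enters: $V$ is precisely the intersection of the Segre variety of all product vectors with the linear subspace $\mathbb{P}(\cS)$, so finding a product vector in a subspace $W\subset\cS$ is the same as showing $\mathbb{P}(W)$ meets $V$.

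Next I would set up the dimension bookkeeping. The symmetric subspace has $\dim\cS=\binom{N+1}{2}=\frac{N(N+1)}{2}$, so the ambient projective space $\mathbb{P}(\cS)$ has dimension $m:=\frac{N(N+1)}{2}-1$. The Veronese variety $V$ is irreducible of dimension $N-1$, being the image of $\mathbb{P}^{N-1}$ under an embedding. An $n$-dimensional subspace $W\subset\cS$ corresponds to a projective linear subspace $\mathbb{P}(W)\subset\mathbb{P}(\cS)$ of dimension $n-1$.

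Then I would apply the projective dimension theorem: over the algebraically closed field $\complex$, any two closed subvarieties $X,Y\subset\mathbb{P}^m$ with $\dim X+\dim Y\geqslant m$ have nonempty intersection. Taking $X=V$ and $Y=\mathbb{P}(W)$, the hypothesis reads $(N-1)+(n-1)\geqslant\frac{N(N+1)}{2}-1$, which rearranges to exactly $n\geqslant\frac{N(N-1)}{2}+1$, the assumption of the lemma. Therefore $V\cap\mathbb{P}(W)\neq\emptyset$, and any point of the intersection is a nonzero symmetric rank-one tensor $\ket{x,x}\in W$, i.e.\ a complex product vector contained in $W$, as required.

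The routine steps — that product vectors in $\cS$ are Veronese points, that $\dim V=N-1$, and that $\dim\cS=\frac{N(N+1)}{2}$ — are standard; the one step deserving care is checking that the numerical bound on $n$ is precisely the threshold $\dim V+\dim\mathbb{P}(W)\geqslant m$ at which the theorem forces a nonempty (rather than merely dimension-bounded) intersection, and that the theorem's hypotheses (algebraic closure of $\complex$, both sets genuine projective varieties) indeed hold. The main conceptual obstacle is recognizing that one should work inside $\mathbb{P}(\cS)$ with the Veronese variety rather than inside the full tensor space with the Segre variety, since the latter yields only the weaker threshold $n\geqslant(N-1)^2+1$.
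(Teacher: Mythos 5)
Your proof is correct, and it is essentially the argument the paper gestures at with the phrase ``Using the Segre variety'' but never writes out: the dimension count $(N-1)+(n-1)\geqslant \frac{N(N+1)}{2}-1$ rearranges exactly to $n\geqslant\frac{N(N-1)}{2}+1$, and the projective dimension theorem over $\complex$ then forces $v_2(\mathbb{P}^{N-1})\cap\mathbb{P}(W)\neq\emptyset$. Your one refinement --- replacing the Segre variety in $\mathbb{P}(\complex^N\otimes\complex^N)$ by its intersection with $\mathbb{P}(\cS)$, i.e.\ the Veronese variety --- is exactly what is needed to reach the stated threshold rather than the weaker $(N-1)^2+1$, and is consistent with how the paper applies the lemma (e.g.\ to \rankfour two-qutrit states, where $n=4$).
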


\begin{thm}
  \label{rank5-bi}
  In the bipartite system, any CS states are separable if $N\leqslant 3$.
\end{thm}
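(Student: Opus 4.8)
The plan is to split on $N$ and induct on $\rank{\rho}$. The case $N=2$ is immediate: a $2\otimes 2$ CS state is PPT (every CS state is G-invariant), and every $2\otimes 2$ PPT state is separable, so it remains only to treat $N=3$. Throughout I work in the picture of \cref{lem:range}, where $\range{\rho}\subset\mathcal{S}$ with $\dim\mathcal{S}=\binom{N+1}{2}=6$, so that $\rank{\rho}\le 6$. Since a CS state is Hermitian, G-invariant and permutation-symmetric, one checks that its entries are real (the fact already underlying \cref{le:prod}), so I may take $\rho$ real; by \cref{thm:conj2} it then suffices to decompose $\rho$ into real symmetric product projectors $\proj{a,a}$. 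The base of the induction is $\rank{\rho}\le 3$, where $\rho$ is separable by the known result that PPT states of rank at most $3$ are separable. I also reduce to $\rho$ irreducible over the reals, splitting off summands via the direct-sum lemma.

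For the inductive step assume $\rank{\rho}\ge 4$. Then $\range{\rho}$ is a subspace of $\mathcal{S}$ of dimension at least $4=\tfrac{N(N-1)}{2}+1$, so by \cref{lem:segre} it contains a complex product vector, which by \cref{lem:prodinS} must be symmetric, i.e. of the form $\ket{a,a}$ with $a\in\complex^{N}$. The engine of the induction is a rank-lowering subtraction: if $a$ can be taken real, then $\proj{a,a}$ is itself CS (by the same coefficient computation used to prove invariance under a RILO), and since $\ket{a,\bar a}=\ket{a,a}\in\range{\rho}=\range{\rho^{\Gamma}}$ there is a maximal $\lambda>0$ for which $\rho-\lambda\proj{a,a}$ is still positive and PPT. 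This is again a real CS state of strictly smaller rank, and the induction closes.

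The crux, and the only place where $N\le 3$ enters essentially, is upgrading the complex product vector of \cref{lem:segre} to a real one. Writing $a=u+iv$ with $u,v\in\real^{N}$, the partial conjugate $\ket{a,\bar a}$ has antisymmetric imaginary part $\ket{v,u}-\ket{u,v}$, which lies in $\mathcal{S}$ only if $u\parallel v$; hence for genuinely complex $a$ we have $\ket{a,\bar a}\notin\mathcal{S}\supseteq\range{\rho}$, and the subtraction above no longer preserves the PPT property. I must therefore exhibit a genuine real symmetric product vector in $\range{\rho}$. Setting $\mathcal{L}:=\ker\rho\cap\mathcal{S}$, so $\dim\mathcal{L}=6-\rank{\rho}\le 2$, a real $\ket{a,a}$ lies in $\range{\rho}$ precisely when $a\in\real^{3}$ is a common zero of the quadratic forms $a\mapsto a^{\intercal}La$ for $L\in\mathcal{L}$. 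The obstruction is that one such form might be sign-definite and so have no nonzero real root; I plan to rule this out using positivity of $\rho$ together with the Takagi-type decomposition of \cref{le:symbidecomp}, which forces any $\ket{L}\in\ker\rho$ to be a signed sum of at most three orthogonal $\ket{a_i,a_i}$ and thereby prevents a definite $L$ from sitting in $\ker\rho$ while $\rho\succeq 0$. This existence step is the part I expect to require the most care.

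Conceptually the bound $N\le 3$ is sharp, which is a useful sanity check on the argument: identifying $\mathcal{S}$ with $\mathrm{Sym}^2(\real^{N})$, a symmetric state is S-separable exactly when the positive operator it defines lies in the cone generated by the Veronese squares $\proj{aa^{\intercal}}$, and by conic duality every symmetric positive operator is of this form if and only if every nonnegative quartic form in $N$ variables is a sum of squares. By Hilbert's theorem this holds precisely for $N\le 3$ (the ternary-quartic case), matching the scope of the theorem; the real-product-vector step above is the concrete combinatorial shadow of this phenomenon, and once it is secured the theorem follows by the rank induction.
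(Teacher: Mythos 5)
Your overall architecture coincides with the paper's: dispose of $N=2$ and of rank at most $3$ by known PPT results, and in the two-qutrit case locate a real symmetric product vector $\ket{a,a}$ in $\range{\rho}$, subtract $\lambda\proj{a,a}$ to lower the rank, and induct. (The paper runs this subtraction only for rank $\geqslant 5$, handling rank $4$ via \cref{lem:segre} together with the known criterion that a rank-four PPT state is separable iff its range contains a product vector, for which a \emph{complex} one suffices.) You also correctly isolate the crux: \cref{lem:segre} only supplies a complex product vector, and for genuinely complex $a$ the partial conjugate $\ket{a,\bar a}$ leaves $\mathcal{S}$, so the subtraction would destroy the PPT property. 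The problem is that you do not close this crux. Your reduction of the existence question to finding a common real zero of the quadratic forms $a\mapsto a^{\intercal}La$, $L\in\mathcal{L}:=\ker\rho\cap\mathcal{S}$, is correct, but the two facts needed to finish are left as intentions. First, the claim that a definite $L$ cannot lie in $\ker\rho$ does not follow from writing $L$ as a signed sum of orthogonal $\ket{a_i,a_i}$ via \cref{le:symbidecomp}: that decomposition exists for every real symmetric $L$, definite or not, and involves $\rho$ nowhere. A correct argument is available but different: apply a real ILO $A^{\otimes 2}$ (which preserves the CS property) to normalize the definite form to the identity, so $\ket{L}=\sum_m\ket{m,m}$; then the $\ket{i,k}$ component of $\rho\ket{L}=0$ reads $\sum_m\bra{i,k}\rho\ket{m,m}=0$, which by complete symmetry of the coefficient tensor equals $\sum_m\bra{i,m}\rho\ket{k,m}$; taking $k=i$ gives $\sum_m\bra{i,m}\rho\ket{i,m}=0$, a vanishing sum of nonnegative diagonal entries, forcing $\rho=0$.

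Second, even granting that no element of $\mathcal{L}$ is definite, the case $\rank{\rho}=4$ has $\dim\mathcal{L}=2$, and two indefinite ternary quadratic forms do not obviously share a nonzero real zero; you need Calabi's theorem (for forms in $n\geqslant 3$ variables, absence of a common nonzero real zero forces some linear combination to be definite), which you neither state nor prove, and your text only addresses the obstruction from a single definite form. For $\rank{\rho}=5$ the pencil is one-dimensional and indefiniteness immediately yields a real zero, and for rank $6$ there is nothing to check, so those cases do close once the first point is repaired. As written, then, your proposal is a correct skeleton with the load-bearing existence step missing --- a gap you flag yourself. If you repair it as above (or sidestep rank $4$ the way the paper does), your route becomes a genuine and arguably cleaner alternative to the paper's explicit coordinate computation with the basis $\ket{\phi_0},\dots,\ket{\phi_5}$ and the quadratic $e_0x^2+d_0x-1=0$, and your sum-of-squares duality remark then correctly explains why $N\leqslant 3$ is exactly the right threshold.
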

\begin{proof}
  It has been proved for the case $N\leqslant 2$ in \cref{thm:dqubit}. We need only consider the two-qutrit system.
  Suppose $\rho$ is a $3\otimes 3$ CS state. If $\rank{\rho}\leqslant 3$, then it is separable according to the results
  in Refs.~\cite{Chen_2011red}.  If $\rank{\rho} =4$, then $\rho$ is separable if and only if its range contains a product
  vector. It follows from~ \cref{lem:segre} that $\rho$ is separable.

  Assume $\rank{\rho}\geqslant 5$, we claim that its range must contains a real pure product vector $\ket{x,x}$.  In
  fact, this claim will lead to the separability of $\rho$ of any rank.

  Denote by $\ket{\varphi_i},i=0,1,2,3,4$ the 5 linearly independent vectors which are contained in the range of
  $\rho$. On the other hand, the $\range{\rho}$ can be represented by the following vectors:
  \begin{equation}
    \begin{split}
      \ket{\phi_0} & = \ket{0,0},\\
      \ket{\phi_1} & = \ket{1,1},\\
      \ket{\phi_2 }& = \ket{2,2},\\
      \ket{\phi_3 }& = \ket{0,1}+\ket{1,0},\\
      \ket{\phi_4} & = \ket{0,2}+\ket{2,0},\\
      \ket{\phi_5} & = \ket{1,2}+\ket{2,1}.
    \end{split}
  \end{equation}
  Hence, we have
  \begin{equation}
    \ket{\varphi_{i}} = \sum_{j=0}^{5} a_{ij}\ket{\phi_j},i=0,1,\ldots,4,a_{ij}\in\real.
  \end{equation}
  Since $\ket{\varphi_i},i=0,\ldots,4$,$\ket{\phi_0}$, and $\ket{\phi_1}$ are linearly dependent, there exists a linear
  combination such that 
  \begin{equation}
    \sum_{i=0}^{5}b_i\ket{\varphi_i} = c_0\ket{\phi_0} + d_0\ket{\phi_1}\in\range{\rho},b_i\in\real.
  \end{equation}
  If $c_0d_0=0$, there we can conclude that $\range{\rho}$ contains a real product vector, which completes our
  proof. Otherwise, we can assume $c_0 = 1$. Denote by $\ket{\psi_0} = \ket{\phi_0}+d_0\ket{\phi_1}$.  Similarly, we
  have the following vectors in the range of $\rho$,
  \begin{equation}
    \begin{split}
      \ket{\psi_1}&=\ket{\phi_0} + d_1\ket{\phi_2},\\
      \ket{\psi_2}& =\ket{\phi_1}+ d_2\ket{\phi_2},\\
      \ket{\psi_3}& =\ket{\phi_3} +e_0\ket{\phi_1},\\
      \ket{\psi_4}&= \ket{\phi_5} + e_1\ket{\phi_1},\\
      \ket{\psi_5}& =\ket{\phi_4} + e_2\ket{\phi_2}.
    \end{split}
  \end{equation}
  We can assume that at least one of $d_0,d_1,d_2$ is greater than 0. Otherwise
  $\psi_0 - d_0\psi_2 = \ket{0,0} - d_0d_2\ket{2,2}$ is a vector in $\range{\rho}$ which satisfies the requirement.

  For simplicity,  assume $d_0>0$. Then the two vectors
  \begin{equation}
    \label{multiSsepv1:eq:1}
    \begin{split}
      \psi_0&=\ket{0,0}  + d_0\ket{1,1},
      \psi_3 =\ket{0,1}+\ket{1,0} + e_0\ket{1,1},
    \end{split}
  \end{equation}
  are in the range of $\rho$. Note that by the Schmidt rank of pure states, the vector $x\ket{\psi_0}+\ket{\psi_3}$ is a
  produce vector if and only if
  \begin{equation}
    \label{multiSsepv1:eq:2}
    \mathrm{det}
    \begin{bmatrix}
      x&1\\
      1&e_0x+d_0
    \end{bmatrix}
    = e_0x^2 +d_0x -1 =0.
  \end{equation}
  Since $e_0>0$, there exists a real solution, denoted by $x_0$, of the above equation. Hence
  $x_0\ket{\psi_0}+\ket{\psi_3}$ is a real product vector which is contained in the range of $\rho$, which completes our
  proof.
\end{proof}
\subsection{$\rank{\rho}=N$}
\begin{thm}
  \label{rankN}
  Suppose $\rho$ is a CS state supported in the multipartite system $N^{\otimes d}$. If $\rank{\rho}=N$, then $\rho$ is
  separable.
\end{thm}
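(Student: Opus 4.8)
The plan is to reduce the problem to a single bipartite cut, apply the Kraus--Horodecki low-rank separability theorem there, and then globalise the conclusion with \cref{thm:bi=ful}. By \cref{real} I may work over $\real$, and since $\rho$ is CS it is G-invariant, so $\rho^{\intercal_k}=\rho$ for every party $k$; in particular $\rho$ is PPT across any cut. I then fix the cut $1:\{2,\ldots,d\}$ and regard $\rho$ as a bipartite state on $\myH\otimes\myH^{\otimes(d-1)}$ (that is, $N\otimes N^{d-1}$), which is PPT across this cut because $\rho^{\intercal_{1}}=\rho\succeq0$. Its range lies in $V_1\otimes\range{\trace_1(\rho)}$, where $V_1\subseteq\myH$ is the range of the single-party marginal on party $1$ (obtained by tracing out parties $2,\ldots,d$); supportedness forces $V_1=\myH$, of dimension $N$, while the marginal on the remaining parties, $\trace_1(\rho)$, has range of dimension $m:=\rank{\trace_1(\rho)}$. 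Thus $\rho$ is a supported PPT state living effectively on an $N\otimes m$ system, of rank $N$.

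The heart of the argument is the bound $m\le N$. I would obtain it from the fact that a PPT state can never have rank below that of either marginal, i.e.\ $\rank\rho\ge\rank{\trace_1(\rho)}$ (this is elementary for mixtures of pure product states and persists for all PPT states), whence $m\le\rank\rho=N$. Then $\max(N,m)=N=\rank\rho$, so after relabelling $\rho$ is a supported $m\otimes N$ PPT state with $m\le N$ whose rank equals the larger local dimension $N$; by Horodecki \etal~\cite{horodecki2000operational} (generalising Kraus \etal~\cite{kraus2000separability}) such a state is separable across the cut, i.e.\ $\rho$ is bi-separable. I expect the inequality $m\le N$ to be the main obstacle: it is exactly the point where rank-$N$ minimality, together with the rigid (Hankel-type) structure imposed by complete symmetry, has to be exploited, and I would prefer to prove it intrinsically from the CS structure rather than merely quoting the generic PPT rank inequality, both for self-containedness and because the same bound also drives the alternative argument below.

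Given bi-separability across $1:\{2,\ldots,d\}$, \cref{thm:bi=ful} upgrades $\rho$ to fully separable, and \cref{thm:conj2} then produces an S-separable decomposition $\rho=\sum_k\lambda_k\ket{x_k}^{\otimes d}\bra{x_k}^{\otimes d}$ with $x_k\in\real^N$; for $d=2$ the chosen cut is already the whole system (here $m=N$ by supportedness), so the conclusion is immediate without \cref{thm:bi=ful}. An alternative route that bypasses \cref{thm:bi=ful} is to invoke the multipartite criterion~\eqref{multiSsepv5:eq:17} directly: because $\range\rho$ lies in the symmetric subspace by \cref{lem:multi-range}, I would look for a real symmetric product vector of the form $\ket{x}^{\otimes(d-1)}$ satisfying $\rank(\bra{x}^{\otimes(d-1)}\rho\ket{x}^{\otimes(d-1)})=N$, and the existence of such a full-rank slice is once more guaranteed by the bound $m\le N$.
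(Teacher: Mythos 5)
Your proposal follows essentially the same route as the paper's own proof: fix the bipartition $1:\{2,\ldots,d\}$, apply the supported rank-$N$ PPT separability theorem of Horodecki \etal~\cite{horodecki2000operational} across that cut, and upgrade bi-separability to full separability via \cref{thm:bi=ful}. The only difference is that you explicitly justify $\rank{\trace_1(\rho)}\leqslant N$ via the PPT rank inequality before invoking that theorem, a point the paper passes over (it records only $R\geqslant N$), so your write-up is, if anything, slightly more complete on the same argument.
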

\begin{proof}
  Consider the bi-partition $1:\{2,\ldots,d\}$, then $\rho$ is supported in the system $N\otimes R, R\geqslant
  N$. According to the result in Ref.~\cite{horodecki2000operational}, it is separable. By \cref{thm:bi=ful}, it is
  fully separable, hence S-separable.
\end{proof}
For the CS states, the condition of separability is weaker than what presented in Ref.~\cite{wang2005canonical}.

\subsection{ $\rank{\rho}=N+1$}
\newcommand{\opb}{\oplus_{\kern-1.5pt\raisemath{-2pt}{B}}}
Let us recall a useful result in Ref.~\cite{Chen_2013properties}. Denote by $A,B$ the parties for each subsystem. We
shall refer to $\rho\opb\sigma$ in the sense that the range of $\rho_B$ and $\sigma_B$ have 
no intersection except the zero vector.
\begin{lem}
  \label{Chen:N+1}
  Suppose $\rho$ is an $M\otimes N (2<M\leqslant N) $ PPT state of rank $N+1$, then $\rho$ is either a sum of $N+1$ pure
  products states or 
  \begin{equation}
    \rho = \rho_1\opb\rho_2\opb\cdots\opb\rho_k\opb\sigma,
  \end{equation} 
  where $\rho_i$'s are pure product states and $\sigma$ is a bipartite PPT state of rank $N+1-k$.
\end{lem}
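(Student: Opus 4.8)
The plan is to peel pure product states off $\rho$ one at a time while keeping both $\rho$ and its partial transpose positive semidefinite, and then to read the two alternatives off from how the peeling terminates. Write $\rho^\Gamma$ for the partial transpose on the $A$ side, so that $\rho^\Gamma\geqslant0$ by hypothesis, and let $\rho_B=\trace_A(\rho)$ be the $B$-reduced state, whose range lies in $\complex^N$ and has dimension at most $N$. Throughout I would track how each rank-one subtraction acts on $\range{\rho_B}$, since the $\opb$ notation in the statement is governed entirely by the $B$-ranges.

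The first and most delicate step is to produce a \emph{subtractable} product vector: a product vector $\ket{a,b}\in\range{\rho}$ whose $A$-conjugate partner $\ket{a^*,b}$ lies in $\range{\rho^\Gamma}$. For such a pair there is a largest $\lambda>0$ with $\rho-\lambda\proj{a,b}\geqslant0$ and $(\rho-\lambda\proj{a,b})^\Gamma\geqslant0$, and at that $\lambda$ the rank of $\rho':=\rho-\lambda\proj{a,b}$ drops by exactly one. I would stress that, unlike the generic situation captured by \cref{lem:segre}, existence of such a vector does \emph{not} follow from a naive Segre-variety count: for $M\leqslant N$ an $(N+1)$-dimensional subspace of $\complex^M\otimes\complex^N$ need not meet the Segre variety at all (e.g.\ already $M=3$, $N\geqslant3$ fails the expected-dimension inequality). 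The product vector is instead forced by the PPT hypothesis through the structure of $\ker\rho$, and this is the heart of the argument.

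With the subtraction step available I would apply it repeatedly, obtaining $\rho=\sum_{i=1}^{k}\lambda_i\proj{a_i,b_i}+\sigma$ with $\sigma$ PPT. Two outcomes are possible. In the favourable case the residual becomes a \emph{supported} $M'\otimes N$ PPT state of rank $N$; by the Horodecki rank-$N$ separability theorem \cite{horodecki2000operational} it is a sum of $N$ pure product states, and together with the $k$ states already removed this exhibits $\rho$ as a sum of $N+1$ pure product states. In the remaining case the subtraction stalls because the $B$-range of the current residual becomes disjoint from the $B$-ranges of the product states already removed; invoking \cref{lem:positive_operator} to force the cross blocks to vanish once a $B$-direction is saturated shows this separation is exactly the $B$-direct-sum condition, yielding $\rho=\rho_1\opb\cdots\opb\rho_k\opb\sigma$. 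Since ranks add across a $B$-direct sum and each product summand contributes rank one, the rank arithmetic $N+1=k+\rank{\sigma}$ pins $\rank{\sigma}=N+1-k$.

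The main obstacle is twofold: establishing the product-vector existence under PPT (the point where the naive dimension count breaks down), and giving a clean classification of the stalling behaviour — one must prove that whenever no further compatible product vector can be subtracted, the accumulated product states split off as a genuine $B$-direct summand whose $B$-range meets that of the residual only at the origin, and that the residual rank decreases by precisely one at each successful subtraction so that the final count is exact. The hypothesis $M\geqslant3$ enters only to separate this regime from the qubit case $M=2$, which is disposed of directly by the rank-$N$ result.
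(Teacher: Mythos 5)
First, a point of comparison that matters here: the paper does not prove \cref{Chen:N+1} at all. It is stated as a recalled result, imported verbatim from Ref.~\cite{Chen_2013properties}, where it is established through a lengthy classification of $M\otimes N$ PPT states of rank $N+1$ by their biranks. So there is no in-paper argument to match your proposal against; what you have written is a from-scratch reconstruction, and it must be judged on whether it actually closes the argument.

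It does not. The decisive gap is the one you yourself flag: the existence of a product vector $\ket{a,b}\in\range{\rho}$ with $\ket{a^*,b}\in\range{\rho^\Gamma}$ is not a preliminary step but essentially the entire content of the theorem, and saying it is ``forced by the PPT hypothesis through the structure of $\ker\rho$'' is a restatement of the goal, not a proof. (You correctly observe that a Segre-variety dimension count fails for $M\leqslant N$ with an $(N+1)$-dimensional range, which is exactly why the source paper needs its heavy machinery.) Beyond that, three of your supporting steps are also unsound as stated. At the critical $\lambda$ the rank drop can occur in $(\rho')^\Gamma$ rather than in $\rho'$, so the bookkeeping $N+1=k+\rank{\sigma}$ is not automatic; in the ``favourable case'' the residual of rank $N$ need not remain supported on $M'\otimes N$ with full $B$-local rank $N$, and Horodecki's theorem only applies when the rank equals the $B$-local dimension; and the claim that ``stalling'' of the subtraction forces the already-removed product states to split off as a genuine $\opb$-summand is asserted rather than derived --- the direct-sum alternative in the lemma arises from reducibility of $\rho$ itself (in the sense of \cref{def:reducible}), not from the termination behaviour of a greedy subtraction. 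As it stands the proposal is an accurate map of the difficulties rather than a proof; the honest course here is to do what the paper does and cite Ref.~\cite{Chen_2013properties}.
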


\begin{lem}
  \label{lem:rankN+1-bi}
  Any $N\otimes N$ CS states of rank $N+1$ are separable.
\end{lem}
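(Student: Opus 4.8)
The plan is to induct on the local dimension $N$, using \cref{Chen:N+1} as the structural engine and splitting on whether $\rho$ is reducible over $\real$. Since $\rho$ is CS it is PPT, and by \cref{real} it is separable over $\complex$ if and only if over $\real$, so it suffices to produce a real separable decomposition. I take $\rho$ to be supported in $N\otimes N$ (the standing convention). The base cases $N\leqslant 3$ are already settled: the multi-qubit case by \cref{thm:dqubit} and the two-qutrit case by \cref{rank5-bi}. I therefore assume $N\geqslant 4$ (the argument in fact only needs $N\geqslant 3$, so that $M=N>2$ in \cref{Chen:N+1}) and that the lemma holds in all smaller local dimensions.

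\emph{Reducible case.} Suppose $\rho$ is reducible over $\real$. Refining into irreducible components yields a real $A_2$-direct sum $\rho=\bigoplus_i\rho_i$; by the CS--direct-sum lemma each $\rho_i$ is again a CS state, and because the $B$-ranges are the summands of a direct sum, $\rho_i$ is supported in an $r_i\otimes r_i$ block with $\sum_i r_i=N$ and each $r_i<N$. Positivity forces $\rank{\rho_i}\geqslant r_i$, while $\sum_i\rank{\rho_i}=\rank{\rho}=N+1$; hence exactly one component satisfies $\rank{\rho_i}=r_i+1$ and all others satisfy $\rank{\rho_i}=r_i$. The rank-$r_i$ components are separable by \cref{rankN}, and the unique rank-$(r_i+1)$ component is a strictly smaller instance of the present lemma, hence separable by the induction hypothesis. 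Thus $\rho$ is separable.

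\emph{Irreducible case.} Suppose instead $\rho$ is irreducible over $\real$. Applying \cref{Chen:N+1} with $M=N$, either $\rho$ is a sum of $N+1$ pure product states, in which case it is separable, or $\rho=\rho_1\opb\cdots\opb\rho_k\opb\sigma$ with $k\geqslant1$, the $\rho_i$ pure products and $\sigma$ a PPT state of rank $N+1-k$. I claim the second alternative forces $\rho$ to be reducible over $\real$, contradicting the hypothesis and leaving only the first (separable) alternative. Indeed, the second alternative exhibits $\rho$ as a nontrivial $B$-direct sum over $\complex$; passing to the canonical finest decomposition of $\rho$ into $B$-irreducible complex components and using that $\rho$ is real, complex conjugation permutes these components, so pairing each component with its conjugate collects them into conjugation-invariant, hence real, $B$-blocks. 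A rank count shows this real decomposition is still nontrivial: the finest complex decomposition contains at least one rank-one product component, so either it has at least three components, or it has exactly two, of mismatched ranks $1$ and $N$; in neither case can all components collapse into a single conjugate pair, so at least two real blocks survive. This is exactly an $A_2$-reduction over $\real$, the desired contradiction.

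I expect the last upgrade — from the complex $B$-direct sum supplied by \cref{Chen:N+1} to a genuine real one — to be the main obstacle, since \cref{Chen:N+1} is a statement over $\complex$ whereas the CS--direct-sum machinery (and hence the recursion into CS sub-blocks) requires real components. Most of the care will go into that step: I would invoke \cref{le:prod} to keep the relevant product vectors $\ket{x,x}$ conjugation-paired inside $\range{\rho}$, and use $N\geqslant 3$ together with the rank mismatch between the rank-one product summands and $\sigma$ to guarantee at least two conjugation-closed blocks. The remaining items — the rank bookkeeping $\sum_i r_i=N$ and $\sum_i\rank{\rho_i}=N+1$ in the reducible case, and the verification that each $\rho_i$ is supported in its $r_i\otimes r_i$ block — are routine.
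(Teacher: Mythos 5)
Your overall architecture (induction on $N$ plus a reducible/irreducible dichotomy) is genuinely different from the paper's, and your reducible case is fine, but the irreducible case has a real gap: you read \cref{Chen:N+1} as saying that an $N\otimes N$ PPT state of rank $N+1$ is either a sum of $N+1$ pure product states or a \emph{nontrivial} direct sum ($k\geqslant 1$). The lemma as stated does not give you $k\geqslant 1$; the second alternative with $k=0$ is just $\rho=\sigma$, i.e.\ an entangled PPT state of rank $N+1$ that is $B$-irreducible and is not a sum of product vectors. Such states exist (for $N=3$ the rank-four PPT entangled states are exactly of this kind), so the dichotomy you need is not a formal consequence of \cref{Chen:N+1}. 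Your entire irreducible branch — including the complex-to-real upgrade of the direct sum, which is otherwise a reasonable sketch — never engages with the $k=0$ possibility, and for an irreducible entangled CS state with $k=0$ you derive no contradiction and no conclusion.

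The missing ingredient is precisely the extra content of Theorem 45 of Ref.~\cite{Chen_2013properties} that the paper's proof invokes: if $\rho$ is entangled, the residual state $\sigma$ in the decomposition $\rho=\rho_1\opb\cdots\opb\rho_k\opb\sigma$ satisfies $\rank{\sigma_A}=2$ or $3$. For a supported $N\otimes N$ state with $N\geqslant 4$ this local-rank bound is what forces $k\geqslant 1$ (indeed $k\geqslant N-3$), so your dichotomy is true but only for this reason. The paper uses the same bound more directly: since each $\rho_i$ is a real symmetric pure product state, $\sigma$ is again CS, hence $\rank{\sigma_A}=\rank{\sigma_B}=N-k$ and $\rank{\sigma}=N-k+1\leqslant 4$, and then \cref{rank4} gives separability of $\sigma$ and a contradiction — no induction, no reducibility analysis, and no complex-to-real upgrade needed. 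If you import the local-rank statement of Theorem 45 your argument can be repaired, but at that point the paper's route is both shorter and avoids the one step you yourself flag as the main obstacle.
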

\begin{proof}
  Suppose $\rho$ is an $N\otimes N$ CS state of rank $N+1$, which is necessarily  PPT.\@ To prove the assertion, it suffices to
  assume $N>2$. By \cref{Chen:N+1}, we obtain that
  \begin{eqnarray}
    \rho=
    \rho_1\opb\cdots
    \opb\rho_k
    \opb\sigma,	
  \end{eqnarray}
  where $\rho_i$'s are pure product states.

  Since $\rho$ is CS, one can show that every $\rho_i$ is real symmetric pure product state. So $\sigma$ is CS as well.
                
  Hence, $\sigma$ is a bipartite state of $\rank{\sigma_B} =N-k$ and $\rank\sigma=N-k+1$.  Since $\rho$ is CS, it is
  symmetric. So $\rank{\sigma_A}=N-k$.
		
  We prove the assertion by contradiction. If $\rho$ is entangled, Theorem 45 in Ref.~\cite{Chen_2013properties}  says
  that $\rank{\sigma_A}=2$ or $3$. It means that $\rank\sigma\le4$.  We have a contradiction with the proven fact 
  that every CS state of rank at most four is separable. This completes the proof.
\end{proof}

\begin{thm}
  \label{thm:mul-n+1}
  Any $N^{\otimes d}$ CS states of rank $N+1$ are separable.
\end{thm}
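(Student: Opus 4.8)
The plan is to establish bi-separability of $\rho$ across the cut $1:\{2,\ldots,d\}$ and then invoke \cref{thm:bi=ful} to upgrade this to full separability (which is automatically S-separable by \cref{thm:conj2}). I would argue by induction on the local dimension $N$, with the base cases supplied by the multi-qubit result \cref{thm:dqubit} ($N=2$) and, for $d=2$, by the bipartite \cref{lem:rankN+1-bi}.

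First I would reduce to the irreducible, fully supported situation. If $\rho$ is reducible over real, then by \cref{general-reduce} together with the direct-sum lemma for CS states one may write $\rho=\rho_1\oplus\rho_2$ with each $\rho_i$ a CS state supported on a strictly smaller local subspace $(\real^{n_i})^{\otimes d}$, $n_1+n_2=N$. Since each $\rho_i$ is PPT one has $\rank{\rho_i}\ge n_i$, and because the two local ranges are independent the ranks add, $\rank{\rho_1}+\rank{\rho_2}=N+1$; hence each $\rank{\rho_i}\le n_i+1$, placing every summand back into the hypothesis of the theorem in a smaller dimension. The induction hypothesis, together with \cref{rankN} for the rank-$n_i$ pieces, then closes this case, so it remains to treat $\rho$ irreducible with $\rank{\rho_{A_1}}=N$.

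For the main case I view $\rho$ as a bipartite PPT state across $1:\{2,\ldots,d\}$. Since a PPT state has global rank at least the rank of either marginal, the composite marginal satisfies $b:=\rank{\rho_{\{2,\ldots,d\}}}\le\rank{\rho}=N+1$, so after restricting to local supports $\rho$ is an $N\otimes b$ PPT state of rank $N+1$ with $1\le b\le N+1$; the key point is that this rank inequality tames the a priori large dimension $N^{d-1}$ of the composite party. I would then split on $b$. If $b=N+1$, the state is a supported $N\otimes(N+1)$ PPT state whose rank equals its larger local dimension, hence separable by the Kraus--Horodecki results \cite{kraus2000separability,horodecki2000operational}. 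If $b=2$ the state is $2\otimes N$ and G-invariant (every CS state is G-invariant), so it is separable by the $2\otimes N$ argument inside the proof of \cref{thm:dqubit}; the case $b=1$ is a product across the cut. For $3\le b\le N$ the rank $N+1$ is one more than the larger local dimension $N$, so \cref{Chen:N+1} applies and yields $\rho=\rho_1\opb\cdots\opb\rho_k\opb\sigma$ with the $\rho_i$ pure product states and $\sigma$ PPT of rank $N+1-k$.

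The crux, and the step I expect to be the main obstacle, is to push the symmetry through this $\opb$-decomposition exactly as in the proof of \cref{lem:rankN+1-bi}: the CS structure of $\rho$ should force each product term to be a real symmetric pure product state $\ket{x_i}^{\otimes d}\bra{x_i}^{\otimes d}$ and force the residue $\sigma$ to remain CS. Then $\sigma$ is a CS state of strictly smaller rank, and were $\rho$ (equivalently $\sigma$) entangled across the cut, Theorem~45 of Ref.~\cite{Chen_2013properties} applied to $\sigma$ would force $\rank{\sigma}\le 4$, contradicting the separability of CS states of rank at most four. Hence $\sigma$, and therefore $\rho$, is bi-separable across $1:\{2,\ldots,d\}$. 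In every case $\rho$ is bi-separable, and \cref{thm:bi=ful} upgrades this to full separability. The delicate points to get right are the symmetry-forcing argument on the decomposition and the clean reduction to the supported case; the PPT rank bound on $b$ is what makes the bipartite machinery applicable in the first place.
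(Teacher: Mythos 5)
Your proposal follows essentially the same route as the paper's proof: both work across the cut $1:\{2,\ldots,d\}$, case on the rank of the composite marginal, invoke \cref{Chen:N+1} to peel off pure product states that the CS symmetry (via the periodic subspace $\mathcal{F}$) forces to be of the form $\ket{x}^{\otimes d}$, and then upgrade bi-separability to full separability via \cref{thm:bi=ful}. The only organizational differences are that the paper recurses on the local dimension of the CS residue down to the rank-$4$ base case (\cref{rank4}) instead of invoking Theorem~45 of Ref.~\cite{Chen_2013properties}, and it skips your preliminary reducibility reduction; the symmetry-forcing step you flag as the crux is left at the same level of detail in the paper's own argument.
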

\begin{proof}
  Consider the bi-partition $A_1:A_2,\ldots,A_d$. We have that
  \begin{equation}
    N\leqslant \rank{\trace_1\rho} \leqslant N+1.
  \end{equation}
  If $\rank{\trace_1\rho} =N+1$, then $\rho$ is bi-separable by \cref{Chen:N+1}, which is separable according to
  \cref{thm:bi=ful}.
	
  If $ \rank{\trace_1\rho} =N,N>3$, by \cref{Chen:N+1}, we can find a reducible components $\rho_1$ such that
  \begin{equation}
    \rho = \rho_1 + \sigma,
  \end{equation}
  where $\rho_1$ is a CS pure product state and $\sigma $ is of rank $N$ CS state supported in the ${(N-1)}^{\otimes d}$
  space. Apply the similar discussion recursively on $\sigma$, hence we can assume $\sigma$ is a  $3^{\otimes d}$ \rankfour
  CS state, 
  which is separable  according to \cref{rank4} in the next subsection.
	
  If $ \rank{\trace_1\rho} =N,N\leqslant 2$, it is separable by \cref{thm:dqubit}.
	
  Above all,  any $N^{\otimes d}$ CS states of rank $N+1$ are separable.
\end{proof}
A corollary about the rank-$(N+2)$ CS states follows from \cref{thm:mul-n+1}.
\begin{corollary}
  Suppose $\rho$ is a CS state supported in the multipartite system $N^{\otimes d}$. If $\rank{\rho} = N+2$, then it is
  separable if and only if $\range{\rho}$ contains a real product vector.
\end{corollary}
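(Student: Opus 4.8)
The plan is to prove the two implications separately, the forward one being essentially immediate and the reverse one relying on a single rank-reduction step that feeds into \cref{thm:mul-n+1}.

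For the easy direction, suppose $\rho$ is separable. By \cref{thm:conj2} it is then S-separable, so $\rho=\sum_i p_i\ket{x_i}^{\otimes d}\bra{x_i}^{\otimes d}$ with $\ket{x_i}\in\real^N$ and $p_i>0$. Each $\ket{x_i}^{\otimes d}$ lies in $\range{\rho}$ and is a real product vector, so $\range{\rho}$ contains a real product vector.

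For the reverse direction, assume $\range{\rho}$ contains a real product vector $\ket{\phi}=\ket{x_1,x_2,\ldots,x_d}$. First I would argue that $\ket{\phi}$ is forced to be symmetric. By \cref{lem:multi-range} the range of a CS state is contained in $\mathcal{F}$, so $\ket{\phi}$ is invariant under the periodic permutation $\pi_0$; equating $\ket{x_2,\ldots,x_d,x_1}=\ket{x_1,\ldots,x_d}$ forces each $\ket{x_{k+1}}$ to be proportional to $\ket{x_k}$, whence $\ket{\phi}=\ket{x}^{\otimes d}$ for a single real vector $\ket{x}$ up to a harmless scalar. (Equivalently one may trace out all but two parties and invoke \cref{lem:prodinS} on the resulting bipartite product vector in $\cS$, exactly as in the proof of \cref{thm:conj2}.) Note that $\ket{x}^{\otimes d}\bra{x}^{\otimes d}$ is a real symmetric pure product state and hence CS.

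Now I would run the subtraction argument already used in the proof of \cref{thm:dqubit}. Since $\ket{x}^{\otimes d}\in\range{\rho}$, there is a maximal $\lambda>0$ with $\rho':=\rho-\lambda\ket{x}^{\otimes d}\bra{x}^{\otimes d}\geqslant 0$, and for this maximal $\lambda$ the rank drops by exactly one, so $\rank{\rho'}=N+1$. The operator $\rho'$ is a difference of two CS operators, hence CS, and being positive semidefinite it is automatically PPT. Thus $\rho'$ is an $N^{\otimes d}$ CS state of rank $N+1$, which is separable by \cref{thm:mul-n+1}. Consequently $\rho=\rho'+\lambda\ket{x}^{\otimes d}\bra{x}^{\otimes d}$ is a sum of a separable state and a pure product state, hence separable. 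The step deserving the most care is the claim that the maximal subtraction lowers the rank by \emph{exactly} one, so that \cref{thm:mul-n+1} applies verbatim: this is the standard fact that for $\ket{\phi}\in\range{\rho}$ the critical weight is $\lambda=(\bra{\phi}\rho^{+}\ket{\phi})^{-1}$ with $\rho^{+}$ the pseudoinverse, at which a single eigenvalue is sent to zero; everything else (the forward direction, the symmetry of $\ket{\phi}$, and the CS/PPT-ness of $\rho'$) is routine.
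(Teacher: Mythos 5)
Your forward direction and the reduction of the product vector in $\range{\rho}\subset\mathcal{F}$ to the symmetric form $\ket{x}^{\otimes d}$ are both fine, and your overall route --- subtract the symmetric product vector with the critical weight, drop the rank to $N+1$, invoke \cref{thm:mul-n+1} --- is exactly the argument the paper intends (it offers nothing beyond the remark that the corollary ``follows from \cref{thm:mul-n+1}''). The genuine gap is at the last step: you need $\rho'=\rho-\lambda\proj{x^{\otimes d}}$ to still be \emph{supported} on $N^{\otimes d}$, and nothing guarantees this. The maximal subtraction can lower the local rank from $N$ to $N-1$ (this happens exactly when $\ket{x}\notin\range{(\rho')_{A_1}}$), in which case $\rho'$ is, relative to its supporting space, an $(N-1)^{\otimes d}$ CS state of rank $(N-1)+2$, and \cref{thm:mul-n+1} --- whose proof starts from $N\leqslant\rank{\trace_1\rho}$, i.e.\ supportedness --- says nothing about it. (A minor side point: ``positive semidefinite, hence automatically PPT'' is not a valid implication in general; what actually saves you is that $\rho'$ is CS and therefore equal to its own partial transposes.)

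This is not a repairable technicality, because the statement itself fails without an irreducibility hypothesis. Let $\rho_{\mathrm{ent}}$ be the paper's $4\otimes4$ entangled CS state of rank six from~\eqref{eq:4x4entangled}, whose range contains no product vector, and set $\tilde\rho=\rho_{\mathrm{ent}}+\proj{4,4}$ on $\complex^5\otimes\complex^5$. Then $\tilde\rho$ is CS, supported on $5\otimes5$, has rank $7=N+2$, and its range contains the real product vector $\ket{4,4}$; yet it is a $B$-direct sum of $\rho_{\mathrm{ent}}$ and $\proj{4,4}$, hence entangled by the paper's direct-sum lemma. Your own procedure exhibits the failure: the maximal subtraction of $\proj{4,4}$ leaves exactly $\rho_{\mathrm{ent}}$, whose local rank is $4$, not $5$. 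The fix is to add the hypothesis that $\rho$ is irreducible over real (or to demand a real product vector in the range of every irreducible component): if $\rho$ is irreducible, a drop in local rank after the subtraction would exhibit $\range{\rho_{A_1}}=\range{(\rho')_{A_1}}\oplus\mathrm{span}\{\ket{x}\}$ and hence a reduction of $\rho$, a contradiction; with supportedness of $\rho'$ secured, the rest of your argument goes through.
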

\subsection{ $\rank{\rho}\leqslant 5$}
\label{subsec:rank5}
Before all else, we prove all the rank-$4$ states are separable.
\begin{thm}
  \label{rank4}
  Any CS states of rank at most $4$ in the multipartite system are separable.
\end{thm}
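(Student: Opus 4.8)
The plan is to reduce the problem, via the complete symmetry, to the two already-settled cases of multi-qubit and two-qutrit CS states. First I would record that every CS state is PPT, so a CS state $\rho$ with $\rank{\rho}\leqslant 4$ is governed by the multipartite rank-$4$ PPT analysis of Ref.~\cite{chen2013separability}. If $\rank{\rho}\leqslant 3$ that reference already gives separability, and $S$-separability then follows from \cref{thm:conj2}; so the only real case is $\rank{\rho}=4$. For rank $4$ the same reference states that $\rho$ is separable if and only if $\range{\rho}$ contains a product vector, and, more usefully, that any \emph{entangled} rank-$4$ PPT state must be supported in a two-qutrit ($3\otimes3$) or a three-qubit ($2\otimes2\otimes2$) subspace. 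It therefore suffices to show that no entangled rank-$4$ CS state can sit inside such a subspace.

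The lever here is that all single-party reduced states of a CS state coincide, $\rho_{A_1}=\cdots=\rho_{A_d}$ (cf.\ \cref{reduced}); hence the local support $W:=\range{\rho_{A_k}}$ is one and the same subspace for every party $k$, and $\rho$ is supported on $W^{\otimes d}$. Because a CS state has real entries, each $\rho_{A_k}$ is real symmetric positive semidefinite, so $W$ carries a real basis, and restricting $\rho$ to this common real subspace preserves complete symmetry. Now I would split according to the subspace guaranteed by \cite{chen2013separability}. If $\rho$ lives in a three-qubit subspace, every party's reduced support is at most two-dimensional, forcing $\dim W\leqslant 2$; then $\rho$ is a multi-qubit CS state and is separable by \cref{thm:dqubit}. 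If instead $\rho$ lives in a two-qutrit subspace, its entanglement is bipartite between two qutrit parties while the remaining $d-2$ parties are trivial (reduced rank $1$); since all reduced states are equal, this is possible only when $d=2$, and then $\rho$ is a genuine two-qutrit CS state, separable by \cref{rank5-bi}. Either way the supposedly entangled state is in fact separable, a contradiction.

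The step I expect to be the main obstacle is precisely the reconciliation in the previous paragraph: translating the ``supported in a $3\otimes3$ or $2^{\otimes3}$ subspace'' conclusion of \cite{chen2013separability} into a statement compatible with the full permutation symmetry. The equality of the single-party reduced states is what makes this work, since it collapses the number of nontrivial parties and pins the common local dimension, turning the two exotic subspace cases into the multi-qubit and two-qutrit regimes handled by \cref{thm:dqubit} and \cref{rank5-bi}. Should the bookkeeping on the subspace support prove delicate, a parallel route is available: by \cref{general-reduce} a reducible CS state decomposes as an $A_k$-direct sum of lower-rank CS pieces, each separable by induction on the rank, while for an irreducible CS state one passes to the bipartite picture through \cref{thm:bi=ful} and extracts a real symmetric product vector from the range using \cref{lem:segre} together with \cref{le:prod}, subtracts it, and closes the induction. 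In all cases $\rho$ is separable, and \cref{thm:conj2} upgrades this to $S$-separability.
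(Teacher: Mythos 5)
Your proposal is correct and follows essentially the same route as the paper: invoke the rank-$\leqslant4$ PPT classification of Ref.~\cite{chen2013separability} to reduce to the two exceptional supports, then settle the $2\otimes2\otimes2$ case by \cref{thm:dqubit} and the $3\otimes3$ case by the two-qutrit result (the paper cites \cref{lem:segre} directly, which is exactly what the rank-$4$ step of \cref{rank5-bi} uses). Your extra care in using the equality of the single-party reduced states to justify that the exceptional supports collapse to the multi-qubit and genuinely bipartite two-qutrit regimes is a worthwhile tightening of a step the paper passes over silently, but it does not change the substance of the argument.
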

\begin{proof}
  First, we consider the bipartite system.
	
  Suppose $\rho$ is a CS state in the bipartite system $\myH\otimes \myH$.  In order to prove that $\rho$ is
  S-separable, by \cref{thm:conj2}, it only need to prove $\rho$ is separable.
  
  Note that a CS state is necessarily PPT.\@  In addition, it was proved that the PPT states of rank at most 3 are
  separable in Refs.~\cite{Chen_2011red}. 

  It remains to prove it when the rank is 4. According to the result in Ref.~\cite{Chen_2011red}, only the $3\otimes 3$
  state could be entangled and it is separable if and only if its range contains a product vector.

  By \cref{lem:segre}, the \rankfour states in $3\otimes 3$ must contains a product vector. Therefore, $\rho$ is
  separable.
  
  Finally, we consider the general multipartite system.
  
  By the results in Ref.~\cite{chen2013separability}, all the PPT states of rank at most $3$ are separable. And the rank
  $4$ PPT states are separable except in the $3\otimes 3$ and $2\otimes 2\otimes 2$ cases. It thus only to prove that
  the $2\otimes 2\otimes 2$ \rankfour state is separable. It follows by \cref{thm:dqubit} that our proof completes.
\end{proof}

Furthermore, we can prove that any rank-$5$ CS states are separable.
\begin{thm}
  \label{rabk5}
  Any CS states of rank at most $5$ are separable in the multipartite system.
\end{thm}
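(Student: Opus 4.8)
The plan is to deduce the rank-$5$ case from the lower-rank and special-dimension results already established, organised around the common local rank of $\rho$. Throughout I use that a CS state is PPT and that, by \cref{thm:conj2} and \cref{thm:bi=ful}, it suffices to prove ordinary separability, and in fact bi-separability under a single bipartition. Since \cref{rank4} settles rank at most $4$, only rank exactly $5$ is new. First I would dispose of the reducible case: if $\rho$ is reducible over real then $\rho=\sum_i\rho_i$ is a real $A_k$-direct sum with each $\rho_i$ a CS state of rank strictly less than $5$, hence of rank at most $4$ and separable by \cref{rank4}; so I may assume $\rho$ is irreducible. The reduced states of a CS state coincide and are CS (\cref{lem:reduce}), so the common local rank $s=\rank{\rho_{A_1}}$ satisfies $s\le 5$ and $s^{d}\ge 5$. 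Applying a real ILO, which preserves the CS property, to align $\range{\rho_{A_1}}$ with a coordinate subspace and then restricting $\rho$ to its local supports (exactly as in the proof that the restriction of a CS state to $(\real^{r})^{\otimes d}$ is CS), I obtain a supported CS state in $s^{\otimes d}$ with $s\in\{2,3,4,5\}$.

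Next I would settle the bipartite case $d=2$, which is the engine for everything else. Here $s\ne 2$ because $s^{2}=4<5$, so $s\in\{3,4,5\}$. For $s=5$ the state is a supported $5\otimes 5$ CS state of rank $5=N$, separable by \cref{rankN}; for $s=4$ it is a supported $4\otimes 4$ CS state of rank $5=N+1$, separable by \cref{lem:rankN+1-bi}; and for $s=3$ it is a two-qutrit CS state, separable by \cref{rank5-bi}. Hence every bipartite CS state of rank $5$ is separable, in arbitrary dimension $N$.

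The even and several odd configurations now follow quickly. When $d$ is even, \cref{partition} shows that a balanced bipartition $A:B$ presents $\rho$ as a genuine bipartite CS state of rank $5$ on $N^{d/2}\otimes N^{d/2}$, separable by the bipartite case just proved; separability under this one bipartition yields bi-separability, hence full separability by \cref{thm:bi=ful}. For odd $d$ the local-rank dispatch disposes of $s=2$ (multi-qubit, \cref{thm:dqubit}), $s=4$ (rank $N+1$, \cref{thm:mul-n+1}) and $s=5$ (rank $N$, \cref{rankN}). The single surviving configuration is therefore a supported CS state in $3^{\otimes d}$ of rank $5$ with $d$ odd and $d\ge 3$.

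This last case is where the difficulty concentrates, and I would attack it through the bipartition $\{1\}:\{2,\dots,d\}$. Restricting to local supports turns $\rho$ into a $3\otimes R$ PPT (indeed G-invariant) state of rank $5$, where $R=\rank{\trace_1\rho}$ satisfies $5\le 3R$, so $R\in\{2,3,4,5\}$; moreover $\trace_1\rho$ is a $(d-1)$-partite CS state of rank at most $5$, which is separable by the already-treated even case (as $d-1$ is even), and whose two-party reduction therefore has a real symmetric product vector in its range. When $R=5$ the grouped state is a supported $3\otimes 5$ PPT state of rank $5=N$, separable by the rank-$N$ criterion (\cref{rankN}), and when $R\le 2$ the rank bound together with \cref{rank4} finishes it; in both subcases separability of this cut gives full separability via \cref{thm:bi=ful}. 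The genuine obstacle is $R\in\{3,4\}$, where the grouped state is a $3\otimes 3$ or $3\otimes 4$ PPT state of rank $5$ — precisely the regime in which unrestricted PPT separability is open. Here I would exploit the completely symmetric origin of $\rho$, namely the G-invariance of the cut and the fact that the reduced state is a \emph{separable} CS state, to run the structural dichotomy of \cref{Chen:N+1} and force its decomposition to consist only of product terms. Showing rigorously that the symmetric structure excludes the entangled component $\sigma$ permitted for general PPT states by \cref{Chen:N+1} and by Theorem~45 of Ref.~\cite{Chen_2013properties} is the main obstacle of the proof.
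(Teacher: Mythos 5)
Your case analysis by local rank $s$ is sound and, for $s\in\{2,4,5\}$ and for even $d$, it correctly closes every configuration using previously established results (your use of \cref{partition} to collapse even $d$ to the bipartite case is in fact a tidy alternative to the paper's uniform treatment). But the proposal contains a genuine, and self-acknowledged, gap precisely where the paper's proof does its real work: the supported $3^{\otimes d}$ state of rank $5$ with $\rank{\trace_1\rho}\in\{3,4\}$. Trying to force the dichotomy of \cref{Chen:N+1} on the $3\otimes R$ cut and then "exclude the entangled component $\sigma$" is not the mechanism that closes this case, and you do not supply one.

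The paper's resolution does not argue on the cut at all; it exploits the decomposition of the \emph{reduced} state together with the cyclic-invariance of $\range{\rho}$ (\cref{lem:multi-range}). Since $\trace_1\rho$ is a CS state of rank $R\le 4$, it is separable by \cref{rank4} and hence S-separable by \cref{thm:conj2}, so $\trace_1\rho=\sum_{i=0}^{R-1}\gamma_i\ket{x_i}^{\otimes(d-1)}\bra{x_i}^{\otimes(d-1)}$ with $\ket{x_i}\in\real^3$. For $R=3$ this is automatically reducible over real, hence so is $\rho$ by \cref{general-reduce}, and the irreducible components are CS of rank at most $4$. For $R=4$, either some three of the $\ket{x_i}$ are dependent (again reducibility, then \cref{rank4}), or a real ILO brings them to $\ket{0},\ket{1},\ket{2},\ket{0}+\ket{1}+\ket{2}$; then every $\ket{\phi}\in\range{\rho}$ has the form $\sum_{i}\ket{a_i}\ket{i}^{\otimes(d-1)}+\ket{a_3}\ket{x}^{\otimes(d-1)}$, and imposing $\pi_0$-invariance forces $\ket{a_i}\propto\ket{i}$ and $\ket{a_3}\propto\ket{x}$, so $\rank{\rho}\le 4$ — i.e.\ this subcase is vacuous for a rank-$5$ state. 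That "separable reduction plus membership in $\mathcal{F}$ caps the rank" argument is the missing idea; without it (or an equivalent), your proof does not go through for odd $d\ge 3$ and $N=3$.
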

\begin{proof}
  By \cref{rank4}, it suffices to prove the \rankfive~CS states are separable.
	
  Suppose $\rho$ is a rank-$5$ CS state supported in the system $N^{\otimes d}$. It is easy to show that $N\leqslant 5$.
	
  If $N=5$, by \cref{rankN}, $\rho$ is S-separable.
	
  If $N=4$, consider the bi-partition $A_1:A_2,\ldots,A_d$. Note that $4\leqslant \rank{\trace_1 \rho}\leqslant 5$. If
  $\rank{\trace_1 \rho}=5$, $\rho$ is bi-separable under this partition.  If $\rank{\trace_1 \rho}=4$, $\rho$ is a
  $4\otimes 4$ \rankfive state under this partition. By \cref{Chen:N+1}, there exist a real product vector such
  $\ket{x,y}\in\range{\rho}$, where $\ket{x}\in \myH$ and $\ket{y}\in\otimes^{d-1}\myH$. Since
  $\ket{x,y}\in\mathcal{F}$, where $\mathcal{F}$ is periodic symmetric subspace as defined in \cref{eq:F}, $\ket{x,y}$
  can be written as $\gamma \ket{x}^{\otimes d}$. Therefore, there exists a positive constant $\lambda$ such that
  $\rho - \lambda \ket{x}^{\otimes d}\bra{x}^{\otimes d}$ is CS, semidefinite positive and of rank four, which is separable by
  \cref{rank4}.  Hence $\rho$ is separable.
	
  If $N=3$, consider the bi-partition $A_1:A_2,\ldots,A_d$.  Note that $3\leqslant \rank{\trace_1 \rho}\leqslant 5$.
  $ \rank{\trace_1 \rho}=5$, $\rho$ is bi-separable by \cref{rankN} and hence fully separable.
   
  If $ \rank{\trace_1 \rho}=4$,  $\trace_1\rho$ is S-separable by \cref{rank4}. Moreover, it is a sum $4$ CS pure
  product states:
  \begin{equation}
    \label{multiSsepv2:eq:2}
    \trace_{1}\rho = \sum_{i=0}^{3}\ket{x_i}^{\otimes d-1}\bra{x_i}^{\otimes d-1},
  \end{equation}
  where $\ket{x_i}\in\real^3$. Since $\trace_{1,2}\rho$ is of rank at least $3$, we can assume $\ket{x_i},i=0,1,2$ are
  linearly independent. If $\ket{x_3}$ is a linearly combination of two vectors of $\ket{x_0},\ket{x_1},\ket{x_2}$, then
  it is reducible over 
  real, which implies that $\rho$ is also reducible over real by \cref{general-reduce}. Since all the irreducible
  components are still CS states of smaller ranks, they are separable according to \cref{rank4}. Hence $\rho$ is
  separable. Therefore, we can further assume that any
  three vectors of $\ket{x_i}$ are linearly independent. It implies that there exists  a real ILO such that
  \begin{equation}
    \label{multiSsepv2:eq:1}
    \begin{split}
      &(A^{\otimes (d-1)})(\trace_{1}\rho)(A^{\intercal})^{\otimes (d-1)}) \\
      &=
      \sum_{i=0}^2\gamma_i{(\ket{i}\bra{i})}^{\otimes d-1}+ \gamma_3{(\ket{x}\bra{x})}^{\otimes d-1},
    \end{split}
  \end{equation}
  where $\ket{x} = \ket{0}+\ket{1}+\ket{2}$ and $\gamma_i>0,i=0,1,2,3$.

  For simplicity, we substitute $\rho$ with $( A^{\otimes d})\rho(A^{\intercal})^{\otimes d}$ but still denote this
  state by $\rho$. Suppose $\ket\phi$ is a vector in the range of $\rho$, then it must has the form
  \begin{equation}
    \label{multiSsepv2:eq:4}
    \begin{split}
      \ket\phi & = \sum_{i=0}^{2}\ket{a_i}\ket{i}^{\otimes d-1} + \ket{a_3}\ket{x}^{\otimes d-1}.
    \end{split}
  \end{equation}
  Since $\ket\phi$ is in the space $\mathcal{F}$, the periodic subspace, we have
  \begin{equation}
    \label{multiSsepv2:eq:5}
    \begin{split}
      &\sum_{i=0}^{2}\ket{a_i}\ket{i}^{\otimes d-1} + \ket{a_3}\ket{x}^{\otimes d-1}\\
      & = \sum_{i=0}^{2}\ket{i,a_i}\ket{i}^{\otimes d-2} + \ket{x,a_3}\ket{x}^{\otimes d-2}.
    \end{split}
  \end{equation}
  By solving the above  system of equations, we have
  \begin{equation}
    \label{multiSsepv2:eq:6}
    \ket{a_i}\propto \ket{i},\ket{a_3}\propto\ket{x}.
  \end{equation}
  That is, $\ket\phi$ has the following form
  \begin{equation}
    \label{multiSsepv2:eq:7}
    \ket\phi = \sum_{i=0}^2\lambda_i\ket{i}^{\otimes d} + \lambda_3\ket{x}^{\otimes d},
  \end{equation}
  where $\lambda_i\in\real$. However, the range spanned by these vectors is rank at most 4, which leads to a
  contradiction.

  If $\rank{\trace_1 \rho}=3$, then the reduced states is a $3\otimes 3$ rank 3 states, which is a sum of three real
  pure product states according to the results in Ref.~\cite{horodecki2000operational}. It follows that $\trace_1\rho$
  is reducible over real. By \cref{general-reduce}, $\rho$ is also reducible over real. Note that for all the
  irreducible components, they are CS and of rank at most 4, hence separable.  It has been prove that for $N=2$, the CS
  state is S-separable and it is trivial case when $N=1$. Therefore, $\rho$ is separable in this case.
   
  Above all, any multipartite CS states of rank at most $5$ are separable.
\end{proof}

\subsection{ $\rank{\rho}$= 6}
We have proved that many Cs states of low ranks are separable. Unfortunately, not all CS states are separable although
they have a special structure. Here, we construct a rank-$6$ entangled CS states as follows. Let
\begin{equation}
  \label{eq:sigma}
  \sigma = \sum_{i=0}^6\lambda_{i}\ket{\phi_i}\bra{\phi_i},\lambda_i>0,
\end{equation}
where
\begin{equation}
  \label{multiSsepv5:eq:2}
  \ket{\phi_i} = \ket{x_i,x_i},\text { for } i=0,1,\ldots, 6,
\end{equation}
and
\begin{equation}
  \label{multiSsepv5:eq:3}
  \begin{split}
    \ket{x_i}& = \ket{i},i=0,1,2,3\\
    \ket{x_4}&= \ket{0}+\ket{1}+\ket{2}+\ket{3},\\
    \ket{x_5}&= \ket{0}+2\ket{1}+3\ket{2}+4\ket{3},\\
    \ket{x_6}&= \ket{0}-2\ket{1}+3\ket{2}-4\ket{3}.\\
  \end{split}
\end{equation}
We claim that  that the range of $\sigma$  contains exactly$8$ real symmetric product vectors, that is
$\ket{\phi_i},i=0,1,\ldots,6$ and $\ket{\phi_7} = \ket{x_7,x_7}$, where
\begin{equation}
  \label{multiSsepv5:eq:5}
  \ket{x_7} = \ket{0} - \frac{8}{3}\ket{1}+\ket{2}-\frac{8}{3}\ket{3}.
\end{equation}
In order to be concise, we append the proof in \cref{append}.

Note that there exists a positive constant $\lambda$ such that
\begin{eqnarray}
  \label{eq:4x4entangled}	
  \rho = \sigma-\lambda \ket{\phi_7}\bra{\phi_7} \geqslant 0
\end{eqnarray}
has rank $6$. We claim that $\rho$ is entangled. Otherwise, assume $\rho$ is separable, and thus
S-separable. Since $\range{\rho}\subset\range{\sigma}$, the symmetric product vector contained in $\rho$ must only be
$\ket{\phi_0},\ldots,\ket{\phi_{6}}$. Note that the rank of $\rho$ is $6$, without the loss of generality, we can hence
assume that 
$\rho = \sum_{i=0}^{5}\gamma_i\ket{\phi_i}\bra{\phi_i}$.

Therefore
$\sigma =\sum_{i=0}^{5}\gamma_i\ket{\phi_i}\bra{\phi_i}+ \lambda \ket{\phi_7}\bra{\phi_7}$.  By \cref{eq:sigma},
we have,
\begin{equation}
  \label{multiSsepv5:eq:6}
  \sum_{i=0}^5(\gamma_i-\lambda_i)\ket{\phi_i}\bra{\phi_i}+ \lambda_6\ket{\phi_6}\bra{\phi_6}+  \lambda\ket{\phi_7}\bra{\phi_7}=0.
\end{equation}
Note that the operators $\ket{\phi_i}\bra{\phi_i},i=0,1,\ldots,7$ are linearly independent, hence we have $\lambda_6=0$,
which is a 
contradiction. Therefore, $\rho$ must be entangled.

In fact, for rank $6$ bipartite CS states, we have the following sufficient and necessary condition:
\begin{lem}
  \label{le:lastelm}
  Suppose $\rho$ is CS state of rank six in the bipartite system. Then $\rho$ is separable if and only if the range of
  $\rho$ has a product state.
\end{lem}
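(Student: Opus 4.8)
The plan is to prove the nontrivial direction: if $\rho$ is a rank-$6$ bipartite CS state whose range contains a product vector, then $\rho$ is separable (the converse is immediate, since a separable state is a convex combination of product states lying in its range). By \cref{thm:conj2} it suffices to establish ordinary separability, after which S-separability follows automatically. Since $\rho$ is CS it is PPT, and by \cref{real} I may work entirely over $\real$. First I would locate a \emph{real symmetric} product vector in the range. By \cref{lem:range} we have $\range\rho\subset\cS$, so any product vector $\ket{x,y}\in\range\rho$ must by \cref{lem:prodinS} satisfy $\ket{y}=\lambda\ket{x}$, i.e.\ it is of the form $\ket{x,x}$ up to scalar; and by \cref{le:prod}(i) the conjugate $\ket{x^*,x^*}$ lies in the range as well, so taking real and imaginary parts I can extract an honest real symmetric product vector $\ket{a,a}\in\range\rho$.

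The core of the argument is then a rank-reduction step. Having found $\ket{a,a}\in\range\rho$, I would choose the largest $\mu>0$ such that $\rho':=\rho-\mu\,\ket{a,a}\bra{a,a}\geqslant 0$. Because subtracting a symmetric product operator preserves the defining symmetry of the coefficients, $\rho'$ is again CS, is PPT, and has $\rank\rho'=5$ (the maximal choice of $\mu$ drops the rank by exactly one). By \cref{rabk5}, every CS state of rank at most $5$ is separable, hence $\rho'$ is separable, and therefore $\rho=\rho'+\mu\ket{a,a}\bra{a,a}$ is a sum of a separable state and a product term, so $\rho$ is separable. This reduces the whole lemma to producing a single real symmetric product vector in the range and verifying that the maximal subtraction lowers the rank.

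The step I expect to be the main obstacle is justifying that the maximal $\mu$ yields $\rank\rho'=\rank\rho-1$ rather than a larger drop, and—more delicately—ensuring the subtracted vector can be chosen so that this holds cleanly. The standard argument is that at the critical $\mu$ the vector $\ket{a,a}$ enters the kernel of $\rho'$ while the remaining range stays $5$-dimensional, giving exactly one unit of rank loss; one must check that $\ket{a,a}\notin\ker\rho$ already (otherwise no reduction occurs), which is where the hypothesis $\ket{a,a}\in\range\rho$ is essential, since for a positive semidefinite $\rho$ a vector in the range is never in the kernel unless $\rho=0$ on it. A secondary subtlety is the passage from a possibly complex product vector to a real symmetric one; here I would lean on \cref{le:prod} and the reality afforded by \cref{real} to guarantee that the conjugation-symmetric real combination remains a genuine rank-one product, completing the reduction to the already-established rank-$5$ result.
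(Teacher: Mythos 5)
Your overall skeleton (locate a real symmetric product vector $\ket{a,a}$ in $\range{\rho}$, subtract $\mu\proj{a,a}$ at the critical value of $\mu$ to drop the rank to five, then invoke \cref{rabk5}) is exactly the reduction the paper ultimately uses, and that rank-reduction step is sound: subtracting a rank-one operator lowers the rank by at most one, positivity fails beyond the critical $\mu$ so the rank drops by exactly one, and $\proj{a,a}$ with $\ket{a}$ real preserves the CS property. The genuine gap is in the step you dismiss as a ``secondary subtlety'': extracting a \emph{real} symmetric product vector from a complex one. Write $\ket{x}=\ket{a}+i\ket{b}$ with $\ket{a},\ket{b}$ real. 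The real and imaginary parts of $\ket{x,x}$ are $\ket{a,a}-\ket{b,b}$ and $\ket{a,b}+\ket{b,a}$; when $\ket{a}$ and $\ket{b}$ are linearly independent both of these have Schmidt rank two, so neither is a product vector, and no real linear combination of $\ket{x,x}$ and $\ket{x^*,x^*}$ need be one. Neither \cref{le:prod} (which only places $\ket{x^*,x^*}$ in the range) nor \cref{real} (which presupposes separability --- the very thing being proved) closes this hole, so your argument as written only covers the case where the given product vector is already proportional to a real one.

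That hard case is where the paper spends essentially its entire proof. After normalizing $\ket{x}=\ket{0}+i\ket{1}$ by a real ILO, it takes the two real \emph{non-product} vectors $\ket{00}-\ket{11}$ and $\ket{01}+\ket{10}$ in $\range{\rho}$ and forms the CS Hermitian operator $H=(\ket{01}+\ket{10})(\bra{01}+\bra{10})-(\ket{00}-\ket{11})(\bra{00}-\bra{11})$. A small perturbation $\sigma=\rho+\epsilon H\geq 0$ is then a CS state with $\range{\sigma}\subseteq\range{\rho}$ and $\sigma\not\propto\rho$, and subtracting a critical multiple of $\sigma$ from $\rho$ yields a CS state $\alpha\geq 0$ of rank at most five. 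By \cref{rabk5}, $\alpha$ is a convex sum of real pure product states, which finally places a genuine real product vector in $\range{\rho}$ and allows your subtraction argument to run. Some device of this kind (or another proof that a complex product vector in the range forces a real one) is needed to complete your proposal.
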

\begin{proof}
  Suppose $\ket{x,x}\in\range{\rho}$. The assertion holds if $\ket{x}$ is real. We assume that $\ket{x}$ is not
  proportional to a real vector. Up to a real ILO,  we may assume that $\ket{x}=\ket{0}+i\ket{1}$,
  and $\rho$ is still a CS state of rank six. Hence $\ket{01}+\ket{10},\ket{00}-\ket{11}\in\range{\rho}$. Let
  \begin{eqnarray}
    \label{eq:h}
    \begin{split}
      H&=(\ket{01}+\ket{10})(\bra{01}+\bra{10})\\
      &-(\ket{00}-\ket{11})(\bra{00}-\bra{11})
    \end{split}
  \end{eqnarray}
  be a Hermitian matrix. One can verify that $H$ is a CS matrix. So there exists a small enough $\epsilon>0$ such that
  $\sigma=\rho+\epsilon H=(\rho+\epsilon H)^\intercal\geq 0$. Note that  $\sigma$ is a CS state,
  $\range{\sigma}\subseteq\range{\rho}$, and $\sigma$ is not proportional to $\rho$.  Hence, there exists a large enough
  $\epsilon_1>0$ such that $\alpha=\rho-\epsilon_1\sigma\geq 0$ is a CS state of rank at most five.  We have proven that
  $\alpha $ is separable, so $\alpha$ is a convex sum of real pure product states.  Hence $\range{\rho}$ has a real
  pure product states, and $\rho$ is separable.
\end{proof}
For the arbitrary multipartite system, we have the following sufficient and necessary condition. We omit the proof as it
is similar to the previous ones.
\begin{lem}
  \label{le:lastelm-multi}
  Suppose $\rho$ is CS state of rank six in the multipartite system. Then $\rho$ is separable if and only if the range
  of $\rho$ has a real product vector.
\end{lem}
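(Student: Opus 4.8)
The plan is to mirror the bipartite argument of \cref{le:lastelm}, using the rank-five separability result \cref{rabk5} as the base case. The forward implication is immediate: if $\rho$ is separable then by \cref{thm:conj2} it is S-separable, so $\rho=\sum_i p_i\ket{x_i}^{\otimes d}\bra{x_i}^{\otimes d}$ with each $\ket{x_i}$ real, and every $\ket{x_i}^{\otimes d}$ lies in $\range{\rho}$. For the converse, suppose $\range{\rho}$ contains a product vector. By \cref{lem:multi-range} the range lies in the periodic-symmetric subspace $\mathcal{F}$ of \cref{eq:F}, and a product vector fixed by every $\pi_0^k$ must take the fully symmetric form $\ket{x}^{\otimes d}$ for some $\ket{x}\in\complex^N$; this is exactly the reduction already exploited in the proof of \cref{rabk5}.

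If $\ket{x}$ is proportional to a real vector, I would simply take $\lambda>0$ maximal with $\rho-\lambda\ket{x}^{\otimes d}\bra{x}^{\otimes d}\geqslant 0$: the resulting operator is CS, positive semidefinite, and of rank at most five, hence separable by \cref{rabk5}, so $\rho$ itself is separable. The whole difficulty is therefore confined to the case where $\ket{x}$ is genuinely complex, and the goal there is to manufacture a \emph{real} symmetric product vector in $\range{\rho}$ so as to fall back on the real case.

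To this end I would first apply a real invertible local operator $A^{\otimes d}$ — under which, by the invariance lemma at the start of Section~III, the CS property is preserved — to normalize $\ket{x}=\ket{0}+i\ket{1}$. Since $\rho$ is real (\cref{real}), $\ket{x^*}^{\otimes d}\in\range{\rho}$ as well, so the real and imaginary parts $\ket{u}=\operatorname{Re}\ket{x}^{\otimes d}$ and $\ket{v}=\operatorname{Im}\ket{x}^{\otimes d}$ are real symmetric vectors lying in $\range{\rho}$. Setting $H=\ket{v}\bra{v}-\ket{u}\bra{u}$, each rank-one term is CS because $\ket{u}$ and $\ket{v}$ are symmetric, so $H$ is a Hermitian CS matrix with $\range{H}\subseteq\range{\rho}$. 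For small $\epsilon>0$ the operator $\sigma=\rho+\epsilon H$ stays positive semidefinite (its negative direction is confined to $\range{\rho}$, where $\rho$ is strictly positive), is CS, satisfies $\range{\sigma}\subseteq\range{\rho}$, and is not proportional to $\rho$. A generalized-eigenvalue (pencil) argument on the nested ranges then supplies a largest $\epsilon_1>0$ with $\alpha=\rho-\epsilon_1\sigma\geqslant 0$, and at this threshold the rank drops, so $\alpha$ is a nonzero CS state of rank at most five. By \cref{rabk5}, $\alpha$ is S-separable, hence a convex sum of real symmetric product states $\ket{a_j}^{\otimes d}$, each lying in $\range{\alpha}\subseteq\range{\rho}$. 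This produces the desired real product vector and returns us to the real case, completing the reduction.

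The main obstacle I anticipate is not the rank-reduction pencil step, which is routine once the ranges nest, but the bookkeeping needed to confirm that $H$ is genuinely CS and that the perturbed $\sigma$ remains both CS and positive: this requires checking invariance of the coefficients $\rho_{I,J}$ under \emph{simultaneous} index permutations of $I$ and $J$ across all $d$ tensor factors, which is more delicate than in the bipartite setting. A secondary point to verify carefully is the claim that any product vector in $\range{\rho}$ is forced into the symmetric form $\ket{x}^{\otimes d}$, i.e.\ that the only product vectors fixed by the cyclic permutation $\pi_0$ are the fully symmetric ones; once these two facts are secured, the argument closes by the same descent to \cref{rabk5} used throughout this section.
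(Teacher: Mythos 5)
Your proof of the lemma \emph{as stated} is correct, and the parts that are actually needed coincide with what the paper intends by ``similar to the previous ones'': the forward direction via \cref{thm:conj2}, and the converse by observing that a real product vector in $\range{\rho}\subseteq\mathcal{F}$ must have the form $\ket{x}^{\otimes d}$ with $\ket{x}$ real up to a phase, subtracting the maximal $\lambda\ket{x}^{\otimes d}\bra{x}^{\otimes d}$ to reach a CS positive semidefinite operator of rank five, and invoking \cref{rabk5}.

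However, the bulk of your write-up addresses the case of a genuinely complex product vector, which the hypothesis (a \emph{real} product vector in the range) does not require, and there your justification contains a false step: you claim that $\ket{u}\bra{u}$ and $\ket{v}\bra{v}$ are individually CS ``because $\ket{u}$ and $\ket{v}$ are symmetric.'' For a symmetric vector $\ket{\psi}$, the coefficients of $\ket{\psi}\bra{\psi}$ are invariant only under permutations acting separately on the ket indices and on the bra indices, not under permutations that mix the two, which is what the CS condition demands. Concretely, for $\ket{v}=\ket{01}+\ket{10}$ the coefficient of $\ket{v}\bra{v}$ at $\ket{01}\bra{01}$ (index tuple $(0,0,1,1)$) equals $1$, while at the permuted tuple $(0,1,0,1)$, i.e.\ at $\ket{00}\bra{11}$, it equals $0$. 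The operator $H=\ket{v}\bra{v}-\ket{u}\bra{u}$ is in fact CS, but only through a cancellation between the two terms: $-H=\tfrac12\bigl(\ket{x}^{\otimes d}\bra{x^*}^{\otimes d}+\ket{x^*}^{\otimes d}\bra{x}^{\otimes d}\bigr)$, and the coefficients of $\ket{x}^{\otimes d}\bra{x^*}^{\otimes d}$ are $\prod_k x_{i_k}\prod_l x_{j_l}$, a totally symmetric function of all $2d$ indices. If you repair the step this way, your complex-vector argument seems to run for every $d$ and would then settle precisely the question the paper declares open for $(2n+1)$-partite states in the remark after the lemma (the authors only obtain the ``real''-free version for $2n$ parties, via \cref{partition} and the bipartite \cref{le:lastelm}). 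That portion should therefore be separated from the proof of this lemma and scrutinized independently, rather than presented as if it were required here.
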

Using Lemma \ref{le:lastelm}, one can show that for $2n$-partite $\rho$ Lemma \ref{le:lastelm-multi} still holds if
the condition  ``real'' is removed. However, further exploration needs to know whether it still holds when $\rho$ is a
$(2n+1)$-partite CS state. 

\section{Applications}
\label{sec:app}
In this section, we investigate the application of our results to a few widely useful states in quantum information. In
Sec.~\ref{subsec:sym}, we show that all multipartite PPT symmetric states of rank at most $4$ are separable.  We also
construct a sufficient condition by which an arbitrary multiqubit symmetric state is separable. In
Sec.~\ref{subsec:edge}, we show that the CS state in~\eqref{eq:4x4entangled} is an extreme and edge PP entangled
state. The property is unique for the above CS state, as we also construct a non-extreme and non-edge $4\times4$ PPT
entangled states of rank six, and it is invariant under partial transpose. Further in Sec.~\ref{subsec:nonnegative}, we
show that the entangled state in~\eqref{eq:4x4entangled} become nonnegative states by choosing suitable coefficients
$\lambda_j$'s. We investigate its geometric measure of entanglement, and its asymptotic version. In
Sec. \ref{subsec:hankel}, we highlight the relation between  the symmetric states and the Hankel and Toeplitz matrices.
Our results show that the CS states are intimately 
connected with the fundamentals of quantum information and matrix theory.

\subsection{Symmetric states}
\label{subsec:sym}

Note that CS states are a subclass of symmetric states. In this subsection, we will study the properties of the
symmetric stats. 
By \cref{lem:segre}, we have the following result.
\begin{corollary}
  \label{le:pptrank4=sep}
  All multipartite PPT symmetric states of rank at most $4$ are separable.
\end{corollary}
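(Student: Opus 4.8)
The plan is to reduce to the classification of low-rank multipartite PPT states from Ref.~\cite{chen2013separability} and then use the symmetric structure, via \cref{lem:segre}, to eliminate the two exceptional cases. First I would note that a symmetric state has all local dimensions equal and, by definition, its range lies in the symmetric subspace. Since every multipartite PPT state of rank at most $3$ is separable, the only case requiring work is $\rank{\rho}=4$. For rank four, Ref.~\cite{chen2013separability} shows that an \emph{entangled} PPT state must be supported in a two-qutrit $3\otimes 3$ or a three-qubit $2\otimes 2\otimes 2$ subspace, and that there it is separable if and only if its range contains a product vector. Hence it suffices to exhibit a product vector in the range in each of these two situations.

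For the two-qutrit case, $\rho$ is effectively a $3\otimes 3$ symmetric state, so its range is a $4$-dimensional subspace of the symmetric subspace $\mathcal{S}$, which has dimension $6$. Because $4=\tfrac{N(N-1)}{2}+1$ for $N=3$, \cref{lem:segre} guarantees that any such $4$-dimensional subspace of $\mathcal{S}$ contains a complex product vector. That is exactly the product vector needed, so $\rho$ cannot be entangled and must be separable; note that the rank-four criterion only requires a product vector, not a real one, so the complex vector from \cref{lem:segre} is enough.

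The three-qubit case is immediate: a three-qubit PPT symmetric state is already known to be separable by Ref.~\cite{eckert2002k}. Alternatively, the symmetric subspace of three qubits has dimension $4$, so a rank-four symmetric state fills it entirely; its range then contains every $\ket{x}^{\otimes 3}$ and in particular a product vector, again forcing separability.

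The step I expect to be the main obstacle is the reduction itself: one must check that ``supported in a two-qutrit (resp.\ three-qubit) subspace'' is compatible with the symmetric structure, so that the effective reduced state is genuinely a bipartite (resp.\ tripartite) \emph{symmetric} state whose range sits inside $\mathcal{S}$ (resp.\ the three-qubit symmetric subspace). Once this compatibility is established, \cref{lem:segre} and Ref.~\cite{eckert2002k} close the two cases with no further computation.
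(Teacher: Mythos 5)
Your proposal is correct and follows essentially the same route as the paper: reduce to the rank-four exceptional cases of Ref.~\cite{chen2013separability}, use \cref{lem:segre} to produce a product vector in the range for the $3\otimes3$ case, and invoke Ref.~\cite{eckert2002k} for the $2\otimes2\otimes2$ case. The extra observations you add (that a complex product vector suffices for the rank-four criterion, and that a rank-four three-qubit symmetric state fills the whole symmetric subspace) are sound but not needed beyond what the paper already does.
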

\begin{proof}
  
  Suppose $\rho$ is a separable state of rank at most $4$.  By the results in Ref.~\cite{chen2013separability}, $\rho$
  is separable except $\rho$ is supported in $3\otimes 3$ or $2\otimes 2\otimes 2$.  For the former case, by
  \cref{lem:segre}, $\rho$ contains a product vector in its range, hence separable. For the later case, it is separable
  according to the results in Ref.~\cite{eckert2002k}.
\end{proof}
From the discussion of \cref{thm:dqubit}, the results can be generalized as
\begin{corollary}
  Suppose $\rho$ is symmetric state and $\rho =\rho^{\intercal_1}$ in multi-qubit system, then $\rho$ is separable,
  where $\intercal_1$ is the partial transpose to the first subsystem.
\end{corollary}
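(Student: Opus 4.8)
The plan is to follow the proof of \cref{thm:dqubit} almost verbatim, weakening the hypothesis ``CS'' to the present one (range contained in the symmetric subspace together with $\rho=\rho^{\intercal_1}$) and checking that every step survives. Concretely, I would first upgrade $\rho=\rho^{\intercal_1}$ to full $G$-invariance, then read off bi-separability across the cut $1:\{2,\ldots,d\}$ from the $2\otimes N$ argument already contained in \cref{thm:dqubit}, and finally promote bi-separability to full separability by rerunning the argument of \cref{thm:bi=ful}.

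For the first step, symmetry gives $U_\pi\rho U_\pi^{\intercal}=\rho$ for every permutation $\pi$. Taking $\pi=(1\,k)$ and using that conjugating by a subsystem permutation carries a partial transpose on subsystem $1$ to one on subsystem $k$, the hypothesis $\rho^{\intercal_1}=\rho$ propagates to $\rho^{\intercal_k}=\rho$ for all $k$; since partial transposes on distinct factors commute, $\rho^{\intercal_A}=\rho$ for every subset $A$, so $\rho$ is $G$-invariant under every bipartition. In particular, under the cut $1:\{2,\ldots,d\}$ it is a $2\otimes 2^{d-1}$ state invariant under the partial transpose on the qubit factor, and the claim established inside the proof of \cref{thm:dqubit}—that every $2\otimes N$ $G$-invariant state is separable—shows $\rho$ is bi-separable across this cut. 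Because the rank-reduction step there is carried out with real product vectors, this bi-separable decomposition may be taken real.

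It then remains to turn bi-separability into full separability, and for this I would invoke the argument of \cref{thm:bi=ful} unchanged. That proof uses the complete symmetry of $\rho$ only through two facts: the existence of a real bi-separable decomposition, and the inclusion $\range{\rho}\subseteq\mathcal{F}$, where $\mathcal{F}$ is the subspace fixed by the periodic permutation $\pi_0$ of \cref{eq:F}. The first has just been arranged; the second holds because the range of a symmetric state lies in the fully symmetric subspace, which is contained in the (weaker) cyclically invariant subspace $\mathcal{F}$. Hence each vector $\ket{\varphi_i}$ of the real decomposition is $\pi_0$-invariant, and peeling off one subsystem at a time exactly as in \cref{thm:bi=ful} forces each $\ket{\varphi_i}$ to be a genuine product vector $\ket{x_{i1},\ldots,x_{id}}$; therefore $\rho$ is fully separable.

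The one point that genuinely needs care, and which I expect to be the main obstacle, is the last paragraph: one must confirm that \cref{thm:bi=ful}, although stated and proved for CS states, never uses the full strength of complete symmetry but only ``$\range{\rho}\subseteq\mathcal{F}$ together with a real decomposition''. Once that inspection is done, the two remaining ingredients—the $G$-invariance bookkeeping of the first step and the $2\otimes N$ separability imported from \cref{thm:dqubit}—are routine.
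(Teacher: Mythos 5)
Your proposal is correct and is essentially the paper's intended argument: the paper gives no proof beyond the remark that the corollary follows ``from the discussion of \cref{thm:dqubit}'', and your three steps --- propagating $\rho=\rho^{\intercal_1}$ to invariance under all partial transposes via the permutation symmetry, importing the $2\otimes N$ $G$-invariant separability claim across the cut $1:\{2,\ldots,d\}$, and upgrading bi-separability to full separability through the symmetric-state analogue of \cref{thm:bi=ful} (which the paper itself records as the very next corollary) --- are exactly that discussion made explicit. As a minor bonus, your first step already gives $\rho^{\intercal}=\rho$ and hence that $\rho$ is real, which is precisely what the real-product-vector search inside the $2\otimes N$ argument of \cref{thm:dqubit} tacitly requires.
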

Follows \cref{thm:bi=ful}, for symmetric states, we also have the following result.
\begin{corollary}
  The multipartite symmetric states are separable if and only if bi-separable.
\end{corollary}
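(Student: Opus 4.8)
The forward implication is immediate: a fully separable state is separable across every bipartition, hence in particular bi-separable. The plan is therefore to establish the reverse implication, that bi-separability forces full separability, by adapting the argument of \cref{thm:bi=ful} from CS states to the larger class of symmetric states.

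First I would isolate exactly what the proof of \cref{thm:bi=ful} uses: the range of a CS state lies in the periodic-symmetric subspace $\mathcal{F}$ (\cref{lem:multi-range}), together with the reduction to real product decompositions (\cref{real}). For a general symmetric state $\rho$ the range is by definition contained in the symmetric subspace, and since the symmetric subspace is invariant under \emph{every} permutation, it is in particular invariant under the periodic permutation $\pi_0$ and hence contained in $\mathcal{F}$. Thus the conclusion of \cref{lem:multi-range} holds verbatim for symmetric states with no extra work, and the only ingredient I must do without is \cref{real}.

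Next, assuming $\rho$ is bi-separable across $\{1,\dots,k\}:\{k+1,\dots,d\}$, I would write $\rho=\sum_i\lambda_i\ket{\varphi_i}\bra{\varphi_i}$ with $\lambda_i>0$ and each $\ket{\varphi_i}$ a product vector across this bipartition. Since $\lambda_i>0$, each $\ket{\varphi_i}\in\range{\rho}$, so it lies in the symmetric subspace and is invariant under $\pi_0$. Running the peeling argument of \cref{thm:bi=ful} — using $\pi_0$-invariance to transfer separability to the cyclically shifted bipartition, then tracing out parties $k+1,\dots,d$ so that purity of the reduced state forces a single-party factor off $\ket{\varphi_i}$, and iterating this on $\trace_1\rho$ (again symmetric) — peels off one tensor factor at a time and yields $\ket{\varphi_i}=\ket{x_i}^{\otimes d}$ up to scalars. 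Consequently $\rho=\sum_i\lambda_i\ket{x_i}^{\otimes d}\bra{x_i}^{\otimes d}$ is a convex combination of symmetric pure product states, hence fully separable.

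The one place requiring care, and which I would flag as the main obstacle, is verifying that the peeling step of \cref{thm:bi=ful} is genuinely independent of the appeal to \cref{real}: the original proof carries real vectors throughout via G-invariance, but an arbitrary symmetric state need not be G-invariant, so the constituent $\ket{x_i}$ may be complex. I would therefore check explicitly that each manipulation — separability across the shifted bipartition, purity of the traced-down reduced state forcing a single-party factor, and the recursion down the parties — goes through for complex vectors, which it should, since none of these steps actually invoke realness. Granting this, the corollary follows along precisely the lines of \cref{thm:bi=ful}.
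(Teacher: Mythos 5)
Your proof is correct and follows essentially the same route as the paper, which proves this corollary simply by pointing to \cref{thm:bi=ful}; you have filled in exactly the details that citation leaves implicit, namely that the range of a symmetric state lies in the permutation-invariant subspace (so \cref{lem:multi-range} holds trivially) and that the peeling argument never actually uses the realness supplied by \cref{real}. Your explicit flagging of the latter point is a worthwhile check, since general symmetric states need not be G-invariant, but it does not change the argument.
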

Apply the similar discussion in \cref{thm:mul-n+1} for symmetric states with substituting real symmetric vector with
complex symmetric vector, we can prove the rank-$N$ and rank-$(N+1)$ symmetric states are separable.
\begin{corollary}
  Any multipartite symmetric states supported in the $N^{\otimes d}$ space of rank at most  $N+1$ are separable.
\end{corollary}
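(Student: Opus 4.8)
The plan is to carry over the proofs of \cref{rankN} and \cref{thm:mul-n+1} almost line for line, making two substitutions: the appeal to \cref{real} (which let us work over $\real$ for CS states) is dropped, so every symmetric product vector that appears is now \emph{complex} rather than real; and the reduction from bi-separability to full separability via \cref{thm:bi=ful} is replaced by the bi-separability characterization for symmetric states established just above. I would first settle the rank-$N$ case. Cutting along the bipartition $A_1:\{A_2,\dots,A_d\}$ turns a supported symmetric $\rho$ of rank $N$ into a bipartite state on $N\otimes N^{d-1}$ with $\rank{\rho}=\rank{\rho_{A_1}}=N$, so the low-rank separability theorem of Ref.~\cite{horodecki2000operational} yields separability across the cut, i.e.\ bi-separability, which the preceding corollary promotes to full separability. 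Since $\range{\rho}$ lies in the symmetric subspace, hence in the cyclically invariant space $\mathcal{F}$ of \cref{lem:multi-range}, each bipartite product vector $\ket{x,y}$ in the resulting decomposition can be written $\gamma\,\ket{x}^{\otimes d}$ exactly as in \cref{thm:mul-n+1}, so the decomposition is a convex sum of complex symmetric product states.

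For rank $N+1$ I would again cut along $A_1:\{A_2,\dots,A_d\}$ and branch on $\rank{\trace_1(\rho)}\in\{N,N+1\}$, mirroring \cref{thm:mul-n+1}. If $\rank{\trace_1(\rho)}=N+1$, then \cref{Chen:N+1} makes the cut separable, so $\rho$ is bi-separable and hence separable. If $\rank{\trace_1(\rho)}=N$, I would invoke \cref{Chen:N+1} to split off a pure product summand $\rho_1$; because $\range{\rho}\subseteq\mathcal{F}$ this summand is a complex symmetric product state proportional to $\ket{x}^{\otimes d}$, and the remainder $\sigma=\rho-\rho_1$ is again symmetric, now supported in $(N-1)^{\otimes d}$ and of rank $N$. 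Iterating this step lowers the local dimension until one lands on a $3^{\otimes d}$ symmetric state of rank at most four — separable by \cref{le:pptrank4=sep} — or on a multiqubit state handled by the earlier corollaries; collecting the product summands peeled off along the way exhibits $\rho$ as a convex combination of complex symmetric product states.

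The step I expect to be the main obstacle is that every bipartite input above — the rank-$N$ theorem of Ref.~\cite{horodecki2000operational}, \cref{Chen:N+1}, and \cref{le:pptrank4=sep} — carries a PPT hypothesis. For CS states this was automatic, but a general symmetric state need not be PPT, and nothing in the bipartition $A_1:\{A_2,\dots,A_d\}$ makes the cut PPT by itself. The crux is therefore to supply a preliminary lemma, in the spirit of the remark that CS states are automatically PPT, asserting that a \emph{supported} symmetric state of rank at most $N+1$ in $N^{\otimes d}$ is PPT; the natural attack is to combine the constraint $\rank{\rho}=\rank{\rho_{A_1}}=N$ with \cref{lem:positive_operator} and the permutation invariance of the entries $\rho_{I,J}$ to rule out a negative partial transpose. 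Should that fail to hold in full generality, the honest fallback is to inherit the PPT hypothesis explicitly, exactly as \cref{le:pptrank4=sep} does; once PPT is in place, the two-case induction above reproduces the CS arguments verbatim.
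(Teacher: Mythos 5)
Your reduction is exactly the one the paper intends: it gives no separate proof of this corollary beyond the instruction to rerun Theorem~\ref{rankN} and Theorem~\ref{thm:mul-n+1} with complex symmetric product vectors in place of real ones, which is precisely your two-case induction, with the bi-separable-implies-separable step supplied by the preceding corollary for symmetric states. The substantive content of your proposal is therefore the obstacle you flag at the end, and you are right that it is genuine: every bipartite ingredient you invoke (the rank-$N$ theorem of Ref.~\cite{horodecki2000operational}, Lemma~\ref{Chen:N+1}, Corollary~\ref{le:pptrank4=sep}) carries a PPT hypothesis, and unlike a CS state a general symmetric state is not automatically PPT. Your hoped-for preliminary lemma is false, and indeed the corollary as literally stated fails without PPT: the supported symmetric two-qutrit state
\begin{equation*}
\rho=\sum_{0\le i<j\le 2}\bigl(\ket{i,j}+\ket{j,i}\bigr)\bigl(\bra{i,j}+\bra{j,i}\bigr)
\end{equation*}
has $\rank{\rho}=3=N$ and $\trace_1\rho=2\,\I$, yet $\rho^{\intercal_1}$ restricted to $\mathrm{span}\{\ket{0,0},\ket{1,1},\ket{2,2}\}$ is the matrix with zero diagonal and unit off-diagonal entries, with eigenvalue $-1$; so $\rho$ is NPT and entangled. (Already the rank-one triplet state $\ket{0,1}+\ket{1,0}$ in $2\otimes2$ is a counterexample at rank $1\le N+1$.) Hence your ``honest fallback'' is the only correct reading: the PPT hypothesis must be carried explicitly, exactly as it is in Corollary~\ref{le:pptrank4=sep}, and with that hypothesis restored your argument reproduces the paper's intended proof.
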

%\begin{proof}
 % Suppose $\rho$ is a symmetric state supported in the $N^{\otimes d}$ space and $\rank{\rho}\leqslant N+1$. Consider
 % the bi-partition $A:B=A_1:A_2,\ldots,A_d$. If $\rank{\rho}=N$, then $\rho$ is bi-separable according to the result in
 % Ref~\cite{horodecki2000operational}. Thus, we assume $\rank{\rho}=N+1$. According to 
%\end{proof}

\subsection{Edge and extreme PPT entangled states}
\label{subsec:edge}
The edge and extreme PPT entangled states help understand the structure of PPT entangled states and separability
problem. Thus their construction is an important problem in quantum information. So far, the progress towards the problem
is little due to mathematical difficulty.

It follows from Lemma \ref{le:lastelm} that the range of the $4\times4$ PPT entangled state $\rho$ of rank six has no
product vector. So it is an edge PPT entangled state. An example is the state in~\eqref{eq:4x4entangled}.

We further claim that every $\rho$ is an extreme PPT entangled state~\cite{Chen_2013properties}. We prove the claim by
contradiction. Suppose some $\rho$ is not extreme, so $\rho=\alpha+\beta$, where $\alpha$ and $\beta$ are both PPT
entangled states, and neither of them is proportional to $\rho$. Thence$\range{\alpha},\range{\beta}\subseteq\range{\rho}$. We may choose a large enough $x>0$ such that $\sigma:=\rho-x\alpha$
is a PPT state of rank at most five. Since $\range{\sigma}\subseteq\range{\rho}$, the former does not have any product
vector. Therefore,  $\sigma$ is a PPT entangled state. The only possibility is that, $\sigma$ is a two-qutrit state of rank
four. It is a contradiction with Lemma~\ref{le:pptrank4=sep}. Consequently,  $\rho$ is an extreme PPT entangled
state. We thus  conclude the above findings as follows.

\begin{lem}
  \label{le:4x4edgeexe}
  Every $4\otimes4$ CS entangled state of rank six is an edge and extreme state.
\end{lem}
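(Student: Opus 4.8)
The plan is to prove the two assertions --- edge and extreme --- separately, taking Lemma~\ref{le:lastelm} as the entry point and recalling throughout that every CS state is PPT and $G$-invariant, so $\rho^\Gamma=\rho$. For the edge property the argument is immediate: since $\rho$ is a $4\otimes4$ CS state that is \emph{entangled}, the contrapositive of Lemma~\ref{le:lastelm} says that $\range{\rho}$ contains no product vector (not even a complex one). A PPT entangled state whose range is devoid of product vectors admits no product projector to subtract, so it is vacuously an edge state. This disposes of the first half.

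For extremality I would argue by contradiction. Suppose $\rho=\alpha+\beta$ with $\alpha,\beta$ PPT states, neither proportional to $\rho$. From $\alpha,\beta\ge0$ and $\alpha^\Gamma,\beta^\Gamma\ge0$ together with $\rho^\Gamma=\rho$ I get $\alpha\le\rho$ and $\alpha^\Gamma\le\rho$ in the positive-semidefinite order, so both $\range{\alpha}$ and $\range{\alpha^\Gamma}$ sit inside the $6$-dimensional space $\range{\rho}$. Let $x^\ast$ be the largest $x\ge0$ for which both $\rho-x\alpha\ge0$ and $\rho-x\alpha^\Gamma\ge0$; because the relevant ranges lie in $\range{\rho}$, positivity is lost exactly when rank drops, so at $x^\ast$ at least one of these two operators loses rank. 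Call that operator $\sigma$. Its partial transpose is the other operator, still $\ge0$, so $\sigma$ is PPT, with $\range{\sigma}\subseteq\range{\rho}$, $\rank{\sigma}\le5$, and $\sigma\neq0$ (as $\alpha\not\propto\rho$).

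Now $\range{\sigma}\subseteq\range{\rho}$ inherits the absence of product vectors, so $\sigma$ cannot be separable and is therefore PPT entangled. Moreover $\range{\rho}\subseteq\cS$ for every bipartite CS state --- this is exactly the computation in Lemma~\ref{lem:range}, which never used separability --- hence $\sigma$ is symmetric. If $\rank{\sigma}\le4$, then $\sigma$ is a symmetric PPT state of rank at most four and thus separable by Corollary~\ref{le:pptrank4=sep}, contradicting the product-free range. The only surviving possibility is $\rank{\sigma}=5$, which I would exclude via the local rank $M:=\rank{\sigma_A}=\rank{\sigma_B}$ (equal by symmetry): from $5\le M^2$ and $\range{\sigma}\subseteq\range{\rho}$ one gets $M\in\{3,4\}$. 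For $M=3$, $\range{\sigma}$ is a $5$-dimensional subspace of the symmetric subspace of $3\otimes3$, and $5\ge\tfrac{3\cdot2}{2}+1=4$, so Lemma~\ref{lem:segre} forces a product vector into $\range{\sigma}$; for $M=4$, $\sigma$ is a $4\otimes4$ PPT state of rank $N+1$ with $N=4$, so Lemma~\ref{Chen:N+1} places a pure product state in $\range{\sigma}$. Both contradict the product-free range, so no such $\sigma$ exists and $\rho$ is extreme.

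The main obstacle is precisely the rank-$5$ alternative for $\sigma$: it is not ruled out by dimension counting alone (a $5$-dimensional product-free subspace of $\cS$ certainly exists), and it is the single place where the symmetric structure must be combined with the PPT classification. The resolution above hinges on splitting by the local rank $M$ and invoking the Segre-variety bound (Lemma~\ref{lem:segre}) when $M=3$ and the rank-$(N+1)$ structure theorem (Lemma~\ref{Chen:N+1}) when $M=4$; checking that these sublemmas genuinely apply to the symmetric $\sigma$ produced by the reduction is the crux. Everything else is bookkeeping with ranges and the positive-semidefinite order.
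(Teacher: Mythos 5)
Your proof is correct and follows the same overall strategy as the paper: the edge property falls out of the contrapositive of Lemma~\ref{le:lastelm}, and extremality is obtained by contradiction, subtracting a multiple of a hypothetical PPT summand $\alpha$ to produce a PPT state $\sigma$ of rank at most five whose range, sitting inside $\range{\rho}$, is product-free. Where you diverge is in the level of care, and in both places your version is the stronger one. First, the paper simply sets $\sigma:=\rho-x\alpha$ and calls it PPT; you correctly observe that positivity of $\sigma^\Gamma=\rho-x\alpha^\Gamma$ is a separate constraint and arrange the extremal $x^\ast$ so that both operators stay positive while one drops rank --- this uses $\rho^\Gamma=\rho$ in an essential way and closes a real gap. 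Second, the paper disposes of $\sigma$ with the bare assertion that ``the only possibility is a two-qutrit state of rank four,'' silently discarding the rank-five alternative; you handle it explicitly by noting $\range{\sigma}\subseteq\cS$ forces equal local ranks $M\in\{3,4\}$, then invoking Lemma~\ref{lem:segre} for $M=3$ and Lemma~\ref{Chen:N+1} for $M=4$ to force a product vector into $\range{\sigma}$ either way. The rank-at-most-four case is settled in both treatments by Corollary~\ref{le:pptrank4=sep}. In short, same skeleton, but your write-up supplies the two justifications the paper omits, at the cost of importing the Segre-variety bound and the rank-$(N+1)$ structure theorem into the argument.
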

We emphasize that the above fact is a unique property for CS entangled states of rank six. Indeed, there exist
non-extreme and non-edge $4\times4$ PPT entangled states of rank six, which is invariant under partial transpose.  We
construct an example as follows. Let $\alpha=[C_0,C_1,C_2]^\dag [C_0,C_1,C_2]$ where
\begin{equation}
  \label{ea:Blokovi}
  C_0= \begin{bsmallmatrix}
                 0 & a & b \\
                 0 & 0 & 1 \\
                 0 & 0 & 0 \\
                 0 & 0 & 0
      \end{bsmallmatrix},
  C_1= \begin{bsmallmatrix}
                 0 & 0 & 0 \\
                 0 & 0 & c \\
                 0 & 0 & 1 \\
                 1 & 0 & -1/d
        \end{bsmallmatrix},
    C_2= \begin{bsmallmatrix}
                 0 & -1/b & 0  \\
                 0 & 1 & 0  \\
                 1 & -c & 0 \\
                 d & 0 & 0
        \end{bsmallmatrix},
\end{equation}
where  $a\in\real,a\neq 0$  and $b,c,d>0$. It has been shown in Theorem 23 of Ref.~\cite{chen2011description}  that
$\alpha$ is a two-qutrit PPT entangled state of rank four, and $\alpha^{\intercal_1}=\alpha$. Let 
$\beta=P\alpha P^\dag$ where $P=I_3\otimes(\ket{4}\bra{1}+\ket{2}\bra{2}+\ket{3}\bra{3})$.  So
$\rho=\alpha+\epsilon\beta$ with small enough $\epsilon>0$ is a $4\times4$ PPT entangled state of rank six, and
$\rho^{\intercal_1}=\rho$. By definition, $\rho$ is not extreme.

One can further show there exist product vectors in the range of
$\rho$. The six different row vectors of $[C_0,C_1,C_2]$ and $[C_0,C_1,C_2]P^\dag$ are as follows.
\begin{equation}
  \label{eq:rank6edge-0}
  \left\lbrace\quad
    \begin{aligned}
    \gamma_1&=a\bra{1,2}+b\bra{1,3}-\frac 1b\bra{3,2},\\
    \gamma_2&=\bra{1,3}+c\bra{2,3}+\bra{3,2},\\
    \gamma_3&=\bra{2,3}+\bra{3,1}-c\bra{3,2},\\
    \gamma_4&=\bra{2,1}-\frac 1d\bra{2,3}+d\bra{3,1},\\
    \gamma_5&=\bra{2,3}+\bra{3,4}-c\bra{3,2},\\
    \gamma_6&=\bra{2,4}-\frac 1d\bra{2,3}+d\bra{3,4}.\\
  \end{aligned}
  \right.
\end{equation}
It is equivalent to determining whether there exist product vectors in the space
$\mathit{span}\{\gamma_1,\gamma_2,\cdots,\gamma_6\}$. One can verify
\begin{equation}
  \label{eq:rank6edge-2}
  \gamma_3-\gamma_5+\gamma_4-\gamma_6=\big(\bra{2}+(1+d)\bra{3}\big)\big(\bra{1}-\bra{4}\big).
\end{equation}
Therefore, the range of $\rho$ has product vectors. By definition, $\rho$ is not edge.

% \red{If the range of $\rho$ has no product vector then $\rho$ is an edge state.}
\subsection{Nonnegative states}
\label{subsec:nonnegative}
The multipartite quantum state $\rho$ is nonnegative if all entries in its density matrix are real and nonnegative. Many
well-known states are indeed nonnegative states. Further, nonnegative states satisfy the additivity property of the
geometric measure of entanglement (GME), which is an extensively useful multipartite entanglement measure. It is defined
as
\begin{equation}
  \label{eq:gme}
  G(\rho):=
  -\log_2
  \max_{a_1,\ldots,a_d}
  \bra{a_1,\ldots,a_d}	
  \rho
  \ket{a_1,\ldots,a_d},
\end{equation}
where $\ket{a_1,\ldots, a_d}$ is a normalized product state. Since $\rho$ is nonnegative, it is known that
$\ket{a_1,\ldots,a_d}$ can be chosen as nonnegative states too.  See more details on page 12 of
Ref.\cite{zhu2010additivity}. Hence the asymptotic GME (AGME)
$$G^{\infty}(\rho)=\lim_{n\rightarrow\infty} {1\over  n}G(\rho^{\otimes n})$$
is equal to $G(\rho)$.

In our case, the entangled CS state $\rho$ in \eqref{eq:4x4entangled} is nonnegative if we choose $\lambda_5=\lambda_6$
and small enough $\lambda>0$. So it satisfies the facts in the last paragraph. It constructs a novel example of
entangled states whose GME is additive. In the following we compute the GME of the bipartite state $\rho$ in
Eq.~\eqref{eq:4x4entangled}, and thus its AGME.

It follows from Ref.~\cite{zhu2010additivity} that we can choose  $\ket{a_1}=\cdots=\ket{a_n}$ whose entries are
all real and nonnegative. Using the Lagrange multiplier, for the case $d=2$, the solution of the optimization problem
\begin{equation}
  \label{eq:opti}
  \max_{\norm{a}=1}
  \bra{a,a}	
  \rho
  \ket{a,a}
\end{equation}
satisfies the KKT condition
\begin{equation}
  \label{eq:KKT}
  \bra{a}\rho \ket{a,a} = \mu \ket{a}, \quad 
  a\in\real^4,
  \quad
  \norm{a}=1.
\end{equation}
The Perron–Frobenius theorem on the largest Z-eigenvalue of nonnegative tensor in Ref.~\cite{berman1994nonnegative} says
that if all entries of $\ket{a}$ are positive in \cref{eq:KKT}, then the corresponding $\mu$ equals the maximal value of
the optimization problem~\eqref{eq:opti}.

In particular, if $\lambda_i$ satisfy the following condition,
\begin{equation}
  \label{multiSsepv7(2):eq:2}
  \begin{split}
    \lambda_0 & = \lambda_3 + 3040 \lambda_5 -\frac{11000}{81}\lambda_7,\\
    \lambda_1 & = \lambda_3 + 2016 \lambda_5,\\
    \lambda_2 & = \lambda_3 + 1054 \lambda_5 - \frac{11000}{81} \lambda_7,
  \end{split}
\end{equation}
then $a = \frac{1}{2}(1,1,1,1)^{\intercal}$ and
\begin{math}
  \mu = \frac{\lambda_0}{4} + 16 \lambda_4 + 248 \lambda_5 + \frac{250}{27}\lambda_7
\end{math}
satisfy the condition~\eqref{eq:KKT}. Hence, we have
\begin{equation}
  \label{multiSsepv7(2):eq:4}
  G(\rho)  
  =
  G^{\infty}(\rho)
  = -\log_2( \frac{\lambda_0}{4} + 16 \lambda_4 + 248 \lambda_5 + \frac{250}{27}\lambda_7).
\end{equation}

\subsection{Connection to Hankel and Toeplitz matrices}
\label{subsec:hankel}
In this subsection, we will consider some classes of symmetric states. In the multi-qubit system, the symmetric states
can be represented by the Dicke basis
\begin{equation}
  \label{multiSsepv7:eq:2}
  \rho = \sum_{i,j=0}^dm_{ij}\ket{\ddk[i]}\bra{\ddk[j]}.
\end{equation}
Denote by $M_{\rho}$ the $(d+1)\times (d+1)$ matrix whose $(i,j)$-th entry is $m_{ij}$. Hence, the corresponding
semidefinite positive matrix $M_{\rho}$ can be used to represent the symmetric states which has $\frac{(d+1)(d+2)}{2}$ degree of
freedom.

In particular, $\rho$ is a DS state if $M_{\rho}$ is a semidefinite positive diagonal matrix . Moreover, $\rho$ is a CS state, if
$M_{\rho}$ is a Hankel matrix, which is
\begin{equation}
  \label{multiSsepv7:eq:3}
  M_{\rho} =\hskip-10pt
  \vcenter{
    \begin{tikzpicture}
      \matrix(m)[matrix of math nodes,left delimiter={[},right delimiter={]},row sep=0.5em, column sep=0.1em] {
        a_0    & a_1    &a_2       &   &a_d\\
        a_1    & a_2    &         &   & a_{d-1} \\
        a_2    &       &          &         &     \\
        & &          &         &   \\
        a_d    & a_{d-1}&    &         &a_{2d}\\
      }; \path[dots] (m-5-1) edge (m-1-5) (m-5-2) edge (m-2-5) (m-1-3) edge (m-1-5) (m-3-1) edge (m-5-1) (m-5-2) edge
      (m-5-5) (m-5-5) edge (m-2-5);
    \end{tikzpicture}
  }\hskip-118pt.
\end{equation}
Equivalently,
\begin{equation}
  \label{multiSsepv7:eq:4}
  \rho = \sum_{k=0}^{d}a_k \left( \sum_{i+j=k}\ket{\ddk[i]}\bra{\ddk[j]}\right).
\end{equation}

By our results, if $M_{\rho}$ is semidefinite positive, then $\rho$ is S-separable. Hence, we have the following result about the
Hankel matrix.
\begin{lem}
  Suppose $M$ is an $N\times N $ Hankel matrix, then it is semidefinite positive if and only if has the following
  decomposition
  \begin{equation}
    \label{multiSsepv7:eq:6}
    M = \sum_{i=1}^{r}\lambda_i z_i z_i^\intercal,
  \end{equation}
  where $r = \rank{M}$, $\lambda_i>0$, and
  \begin{equation}
    \label{multiSsepv7:eq:10}
    z_i =
    \begin{bmatrix}
      1\\ t_i \\ t_i^2\\ \vdots \\t_i^{N-1}
    \end{bmatrix}
    \text{ or }
    \begin{bmatrix}
      0\\0\\ 0\\\vdots \\0\\1
    \end{bmatrix}, t_i\in\real.
  \end{equation}
\end{lem}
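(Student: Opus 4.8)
The plan is to split the biconditional and put essentially all of the work into the implication that a positive semidefinite Hankel matrix admits the stated moment decomposition; the reverse implication is immediate. Indeed, if $M=\sum_{i=1}^r\lambda_i z_i z_i^\intercal$ with $\lambda_i>0$ and each $z_i$ equal to a Vandermonde vector $z(t_i)=(1,t_i,\ldots,t_i^{N-1})^\intercal$ or to $e_N=(0,\ldots,0,1)^\intercal$, then $M$ is a sum of positive semidefinite rank-one terms, so $M\geqslant 0$; and since $z(t)z(t)^\intercal$ has $(j,k)$-entry $t^{\,j+k}$ while $e_N e_N^\intercal$ touches only the corner (the coefficient $a_{2N-2}$), each term is Hankel, consistent with the hypothesis on $M$.

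For the forward implication I would pass through the quantum side. First I would set $N=d+1$ and realize the clean Hankel data $(a_0,\ldots,a_{2d})$ of $M$ as a $d$-qubit symmetric operator: in the orthonormal Dicke basis its density matrix is $BMB$, where $B=\mathrm{diag}\big(\sqrt{\binom d0},\ldots,\sqrt{\binom dd}\big)$ is the fixed diagonal congruence arising from the normalization of $\ket{\ddk}$. Since congruence by an invertible matrix preserves positivity, $M\geqslant 0$ gives $BMB\geqslant 0$, so up to a positive scalar $\rho:=BMB$ is a genuine $d$-qubit CS state. By \cref{thm:dqubit} every multi-qubit CS state is separable, hence by \cref{thm:conj2} it is S-separable, so $\rho=\sum_i p_i\ket{x_i}^{\otimes d}\bra{x_i}^{\otimes d}$ with $p_i>0$ and $\ket{x_i}=x_{i,0}\ket{0}+x_{i,1}\ket{1}\in\real^2$. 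Expanding each $\ket{x_i}^{\otimes d}$ in the Dicke basis produces the coefficient vector with $m$-th entry $\sqrt{\binom dm}\,x_{i,0}^{\,d-m}x_{i,1}^{\,m}$, that is $B\,w(x_i)$ with $w(x_i)_m=x_{i,0}^{\,d-m}x_{i,1}^{\,m}$. Stripping the congruence, $M=B^{-1}(BMB)B^{-1}=\sum_i p_i\,w(x_i)w(x_i)^\intercal$, and $w(x_i)$ is proportional to $z(t_i)$ with node $t_i=x_{i,1}/x_{i,0}$ when $x_{i,0}\neq 0$, and to $e_N$ when $x_{i,0}=0$. This already yields a decomposition $M=\sum_i\lambda_i z_iz_i^\intercal$ of the required shape with $\lambda_i>0$.

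Finally I would trim the number of terms to $r=\rank{M}$. Grouping summands sharing a node (with all infinity terms grouped together) and adding their weights, I may assume the nodes are distinct. Because $M$ is positive semidefinite, $\range{M}=\mathrm{span}\{z_i\}$, so this span has dimension $r$; and Vandermonde vectors with distinct nodes, together with $e_N$, are linearly independent whenever their number does not exceed $N$. Hence the number of distinct nodes is exactly $r$, and the grouped decomposition has exactly $r$ positive-weight terms, as claimed.

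The hard part is the dictionary in the forward implication: correctly tracking the binomial congruence $B$ relating the clean Hankel matrix to the normalized Dicke density matrix, and verifying that real symmetric product states $\ket{x}^{\otimes d}$ correspond precisely to Vandermonde vectors, with $\ket{1}^{\otimes d}$ supplying the single infinity vector $e_N$. Once this identification is in place, positivity is just preservation under congruence and the rank reduction is forced by linear independence of distinct Vandermonde nodes, so those steps are routine.
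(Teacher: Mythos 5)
Your proof follows essentially the same route as the paper's: conjugate the Hankel matrix by the diagonal binomial matrix to realize it as a multi-qubit CS state in the Dicke basis, invoke \cref{thm:dqubit} and \cref{thm:conj2} to get an S-separable decomposition $\sum_i p_i\ket{x_i}^{\otimes d}\bra{x_i}^{\otimes d}$, and expand each $\ket{x_i}^{\otimes d}$ in the Dicke basis to recover the Vandermonde vectors $z(t_i)$ (with $e_N$ for the degenerate direction). If anything you are more careful than the paper, which misstates $VMV^{\intercal}$ as being itself Hankel and simply asserts the $r$-term count; the only loose end in your version (shared with, and unaddressed in, the paper) is that your linear-independence argument pins the number of terms to exactly $r=\rank{M}$ only when the grouped decomposition has at most $N$ distinct nodes, so the case of more than $N$ distinct nodes still needs a word (e.g., via the common roots of the kernel polynomials when $M$ is singular).
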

\begin{proof}
  Let $V$ be a diagonal matrix $V = \mathrm{diag}(\sqrt{C_{d-1}^0},\sqrt{C_{d-1}^1},\ldots,\sqrt{C_{d-1}^{d-1}})$, then
  $VMV^{\intercal}$ is also a Hankel matrix.  If $M$ is semidefinite positive, $VMV^{\intercal}$ is also semidefinite
  positive. We can construct a 
  (N-1)-qubit CS state $\rho$ according to \cref{thm:dqubit}, $\rho$ is separable.  must be S-separable. That is, it can
  be written as
  \begin{equation}
    \label{multiSsepv7:eq:8}
    \rho = \sum_{i=1}^{r}\lambda_i\ket{x_i}^{\otimes N-1}\bra{x_i}^{\otimes N-1},x\in\real^2,
  \end{equation}
  where $r = \rank{\rho}$.
  If $\ket{x_i}\not\propto \ket{0}$, we can assume $\ket{x_i} = \ket{1}+t_i\ket{0}$. Then
  \begin{equation}
    \label{multiSsepv7:eq:9}
    {(\ket{0}+t_i\ket{1})}^{\otimes (N-1)} = \sum_{j=0}^{N-1}t_i^{j}\sqrt{C_{N-1}^j}\ket{D_{N-1,j}}.
  \end{equation}
  If $\ket{x_i}=\ket{0}$, then $\ket{x_i}^{\otimes (N-1)} =\ket{D_{N-1,N-1}}$. Therefore,
  \begin{equation}
    \label{multiSsepv7:eq:11}
    VMV^{\intercal} = \sum_i Vz_iz_i^{\intercal}V^{\intercal},
  \end{equation}
  where $z_i$ has the form in Eq.~\eqref{multiSsepv7:eq:10}. The proof completes by cancel $V$ on the above equation.
\end{proof}

It is also of interest to consider some other classes of symmetric states.  In particularly, if $M_{\rho}$ is a Toeplitz matrix,
\begin{equation}
  \label{multiSsepv7:eq:5}
  M_{\rho} =\hskip-10pt \vcenter{
    \begin{tikzpicture}
      \matrix(m)[matrix of math nodes,left delimiter={[},right delimiter={]},row sep=0.5em, column sep=0.1em] {
        a_0    & a_1    &       &   &a_d\\
        a_1    &     &         &   &  \\
        &       &          &         &     \\
        & &          &         &  a_1 \\
        a_d    &  &    &      a_1   &a_{0}\\
      }; \path[dots] (m-1-2) edge (m-1-5) (m-5-1) edge (m-5-4) (m-1-1) edge (m-5-5) (m-2-1) edge (m-5-4) (m-1-2) edge
      (m-4-5) (m-2-1) edge (m-5-1) (m-1-5) edge (m-4-5);
    \end{tikzpicture}
  }\hskip-138pt.
\end{equation}
Equivalently,
\begin{equation}
  \rho = \sum_{k=0}^{d}a_k \left( \sum_{|i-j|=k}\ket{\ddk[i]}\bra{\ddk[j]}\right).
\end{equation}
The Hankel matrix is closely related to the Toeplitz matrix, in fact, a Hankel matrix is an upside-down Toeplitz matrix. 
We conjecture that these states are seprable.
\begin{conjecture}
  \label{conj:Toeplitz}
  Suppose $\rho$ is multi-qubit symmetric states. If the corresponding $M_{\rho}$ is Toeplitz, then $\rho$ is separable.
\end{conjecture}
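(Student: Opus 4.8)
The plan is to exhibit an explicit decomposition of $\rho$ into symmetric pure product states, from which separability is immediate. Writing $\rho=\sum_{i,j=0}^d m_{ij}\ket{\ddk[i]}\bra{\ddk[j]}$ with $M_\rho=(m_{ij})$ Hermitian, positive semidefinite and Toeplitz, I first record the dictionary between the symmetric subspace and univariate polynomials: a symmetric product state $\ket{x}^{\otimes d}$ with $\ket{x}=\ket{0}+t\ket{1}$ has Dicke coordinate vector $w(t)=V\,v(t)$, where $v(t)=(1,t,\dots,t^{d})^{\intercal}$ is the Vandermonde vector (up to the ordering of the Dicke basis) and $V=\operatorname{diag}\bigl(\binom{d}{0}^{1/2},\dots,\binom{d}{d}^{1/2}\bigr)$. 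Since $V$ is real, $\rho=\sum_i\mu_i\ket{x_i}^{\otimes d}\bra{x_i}^{\otimes d}$ with $\mu_i>0$ holds exactly when $M_\rho=\sum_i\mu_i\,w(t_i)w(t_i)^{*}=V\bigl(\sum_i\mu_i\,v(t_i)v(t_i)^{*}\bigr)V$, i.e.\ when $\widetilde M:=V^{-1}M_\rho V^{-1}$ is a truncated moment matrix $\widetilde m_{kl}=\sum_i\mu_i\,t_i^{k}\bar t_i^{\,l}$ of an atomic measure on $\complex\cup\{\infty\}$. This is the precise analogue of the route used for CS (Hankel) states, where the nodes $t_i$ came out real via the Hamburger moment problem.

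The model for the Toeplitz case is the trigonometric moment problem: the Carath\'eodory--Toeplitz theorem states that a Hermitian positive semidefinite Toeplitz matrix $T=(c_{k-l})$ is the moment matrix of a finite atomic measure on the unit circle, $c_m=\sum_i\mu_i e^{\mathrm{i}m\theta_i}$ with $\mu_i>0$. The guiding idea is therefore to produce nodes $t_i=e^{\mathrm{i}\theta_i}$ on the circle, giving complex symmetric product states $\ket{x_i}=\ket{0}+e^{\mathrm{i}\theta_i}\ket{1}$ and a separable decomposition $\rho=\sum_i\mu_i\ket{x_i}^{\otimes d}\bra{x_i}^{\otimes d}$ over $\complex$. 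Concretely I would try to show $\widetilde M$ admits a representing atomic measure by a flat--extension (Tchakaloff) argument: append a border row and column, solve the resulting moment recursion, and read off the atoms $t_i$ as the common roots of the kernel polynomials, recovering the $\mu_i$ by a Vandermonde solve. As low--rank sanity checks I would first confirm the conjecture when $\rank{M_\rho}\le 5$ (already separable by \cref{rabk5}) and in the scalar case $M_\rho\propto\I$, where $\rho$ is proportional to the symmetric--subspace projector $\Psym$ and hence separable; the rank--six layer can be attacked through \cref{le:lastelm-multi} by exhibiting a product vector in $\range{\rho}$.

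The hard part, and the reason the statement is only conjectured, is the binomial weighting $V$. In the CS proof the decisive convenience was that conjugation by the binomial diagonal $V$ interacts well with Hankel structure, so Toeplitz-ness of the moment data and the product-state normalization were compatible. Here this breaks: $\widetilde M=V^{-1}M_\rho V^{-1}$ is in general \emph{not} Toeplitz, so the Carath\'eodory--Toeplitz theorem cannot be applied to it directly, and one is pushed into the genuinely harder truncated complex (bivariate) moment problem, where positive semidefiniteness alone does not guarantee a representing measure. One might hope to bypass this by a local operation, using the observation (noted before \cref{conj:Toeplitz}) that a Hankel matrix is an upside-down Toeplitz matrix: the index reversal $\ket{\ddk[k]}\mapsto\ket{\ddk[d-k]}$ is implemented by the bit flip $X^{\otimes d}$. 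However, this reversal enters a state only as the two-sided congruence $M_\rho\mapsto JM_\rho J$ (with $J$ the exchange matrix), which sends Toeplitz to Toeplitz, whereas turning Toeplitz into Hankel requires the \emph{one-sided} reversal $M_\rho\mapsto M_\rho J$, and that is not realizable by any local congruence $A^{\otimes d}\rho(A^{\otimes d})^{\intercal}$. Thus the clean reduction available for CS states is unavailable, and closing the gap seems to require either a binomially weighted analogue of the Carath\'eodory--Toeplitz decomposition, or a dual argument showing that $M_\rho$ pairs nonnegatively with every Hermitian form that is nonnegative on the symmetric product states, exploiting that such forms are sums of Hermitian squares on $\complex\mathrm{P}^1$.
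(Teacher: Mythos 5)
This statement is left \emph{unproven} in the paper: it is stated as \cref{conj:Toeplitz} and reappears verbatim as open problem~1 in the conclusion, so there is no proof of record to compare yours against. Your proposal is, by your own account, not a proof either --- it is a roadmap that stalls exactly where you say it stalls. That said, the roadmap is largely sound and correctly diagnoses the obstruction: the dictionary $\ket{x}^{\otimes d}\leftrightarrow V v(t)$ with the binomial diagonal $V$ is the same one the paper uses in its Hankel lemma, the Carath\'eodory--Toeplitz theorem is the right classical model, and your observation that the physically implementable reversal acts as the two-sided congruence $M_\rho\mapsto JM_\rho J$ (Toeplitz to Toeplitz) rather than the one-sided $M_\rho\mapsto M_\rho J$ (Toeplitz to Hankel) is a genuine and correctly argued reason why the CS reduction does not transfer. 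The unresolved step --- showing that $\widetilde M=V^{-1}M_\rho V^{-1}$, which is no longer Toeplitz, still admits a representing atomic measure --- is precisely the content of the conjecture, and positive semidefiniteness alone does not deliver it in the truncated complex moment problem.

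Beyond the admitted central gap, your two ``sanity checks'' misapply results of the paper. \cref{rabk5} and \cref{le:lastelm-multi} are theorems about \emph{CS} states, which in the Dicke-basis picture are the states with \emph{Hankel} $M_\rho$; a symmetric state with Toeplitz $M_\rho$ is generically not CS (a matrix that is both Toeplitz and Hankel is highly degenerate), so neither result applies to the rank-$\leqslant 5$ or rank-$6$ layers of the Toeplitz family. The rank-$\le4$ case could instead be handled by \cref{le:pptrank4=sep}, but only after verifying that a Toeplitz-$M_\rho$ symmetric state is PPT, which is not automatic for general symmetric states (unlike for CS states, which the paper notes are necessarily PPT). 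The scalar check $M_\rho\propto\I$ is fine, since the normalized projector onto the symmetric subspace is separable. In summary: you have correctly identified this as an open conjecture and located the obstruction, but neither you nor the paper proves it, and the auxiliary reductions you propose need to be rebuilt on results valid for general symmetric states rather than CS ones.
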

\section{Conclusion}%
\label{sec:con}
We have studied the property of completely symmetric (CS) states in the multipartite system.

First, we have proved that the CS states admit some nice features. For example, the CS state is separable if and only it
is bi-separable or S-separable. Moreover, it is reducible over real if and only the reduced states are reducible over
real for $d$-partite ($d>2$) system. In particular, the CS states are separable if and only if every real irreducible
components are separable, which makes the separability problem easier to solve.

Second, we distinguished   the separability of some low-rank CS states. It turns out that all the multi-qubit and
two-qutrit CS states are separable. Moreover, all the CS states of rank at
most $5$ or $N+1$ are separable both in the bipartite and multipartite system. As for the rank-$6$ or $N+2$ CS states,
they are separable if and only their range contains a real product vector. However, there exists a CS entangled
state of rank $6$. So the CS structure cannot guarantee the separability, and the further exploration is needed. 

Third, we showed some application of CS states  to some widely useful states in quantum information.
For example the symmetric states, edge and extreme states, nonnegative states. We also have suggested the connection between
the separability to the decomposition of  Hermitian matrices, such as the Hankel and Toplitz matrices.

There are many problems one may consider in the future study of entanglement.
\begin{enumerate}
\item  Is \cref{conj:Toeplitz} true?
\item  Are multi-qutrit CS states separable?\\
\item  What is the relation  between the separability and local ranks for CS states? Does the Conjecture~24 hold for the
CS states in Ref.~\cite{chen2013separability}?\\
\item  For a CS state, if all the reduced states are separable, can we conclude that the state is separable?\\
\item  How is to find an entanglement witness for a given entangled CS state?\\
\item  Can the symmetric extension criterion detect the separability of CS states completely? Is the corresponding
numerical algorithm practical? Can we find a variation of symmetric extension criterion to fit the structure of CS states? 
\item
The CS states correspond to the supersymmetric tensor in tensor analysis, and the tensor rank of two copies of special symmetric tensors has been considered recently \cite{Chen2018the}. Is it possible to compute the tensor rank of two copies of some CS states?
\end{enumerate}

\section*{Acknowledgments}

LC and YS were supported by the NNSF of China (Grant No. 11871089), and the Fundamental Research Funds for the Central
Universities (Grant Nos. KG12040501, ZG216S1810 and ZG226S18C1).
%\bibliography{Ssep}
%%%%%%%%%%%%%%%%%%%%%%%%%%%%%%%%%%%%%%%%%%%%%%%%%%%%%%%%%%%%%%%%%%%%%%%%%%%%%%%%%%%%%%%%%%%%%%%%%%%%
%% ~~~~~~~~~~~~~~~~~~~~~~~~~~~~~~~~~~~~~~~~ reference~~~~~~~~~~~~~~~~~~~~~~~~~~~~~~~~~~~~~~~~~ %%%%%%
%merlin.mbs apsrev4-1.bst 2010-07-25 4.21a (PWD, AO, DPC) hacked
%Control: key (0)
%Control: author (8) initials jnrlst
%Control: editor formatted (1) identically to author
%Control: production of article title (-1) disabled
%Control: page (0) single
%Control: year (1) truncated
%Control: production of eprint (0) enabled
%

% \bibliographystyle{rev}
 %\bibliography{Ssep}

\begin{thebibliography}{31}%
\makeatletter
\providecommand \@ifxundefined [1]{%
 \@ifx{#1\undefined}
}%
\providecommand \@ifnum [1]{%
 \ifnum #1\expandafter \@firstoftwo
 \else \expandafter \@secondoftwo
 \fi
}%
\providecommand \@ifx [1]{%
 \ifx #1\expandafter \@firstoftwo
 \else \expandafter \@secondoftwo
 \fi
}%
\providecommand \natexlab [1]{#1}%
\providecommand \enquote  [1]{``#1''}%
\providecommand \bibnamefont  [1]{#1}%
\providecommand \bibfnamefont [1]{#1}%
\providecommand \citenamefont [1]{#1}%
\providecommand \href@noop [0]{\@secondoftwo}%
\providecommand \href [0]{\begingroup \@sanitize@url \@href}%
\providecommand \@href[1]{\@@startlink{#1}\@@href}%
\providecommand \@@href[1]{\endgroup#1\@@endlink}%
\providecommand \@sanitize@url [0]{\catcode `\\12\catcode `\$12\catcode
  `\&12\catcode `\#12\catcode `\^12\catcode `\_12\catcode `\%12\relax}%
\providecommand \@@startlink[1]{}%
\providecommand \@@endlink[0]{}%
\providecommand \url  [0]{\begingroup\@sanitize@url \@url }%
\providecommand \@url [1]{\endgroup\@href {#1}{\urlprefix }}%
\providecommand \urlprefix  [0]{URL }%
\providecommand \Eprint [0]{\href }%
\providecommand \doibase [0]{http://dx.doi.org/}%
\providecommand \selectlanguage [0]{\@gobble}%
\providecommand \bibinfo  [0]{\@secondoftwo}%
\providecommand \bibfield  [0]{\@secondoftwo}%
\providecommand \translation [1]{[#1]}%
\providecommand \BibitemOpen [0]{}%
\providecommand \bibitemStop [0]{}%
\providecommand \bibitemNoStop [0]{.\EOS\space}%
\providecommand \EOS [0]{\spacefactor3000\relax}%
\providecommand \BibitemShut  [1]{\csname bibitem#1\endcsname}%
\let\auto@bib@innerbib\@empty
%</preamble>
\bibitem [{\citenamefont {Einstein}\ \emph {et~al.}(1935)\citenamefont
  {Einstein}, \citenamefont {Podolsky},\ and\ \citenamefont
  {Rosen}}]{einstein1935can}%
  \BibitemOpen
  \bibfield  {author} {\bibinfo {author} {\bibfnamefont {A.}~\bibnamefont
  {Einstein}}, \bibinfo {author} {\bibfnamefont {B.}~\bibnamefont {Podolsky}},
  \ and\ \bibinfo {author} {\bibfnamefont {N.}~\bibnamefont {Rosen}},\ }\href
  {\doibase 10.1103/physrev.47.777} {\bibfield  {journal} {\bibinfo  {journal}
  {Phys. Rev.}\ }\textbf {\bibinfo {volume} {47}},\ \bibinfo {pages} {777}
  (\bibinfo {year} {1935})}\BibitemShut {NoStop}%
\bibitem [{\citenamefont
  {Schr{\"o}dinger}(1935)}]{schrodinger1935gegenwartige}%
  \BibitemOpen
  \bibfield  {author} {\bibinfo {author} {\bibfnamefont {E.}~\bibnamefont
  {Schr{\"o}dinger}},\ }\href {https://doi.org/10.1007\%2Fbf01491914}
  {\bibfield  {journal} {\bibinfo  {journal} {Naturwissenschaften}\ }\textbf
  {\bibinfo {volume} {23}},\ \bibinfo {pages} {823} (\bibinfo {year}
  {1935})}\BibitemShut {NoStop}%
\bibitem [{\citenamefont {Nielsen}\ and\ \citenamefont
  {Chuang}(2002)}]{nielsen2002quantum}%
  \BibitemOpen
  \bibfield  {author} {\bibinfo {author} {\bibfnamefont {M.~A.}\ \bibnamefont
  {Nielsen}}\ and\ \bibinfo {author} {\bibfnamefont {I.~L.}\ \bibnamefont
  {Chuang}},\ }\href {\doibase 10.1017/cbo9780511976667.016} {\enquote
  {\bibinfo {title} {Quantum information theory},}\ } (\bibinfo {year}
  {2002})\BibitemShut {NoStop}%
\bibitem [{\citenamefont {Gurvits}(2003)}]{gurvits2003classical}%
  \BibitemOpen
  \bibfield  {author} {\bibinfo {author} {\bibfnamefont {L.}~\bibnamefont
  {Gurvits}},\ }in\ \href {\doibase 10.1145/780543.780545} {\emph {\bibinfo
  {booktitle} {Proceedings of the thirty-fifth ACM symposium on Theory of
  computing - STOC '03}}},\ \bibinfo {organization} {ACM}\ (\bibinfo
  {publisher} {ACM Press},\ \bibinfo {year} {2003})\ pp.\ \bibinfo {pages}
  {10--19}\BibitemShut {NoStop}%
\bibitem [{\citenamefont {Gharibian}(2008)}]{gharibian2008strong}%
  \BibitemOpen
  \bibfield  {author} {\bibinfo {author} {\bibfnamefont {S.}~\bibnamefont
  {Gharibian}},\ }\href {https://arxiv.org/abs/0810.4507} {\bibfield  {journal}
  {\bibinfo  {journal} {arXiv preprint arXiv:0810.4507}\ } (\bibinfo {year}
  {2008})}\BibitemShut {NoStop}%
\bibitem [{\citenamefont {Peres}(1996)}]{Peres1996a}%
  \BibitemOpen
  \bibfield  {author} {\bibinfo {author} {\bibfnamefont {A.}~\bibnamefont
  {Peres}},\ }\href {\doibase 10.1103/physrevlett.77.1413} {\bibfield
  {journal} {\bibinfo  {journal} {Phys. Rev. Lett.}\ }\textbf {\bibinfo
  {volume} {77}},\ \bibinfo {pages} {1413} (\bibinfo {year} {1996})},\ \Eprint
  {http://arxiv.org/abs/9604005} {9604005} \BibitemShut {NoStop}%
\bibitem [{\citenamefont {Horodecki}\ \emph {et~al.}(1996)\citenamefont
  {Horodecki}, \citenamefont {Horodecki},\ and\ \citenamefont
  {Horodecki}}]{Horodecki1996}%
  \BibitemOpen
  \bibfield  {author} {\bibinfo {author} {\bibfnamefont {M.}~\bibnamefont
  {Horodecki}}, \bibinfo {author} {\bibfnamefont {P.}~\bibnamefont
  {Horodecki}}, \ and\ \bibinfo {author} {\bibfnamefont {R.}~\bibnamefont
  {Horodecki}},\ }\href {\doibase 10.1016/S0375-9601(96)00706-2} {\bibfield
  {journal} {\bibinfo  {journal} {Phys. Lett. A}\ }\textbf {\bibinfo {volume}
  {223}},\ \bibinfo {pages} {1 } (\bibinfo {year} {1996})}\BibitemShut
  {NoStop}%
\bibitem [{\citenamefont {Woronowicz}(1976)}]{Woronowikz1976}%
  \BibitemOpen
  \bibfield  {author} {\bibinfo {author} {\bibfnamefont {S.}~\bibnamefont
  {Woronowicz}},\ }\href {\doibase 10.1016/0034-4877(76)90038-0} {\bibfield
  {journal} {\bibinfo  {journal} {Rep. Math. Phys.}\ }\textbf {\bibinfo
  {volume} {10}},\ \bibinfo {pages} {165} (\bibinfo {year} {1976})}\BibitemShut
  {NoStop}%
\bibitem [{\citenamefont {Chruściński}\ \emph {et~al.}(2008)\citenamefont
  {Chruściński}, \citenamefont {Jurkowski},\ and\ \citenamefont
  {Kossakowski}}]{chruscinski2008quantum}%
  \BibitemOpen
  \bibfield  {author} {\bibinfo {author} {\bibfnamefont {D.}~\bibnamefont
  {Chruściński}}, \bibinfo {author} {\bibfnamefont {J.}~\bibnamefont
  {Jurkowski}}, \ and\ \bibinfo {author} {\bibfnamefont {A.}~\bibnamefont
  {Kossakowski}},\ }\href {\doibase 10.1103/physreva.77.022113} {\bibfield
  {journal} {\bibinfo  {journal} {Phys. Rev. A}\ }\textbf {\bibinfo {volume}
  {77}},\ \bibinfo {pages} {022113} (\bibinfo {year} {2008})}\BibitemShut
  {NoStop}%
\bibitem [{\citenamefont {Qian}(2018)}]{Qian_2018}%
  \BibitemOpen
  \bibfield  {author} {\bibinfo {author} {\bibfnamefont {L.}~\bibnamefont
  {Qian}},\ }\href {\doibase 10.1103/physreva.98.012307} {\bibfield  {journal}
  {\bibinfo  {journal} {Phys. Rev. A}\ }\textbf {\bibinfo {volume} {98}},\
  \bibinfo {pages} {012307} (\bibinfo {year} {2018})}\BibitemShut {NoStop}%
\bibitem [{\citenamefont {Ha}(2013)}]{ha2013separability}%
  \BibitemOpen
  \bibfield  {author} {\bibinfo {author} {\bibfnamefont {K.-C.}\ \bibnamefont
  {Ha}},\ }\href {\doibase 10.1103/physreva.87.024301} {\bibfield  {journal}
  {\bibinfo  {journal} {Phys. Rev. A}\ }\textbf {\bibinfo {volume} {87}},\
  \bibinfo {pages} {024301} (\bibinfo {year} {2013})}\BibitemShut {NoStop}%
\bibitem [{\citenamefont {Kraus}\ \emph {et~al.}(2000)\citenamefont {Kraus},
  \citenamefont {Cirac}, \citenamefont {Karnas},\ and\ \citenamefont
  {Lewenstein}}]{kraus2000separability}%
  \BibitemOpen
  \bibfield  {author} {\bibinfo {author} {\bibfnamefont {B.}~\bibnamefont
  {Kraus}}, \bibinfo {author} {\bibfnamefont {J.~I.}\ \bibnamefont {Cirac}},
  \bibinfo {author} {\bibfnamefont {S.}~\bibnamefont {Karnas}}, \ and\ \bibinfo
  {author} {\bibfnamefont {M.}~\bibnamefont {Lewenstein}},\ }\href {\doibase
  10.1103/physreva.61.062302} {\bibfield  {journal} {\bibinfo  {journal} {Phys.
  Rev. A}\ }\textbf {\bibinfo {volume} {61}},\ \bibinfo {pages} {062302}
  (\bibinfo {year} {2000})}\BibitemShut {NoStop}%
\bibitem [{\citenamefont {Horodecki}\ \emph {et~al.}(2000)\citenamefont
  {Horodecki}, \citenamefont {Lewenstein}, \citenamefont {Vidal},\ and\
  \citenamefont {Cirac}}]{horodecki2000operational}%
  \BibitemOpen
  \bibfield  {author} {\bibinfo {author} {\bibfnamefont {P.}~\bibnamefont
  {Horodecki}}, \bibinfo {author} {\bibfnamefont {M.}~\bibnamefont
  {Lewenstein}}, \bibinfo {author} {\bibfnamefont {G.}~\bibnamefont {Vidal}}, \
  and\ \bibinfo {author} {\bibfnamefont {I.}~\bibnamefont {Cirac}},\ }\href
  {\doibase 10.1103/physreva.62.032310} {\bibfield  {journal} {\bibinfo
  {journal} {Phys. Rev. A}\ }\textbf {\bibinfo {volume} {62}},\ \bibinfo
  {pages} {032310} (\bibinfo {year} {2000})}\BibitemShut {NoStop}%
\bibitem [{\citenamefont {Fei}\ \emph {et~al.}(2003)\citenamefont {Fei},
  \citenamefont {Gao}, \citenamefont {Wang}, \citenamefont {Wang},\ and\
  \citenamefont {Wu}}]{fei2003separability}%
  \BibitemOpen
  \bibfield  {author} {\bibinfo {author} {\bibfnamefont {S.-M.}\ \bibnamefont
  {Fei}}, \bibinfo {author} {\bibfnamefont {X.-H.}\ \bibnamefont {Gao}},
  \bibinfo {author} {\bibfnamefont {X.-H.}\ \bibnamefont {Wang}}, \bibinfo
  {author} {\bibfnamefont {Z.-X.}\ \bibnamefont {Wang}}, \ and\ \bibinfo
  {author} {\bibfnamefont {K.}~\bibnamefont {Wu}},\ }\href {\doibase
  10.1103/PhysRevA.68.022315} {\bibfield  {journal} {\bibinfo  {journal} {Phys.
  Rev. A}\ }\textbf {\bibinfo {volume} {68}},\ \bibinfo {pages} {022315}
  (\bibinfo {year} {2003})}\BibitemShut {NoStop}%
\bibitem [{\citenamefont {Xiao-Hong}\ \emph {et~al.}(2004)\citenamefont
  {Xiao-Hong}, \citenamefont {Shao-Ming}, \citenamefont {Zhi-Xi},\ and\
  \citenamefont {Ke}}]{xiao2004ppt}%
  \BibitemOpen
  \bibfield  {author} {\bibinfo {author} {\bibfnamefont {W.}~\bibnamefont
  {Xiao-Hong}}, \bibinfo {author} {\bibfnamefont {F.}~\bibnamefont
  {Shao-Ming}}, \bibinfo {author} {\bibfnamefont {W.}~\bibnamefont {Zhi-Xi}}, \
  and\ \bibinfo {author} {\bibfnamefont {W.}~\bibnamefont {Ke}},\ }\href
  {http://stacks.iop.org/0253-6102/42/i=2/a=215} {\bibfield  {journal}
  {\bibinfo  {journal} {Commun. Theor. Phys.}\ }\textbf {\bibinfo {volume}
  {42}},\ \bibinfo {pages} {215} (\bibinfo {year} {2004})}\BibitemShut
  {NoStop}%
\bibitem [{\citenamefont {Chen}\ and\ \citenamefont
  {{\DJ}okovi{\'c}}(2011{\natexlab{a}})}]{Chen_2011red}%
  \BibitemOpen
  \bibfield  {author} {\bibinfo {author} {\bibfnamefont {L.}~\bibnamefont
  {Chen}}\ and\ \bibinfo {author} {\bibfnamefont {D.~{\v{Z}}.}\ \bibnamefont
  {{\DJ}okovi{\'c}}},\ }\href {\doibase 10.1088/1751-8113/44/28/285303}
  {\bibfield  {journal} {\bibinfo  {journal} {J. Phys. A: Math. Theor.}\
  }\textbf {\bibinfo {volume} {44}},\ \bibinfo {pages} {285303} (\bibinfo
  {year} {2011}{\natexlab{a}})}\BibitemShut {NoStop}%
\bibitem [{\citenamefont {Chen}\ and\ \citenamefont
  {{\DJ}okovi{\'c}}(2013{\natexlab{a}})}]{chen2013separability}%
  \BibitemOpen
  \bibfield  {author} {\bibinfo {author} {\bibfnamefont {L.}~\bibnamefont
  {Chen}}\ and\ \bibinfo {author} {\bibfnamefont {D.~{\v{Z}}.}\ \bibnamefont
  {{\DJ}okovi{\'c}}},\ }\href {\doibase 10.1088/1751-8113/46/27/275304}
  {\bibfield  {journal} {\bibinfo  {journal} {J. Phys. A: Math. Theor.}\
  }\textbf {\bibinfo {volume} {46}},\ \bibinfo {pages} {275304} (\bibinfo
  {year} {2013}{\natexlab{a}})}\BibitemShut {NoStop}%
\bibitem [{\citenamefont {Eckert}\ \emph {et~al.}(2002)\citenamefont {Eckert},
  \citenamefont {Schliemann}, \citenamefont {Bruß},\ and\ \citenamefont
  {Lewenstein}}]{eckert2002k}%
  \BibitemOpen
  \bibfield  {author} {\bibinfo {author} {\bibfnamefont {K.}~\bibnamefont
  {Eckert}}, \bibinfo {author} {\bibfnamefont {J.}~\bibnamefont {Schliemann}},
  \bibinfo {author} {\bibfnamefont {D.}~\bibnamefont {Bruß}}, \ and\ \bibinfo
  {author} {\bibfnamefont {M.}~\bibnamefont {Lewenstein}},\ }\href {\doibase
  10.1006/aphy.2002.6268} {\bibfield  {journal} {\bibinfo  {journal} {Ann.
  Phys.}\ }\textbf {\bibinfo {volume} {299}},\ \bibinfo {pages} {88} (\bibinfo
  {year} {2002})}\BibitemShut {NoStop}%
\bibitem [{\citenamefont {Tura}\ \emph {et~al.}(2012)\citenamefont {Tura},
  \citenamefont {Augusiak}, \citenamefont {Hyllus}, \citenamefont {Kuś},
  \citenamefont {Samsonowicz},\ and\ \citenamefont
  {Lewenstein}}]{tura2012four}%
  \BibitemOpen
  \bibfield  {author} {\bibinfo {author} {\bibfnamefont {J.}~\bibnamefont
  {Tura}}, \bibinfo {author} {\bibfnamefont {R.}~\bibnamefont {Augusiak}},
  \bibinfo {author} {\bibfnamefont {P.}~\bibnamefont {Hyllus}}, \bibinfo
  {author} {\bibfnamefont {M.}~\bibnamefont {Kuś}}, \bibinfo {author}
  {\bibfnamefont {J.}~\bibnamefont {Samsonowicz}}, \ and\ \bibinfo {author}
  {\bibfnamefont {M.}~\bibnamefont {Lewenstein}},\ }\href {\doibase
  10.1103/physreva.85.060302} {\bibfield  {journal} {\bibinfo  {journal} {Phys.
  Rev. A}\ }\textbf {\bibinfo {volume} {85}},\ \bibinfo {pages} {060302}
  (\bibinfo {year} {2012})}\BibitemShut {NoStop}%
\bibitem [{\citenamefont {Tóth}\ and\ \citenamefont
  {Gühne}(2009{\natexlab{a}})}]{toth2009entanglement}%
  \BibitemOpen
  \bibfield  {author} {\bibinfo {author} {\bibfnamefont {G.}~\bibnamefont
  {Tóth}}\ and\ \bibinfo {author} {\bibfnamefont {O.}~\bibnamefont {Gühne}},\
  }\href {\doibase 10.1103/physrevlett.102.170503} {\bibfield  {journal}
  {\bibinfo  {journal} {Phys. Rev. Lett.}\ }\textbf {\bibinfo {volume} {102}},\
  \bibinfo {pages} {170503} (\bibinfo {year} {2009}{\natexlab{a}})}\BibitemShut
  {NoStop}%
\bibitem [{\citenamefont {Tóth}\ and\ \citenamefont
  {Gühne}(2009{\natexlab{b}})}]{toth2010separability}%
  \BibitemOpen
  \bibfield  {author} {\bibinfo {author} {\bibfnamefont {G.}~\bibnamefont
  {Tóth}}\ and\ \bibinfo {author} {\bibfnamefont {O.}~\bibnamefont {Gühne}},\
  }\href {\doibase 10.1007/s00340-009-3839-7} {\bibfield  {journal} {\bibinfo
  {journal} {Appl. Phys. B}\ }\textbf {\bibinfo {volume} {98}},\ \bibinfo
  {pages} {617} (\bibinfo {year} {2009}{\natexlab{b}})}\BibitemShut {NoStop}%
\bibitem [{\citenamefont {Yu}(2016)}]{yu2016separability}%
  \BibitemOpen
  \bibfield  {author} {\bibinfo {author} {\bibfnamefont {N.}~\bibnamefont
  {Yu}},\ }\href {\doibase 10.1103/physreva.94.060101} {\bibfield  {journal}
  {\bibinfo  {journal} {Phys. Rev. A}\ }\textbf {\bibinfo {volume} {94}},\
  \bibinfo {pages} {060101} (\bibinfo {year} {2016})}\BibitemShut {NoStop}%
\bibitem [{\citenamefont {Lilong}\ and\ \citenamefont
  {Delin}(2018)}]{lilong2018decomposition}%
  \BibitemOpen
  \bibfield  {author} {\bibinfo {author} {\bibfnamefont {Q.}~\bibnamefont
  {Lilong}}\ and\ \bibinfo {author} {\bibfnamefont {C.}~\bibnamefont {Delin}},\
  }\href {https://arxiv.org/abs/1810.03125} {\bibfield  {journal} {\bibinfo
  {journal} {arXiv preprint arXiv:1810.03125}\ } (\bibinfo {year}
  {2018})}\BibitemShut {NoStop}%
\bibitem [{\citenamefont {Gurvits}\ and\ \citenamefont
  {Barnum}(2003)}]{gurvits2003separable}%
  \BibitemOpen
  \bibfield  {author} {\bibinfo {author} {\bibfnamefont {L.}~\bibnamefont
  {Gurvits}}\ and\ \bibinfo {author} {\bibfnamefont {H.}~\bibnamefont
  {Barnum}},\ }\href {\doibase 10.1103/physreva.68.042312} {\bibfield
  {journal} {\bibinfo  {journal} {Phys. Rev. A}\ }\textbf {\bibinfo {volume}
  {68}},\ \bibinfo {pages} {042312} (\bibinfo {year} {2003})}\BibitemShut
  {NoStop}%
\bibitem [{\citenamefont {Chen}\ and\ \citenamefont
  {{\DJ}okovi{\'c}}(2013{\natexlab{b}})}]{Chen_2013Dim}%
  \BibitemOpen
  \bibfield  {author} {\bibinfo {author} {\bibfnamefont {L.}~\bibnamefont
  {Chen}}\ and\ \bibinfo {author} {\bibfnamefont {D.~{\v{Z}}.}\ \bibnamefont
  {{\DJ}okovi{\'c}}},\ }\href {\doibase 10.1063/1.4790405} {\bibfield
  {journal} {\bibinfo  {journal} {J. Math. Phys.}\ }\textbf {\bibinfo {volume}
  {54}},\ \bibinfo {pages} {022201} (\bibinfo {year}
  {2013}{\natexlab{b}})}\BibitemShut {NoStop}%
\bibitem [{\citenamefont {Wang}\ and\ \citenamefont
  {Fei}(2005)}]{wang2005canonical}%
  \BibitemOpen
  \bibfield  {author} {\bibinfo {author} {\bibfnamefont {X.-H.}\ \bibnamefont
  {Wang}}\ and\ \bibinfo {author} {\bibfnamefont {S.-M.}\ \bibnamefont {Fei}},\
  }\href {\doibase 10.1142/s0219749905000669} {\bibfield  {journal} {\bibinfo
  {journal} {Int. J. Quantum Inform.}\ }\textbf {\bibinfo {volume} {03}},\
  \bibinfo {pages} {147} (\bibinfo {year} {2005})}\BibitemShut {NoStop}%
\bibitem [{\citenamefont {Chen}\ and\ \citenamefont
  {{\DJ}okovi{\'c}}(2013{\natexlab{c}})}]{Chen_2013properties}%
  \BibitemOpen
  \bibfield  {author} {\bibinfo {author} {\bibfnamefont {L.}~\bibnamefont
  {Chen}}\ and\ \bibinfo {author} {\bibfnamefont {D.~{\v{Z}}.}\ \bibnamefont
  {{\DJ}okovi{\'c}}},\ }\href {\doibase 10.1007/s00220-013-1770-6} {\bibfield
  {journal} {\bibinfo  {journal} {Commun. Math. Phys.}\ }\textbf {\bibinfo
  {volume} {323}},\ \bibinfo {pages} {241} (\bibinfo {year}
  {2013}{\natexlab{c}})}\BibitemShut {NoStop}%
\bibitem [{\citenamefont {Chen}\ and\ \citenamefont
  {{\DJ}okovi{\'c}}(2011{\natexlab{b}})}]{chen2011description}%
  \BibitemOpen
  \bibfield  {author} {\bibinfo {author} {\bibfnamefont {L.}~\bibnamefont
  {Chen}}\ and\ \bibinfo {author} {\bibfnamefont {D.~{\v{Z}}.}\ \bibnamefont
  {{\DJ}okovi{\'c}}},\ }\href {\doibase 10.1063/1.3663837} {\bibfield
  {journal} {\bibinfo  {journal} {J. Math. Phys.}\ }\textbf {\bibinfo {volume}
  {52}},\ \bibinfo {pages} {122203} (\bibinfo {year}
  {2011}{\natexlab{b}})}\BibitemShut {NoStop}%
\bibitem [{\citenamefont {Zhu}\ \emph {et~al.}(2010)\citenamefont {Zhu},
  \citenamefont {Chen},\ and\ \citenamefont {Hayashi}}]{zhu2010additivity}%
  \BibitemOpen
  \bibfield  {author} {\bibinfo {author} {\bibfnamefont {H.}~\bibnamefont
  {Zhu}}, \bibinfo {author} {\bibfnamefont {L.}~\bibnamefont {Chen}}, \ and\
  \bibinfo {author} {\bibfnamefont {M.}~\bibnamefont {Hayashi}},\ }\href
  {\doibase 10.1088/1367-2630/12/8/083002} {\bibfield  {journal} {\bibinfo
  {journal} {New J. Phys.}\ }\textbf {\bibinfo {volume} {12}},\ \bibinfo
  {pages} {083002} (\bibinfo {year} {2010})}\BibitemShut {NoStop}%
\bibitem [{\citenamefont {Berman}\ and\ \citenamefont
  {Plemmons}(1994)}]{berman1994nonnegative}%
  \BibitemOpen
  \bibfield  {author} {\bibinfo {author} {\bibfnamefont {A.}~\bibnamefont
  {Berman}}\ and\ \bibinfo {author} {\bibfnamefont {R.~J.}\ \bibnamefont
  {Plemmons}},\ }\href {\doibase 10.1137/1.9781611971262} {\emph {\bibinfo
  {title} {Nonnegative Matrices in the Mathematical Sciences}}},\ Vol.~\bibinfo
  {volume} {9}\ (\bibinfo  {publisher} {Society for Industrial and Applied
  Mathematics},\ \bibinfo {year} {1994})\BibitemShut {NoStop}%
\bibitem [{\citenamefont {Chen}\ and\ \citenamefont
  {Friedland}(2018)}]{Chen2018the}%
  \BibitemOpen
  \bibfield  {author} {\bibinfo {author} {\bibfnamefont {L.}~\bibnamefont
  {Chen}}\ and\ \bibinfo {author} {\bibfnamefont {S.}~\bibnamefont
  {Friedland}},\ }\href {\doibase 10.1016/j.laa.2017.12.015} {\bibfield
  {journal} {\bibinfo  {journal} {Linear Algebra and its Applications}\
  }\textbf {\bibinfo {volume} {543}},\ \bibinfo {pages} {1} (\bibinfo {year}
  {2018})}\BibitemShut {NoStop}%
\end{thebibliography}
%\begin{thebibliography}{99}
%\input{Ssep.bbl} 
%\end{thebibliography}
%%%%%%%%%%%%%%%%%%%%%%%%%%%%%%%%%%%%%%%%%%%%%%%%%%%%%%%%%%%%%%%%%%%%%%%%%%%%%%%%%%%%%%%%%%%%%%%%%%%%
%%~~~~~~~~~~~~~~~~~~~~~~~~~~~~~~~~~~~~~~~~~~~~~end of reference ~~~~~~~~~~~~~~~~~~~~~~~~~~~~~~~~%%%%

\appendix
\section{Real Product vectors  in $\range{\sigma}$ of ~\eqref{eq:sigma}}%
\label{append}
Let $\sigma$ is the CS state in \cref{eq:sigma}. 
Suppose the subspace $P$ is the range of $\sigma$, then it is spanned by the following vectors:
\begin{equation}
  \ket{\phi_i} = \ket{x_i,x_i},\text { for } i=0,1,\ldots, 6
\end{equation}
and
\begin{equation}
  \begin{split}
    \ket{x_i}& = \ket{i},i=0,1,2,3\\
    \ket{x_4}&= \ket{0}+\ket{1}+\ket{2}+\ket{3},\\
    \ket{x_5}&= \ket{0}+2\ket{1}+3\ket{2}+4\ket{3},\\
    \ket{x_6}&= \ket{0}-2\ket{1}+3\ket{2}-4\ket{3},\\
  \end{split}
\end{equation}
In this appendix, we want to find that $P$ contains exactly 8 real  product vectors. The extra real product vector
is$
\ket{\phi_7} = \ket{x_7,x_7}$, where
\begin{equation}
  \label{phi7}
  \ket{x_7} = \ket{0} - \frac{8}{3}\ket{1}+\ket{2}-\frac{8}{3}\ket{3}.
\end{equation}
\begin{proof}
  Suppose
  \begin{equation}
    \label{multiSsepv5:eq:7}
    v =
    \begin{bmatrix}
      a_0\\
      a_1\\a_2\\a_3
    \end{bmatrix}.
  \end{equation}
  If $\ket{w}=\ket{v,v}$ is a linear combination of $\ket{\phi_i},i=0,1,\ldots,6$, then the matrix 
  \begin{equation}
    \label{multiSsepv5:eq:8}
    \begin{bmatrix}
      \phi_0&\phi_1&\cdots &w
    \end{bmatrix}
  \end{equation}
  must be of rank $7$.
  Remove the repeated rows, we have,
  \begin{equation}
    \label{multiSsepv5:eq:9}
    \begin{bsmallmatrix}
      1&0&0&0&  1 & 1 & 1 &a_0^2\\
      0&1&0&0&  1 & 4 & 4 &a_1^2\\
      0&0&1&0&  1 & 9 & 9 & a_2^2\\
      0&0&0&1&  1 & 16 & 16 &a_3^2\\
      0&0&0&0&  1 &2 & -2 &a_0a_1\\
      0&0&0&0&  1 &3 & 3 & a_0a_2\\
      0&0&0&0&  1 &4 & -4 & a_0a_3\\
      0&0&0&0&  1 &6 & -6 & a_1a_{2}\\
      0&0&0&0&  1 &8 & 8 & a_1a_3\\
      0&0&0&0&  1 & 12 & -12 & a_2a_3\\
    \end{bsmallmatrix}
  \end{equation}
  is of rank $7$. It follows the submatrix
  \begin{equation}
    \label{multiSsepv5:eq:10}
    \begin{bsmallmatrix}
      1 &2 & -2 &a_0a_1\\
      1 &3 & 3 & a_0a_2\\
        1 &4 & -4 & a_0a_3\\
        1 &6 & -6 & a_1a_{2}\\
        1 &8 & 8 & a_1a_3\\
       1 & 12 & -12 & a_2a_3\\
    \end{bsmallmatrix}
  \end{equation}
  is of rank $3$.
  Which is equivalent to
  \begin{equation}
    \label{multiSsepv5:eq:11}
    \begin{bmatrix}
      1 & 5 & a_0a_2-a_0a_{1}\\
      2 & -2 & a_0a_3-a_{0}a_{1}\\
       4 & -4 & a_1a_{2}-a_{0}a_{1}\\
        6 & 10 & a_1a_3-a_{0}a_{1}\\
       10 & -10  & a_2a_3-a_{0}a_{1}\\
    \end{bmatrix}
  \end{equation}
  is of rank $2$. Furthermore,
  \begin{equation}
    \label{multiSsepv5:eq:12}
    \begin{bmatrix}
      -12 & a_0a_3+a_0a_1-2a_0a_2\\
      -24 & a_1a_2 + 3a_0a_1 - 4 a_0a_2\\
      -20 & a_1a_3 + 5a_0a_1 - 6 a_0a_2\\
      -60 & a_2a_3 + 9a_0a_1 - 10 a_0a_2\\
    \end{bmatrix}
  \end{equation}
  is of rank $1$. Hence
  \begin{equation}
    \label{multiSsepv5:eq:13}
    \begin{split}
      \frac{a_0a_3+a_0a_1-2a_0a_2}{3} & = \frac{a_1a_2 + 3a_0a_1 - 4 a_0a_2}{6}\\
      & = \frac{a_1a_3 + 5a_0a_1 - 6 a_0a_2}{5}\\
      & = \frac{a_2a_3 + 9a_0a_1 - 10 a_0a_2}{15}.
    \end{split}
  \end{equation}
  If $a_0\neq 0$, we can assume that $a_0=1$. Then \cref{multiSsepv5:eq:13} implies that
  \begin{equation}
    \label{multiSsepv5:eq:14}
    \begin{split}
      2a_3& = a_1a_2 + a_1,\\
      5a_3& = a_2a_3 + 4 a_1,\\
      3a_1a_3 & = a_2a_3 - 6a_1 + 8a_2.\\
    \end{split}
  \end{equation}
  If $a_1=0$, then $a_2=a_3=0$, where $\ket{w} = \ket{\phi_0}$.
  Solving \cref{multiSsepv5:eq:14}, we have
  \begin{equation}
    \label{multiSsepv5:eq:15}
    a_2 = 3 \text{ or } a _{ 2} =1.
  \end{equation}
  If $a_2=3$, by \cref{multiSsepv5:eq:14}, $a_1=\pm 2,a_3 = \pm 4$. Then $\ket{w} = \ket{\phi_5}$ or
  $\ket{\phi_6}$.

  If $a_2=1$, by \cref{multiSsepv5:eq:14}, we have $a_1=1$ or $-\frac{8}{3}$. If $a_1=1$, we have $a_3=1$, where
  $\ket{w}=\ket{\phi_{4}}$. If $a_1= - \frac{8}{3}$, then $a_3= a_1$, which coincides with \cref{phi7}.

  If $a_0=0,a_1\neq 0$, we can assume that $a_1=1$, by \cref{multiSsepv5:eq:13}, we have
  \begin{equation}
    \label{multiSsepv5:eq:16}
    0 = \frac{a_2}{6}= \frac{a_3}{5}.
  \end{equation}
  It follows that $\ket{w} = \ket{\phi_1}$.
  It remains to consider the case $a_0=a_1=0$, we assume $a_2=1$, otherwise, $\ket{\phi_{7}}=\ket{\phi_3}$.
  By \cref{multiSsepv5:eq:13}, we have $a_3=0$, i.e., $\ket{w}=\ket{\phi_2}$.

  Above all, the subspace $P$ contains exactly $8$ real product vectors, that is
  $\ket{\phi_0},\ket{\phi_1},\ldots,\ket{\phi_7}$.
\end{proof}
\end{document}